\newcommand{\OPT}{\mathcal{OPT}}
\let\citep\cite
\newtheorem{theorem}{Theorem}
\newtheorem{lemma}[theorem]{Lemma}
\newtheorem{fact}[theorem]{Fact}
\newtheorem{definition}[theorem]{Definition}
\newenvironment{sproof}%
{%
 \par\noindent{\it Proof sketch. \ }%
}%
{\qed}
\title{Clustering under Local Stability: Bridging the Gap between Worst-Case and Beyond Worst-Case Analysis
\footnote{Authors' addresses: 
\texttt{ninamf@cs.cmu.edu, crwhite@cs.cmu.edu}. 
This work was supported in part by grants NSF-CCF 1535967, NSF CCF-1422910, NSF IIS-1618714, a Sloan Fellowship, a Microsoft Research Fellowship, and a National Defense Science and Engineering Graduate (NDSEG) fellowship.}
}
\author{Maria-Florina Balcan \and
Colin White}
\date{}
\begin{document}

\maketitle

\begin{abstract}

Recently, there has been substantial interest in clustering research that takes
a beyond worst-case approach to the analysis of algorithms.
The typical idea is to design a clustering algorithm that outputs a near-optimal solution,
provided the data satisfy a natural stability notion.
For example, Bilu and Linial (2010) and Awasthi et al.\ (2012)
presented algorithms that output near-optimal solutions, assuming the optimal solution is preserved under small perturbations to the input distances.
A drawback to this approach is that the algorithms are often explicitly built according to the stability assumption
and give no guarantees in the worst case;
indeed, several recent algorithms output arbitrarily bad solutions even when just a small section of
the data does not satisfy the given stability notion.

In this work, we address this concern in two ways.
First, we provide algorithms that inherit the worst-case guarantees of 
clustering approximation algorithms, while simultaneously guaranteeing near-optimal solutions
when the data is stable.
Our algorithms are natural modifications to existing state-of-the-art approximation algorithms.
Second, we initiate the study of local stability, which is a property of a single optimal cluster rather than
an entire optimal solution.
We show our algorithms output all optimal clusters which satisfy stability locally.
Specifically, we achieve strong positive results in our local framework under recent stability notions including
metric perturbation resilience (Angelidakis et al.\ 2017) and
robust perturbation resilience (Balcan and Liang 2012) for the 
$k$-median, $k$-means, and symmetric/asymmetric $k$-center objectives.

\end{abstract}


\allowdisplaybreaks
\setcounter{page}{0}
\thispagestyle{empty}
\newpage

\section{Introduction} \label{sec:intro}

Clustering is a fundamental problem in combinatorial optimization with numerous real-life applications
in areas from bioinformatics to computer vision to text analysis and so on.
The underlying goal is to group a given set of points to maximize similarity inside a group and
minimize similarity among groups.
A common approach to clustering is to set up an objective function and then approximately find the
optimal solution according to the objective. 
Given a set of points $S$ and a distance metric $d$, common clustering objectives include finding
$k$ centers to minimize the sum of the distance, or squared distance, from each point to its closest center 
($k$-median and $k$-means, respectively),
or to minimize the maximum distance from a point to its closest center ($k$-center).
These popular objective functions are provably NP-hard to optimize~\cite{gonzalez1985clustering,jain2002new,lee2017improved},
so research has focused on finding approximation algorithms. 
This has attracted significant attention in the theoretical computer science community
\cite{arya2004local,byrka2015improved,charikar1999constant,outliers,chen2008constant,gonzalez1985clustering,makarychev2016bi}.

Traditionally, the theory of clustering (and more generally, the theory of algorithms) 
has focused on the analysis of worst-case instances.
While this approach has led to many elegant algorithms and lower bounds, 
it is often overly pessimistic of an algorithm's performance on ``typical'' instances or real world instances.
A rapidly developing line of work in the algorithms community, 
the so-called \emph{beyond worst-case analysis} of  algorithms (BWCA),
considers designing algorithms for instances that satisfy some natural structural properties. 
BWCA has given rise to many positive results~\cite{hardt2013beyond,kumar2004simple,tim},
especially for clustering problems~\cite{awasthi2010stability,awasthi2012center,as,kumar2010clustering}.
For example, the popular notion of $\alpha$-perturbation resilience, introduced by Bilu and Linial~\cite{bilu2012stable},
informally states that the optimal solution does not change when
the input distances are allowed to increase by up to a factor of $\alpha$.
This definition seeks to capture a phenomenon in practice: the optimal solution is often 
significantly better than all other solutions,
thereby the optimal solution does not change even when the input is slightly perturbed.

However, there are two potential downsides to this approach.
The first downside is that many of these algorithms aggressively exploit the given structural assumptions,
which can lead to contrived algorithms with no guarantees in the worst case.
Therefore, a user can only use
algorithms from this line of work if she is certain her data satisfies the assumption
(even though none of these assumptions are computationally efficient to verify).
The second downside is that 
while the algorithms return the optimal solution when the input is stable, 
there are no partial guarantees when most, but not all, of the data is stable.
For example, the algorithms of Balcan and Liang~\cite{balcan2012clustering} and Angelidakis et al.~\cite{angelidakis2017algorithms}
return the optimal clustering when the instance is resilient to perturbations, however,
both algorithms use a dynamic programming subroutine that is susceptible to errors which can propagate when a small fraction
of the data does not satisfy perturbation resilience (see Appendix~\ref{app:algos} for more details).
From these setbacks, two natural questions arise. 
\emph{(1)} Can we find \emph{natural} algorithms that achieve the worst-case approximation ratio, while outputting the
optimal solution if the data is stable,
\footnote{ A trivial solution is to run an approximation algorithm and a BWCA algorithm in parallel, and output the better of
the two solutions. We seek a more satisfactory and ``natural'' answer to this question.
} and
\emph{(2)} Can we construct robust algorithms that still output good results even when only part of the data satisfies stability?
The current work seeks to answer both of these questions. 

\subsection{Our results and techniques}
In this work, we answer both questions affirmatively for a variety of clustering objectives under
perturbation resilience.
We present algorithms that simultaneously achieve state-of-the-art approximation ratios in the worst case,
while outputting the optimal solution when the data is stable. All of our algorithms are natural
modifications to existing approximation algorithms. 
To answer question \emph{(2)}, 
we define the notion of \emph{local perturbation resilience} in Section~\ref{sec:prelim}.
This is the first definition that applies to an individual cluster, rather than the dataset as a whole.
Informally, an optimal cluster satisfies $\alpha$-local perturbation resilience if it remains in the optimal solution
under any $\alpha$ perturbation to the input.
We show that every optimal cluster satisfying local stability will be returned by our algorithms.
Therefore, given an instance with a mix of stable and non-stable data, our algorithms will return the optimal clusters
over the stable data, and a worst-case approximation guarantee over the rest of the data.
Specifically, we prove the following results. 

\paragraph{Approximation algorithms under local perturbation resilience}
In Section~\ref{sec:approx}, 
we introduce a condition that is sufficient for an $\alpha$-approximation algorithm to return the optimal $k$-median
or $k$-means clusters satisfying $\alpha$-local perturbation resilience.
Intuitively, our condition requires that the approximation guarantee must be true locally as well as globally.
We show the popular local search algorithm satisfies the property for a sufficiently large search parameter.
For $k$-center, we show that \emph{any} $\alpha$-approximation algorithm for $k$-center will always return the
clusters satisfying $\alpha$-local metric perturbation resilience, which is a slightly weaker notion of perturbation resilience.

\paragraph{Asymmetric $k$-center}
The asymmetric $k$-center problem admits an $O(\log^* n)$ approximation algorithm due to Vishnwanathan~\cite{vishwanathan},
which is tight~\cite{chuzhoy2005asymmetric}. In Section~\ref{sec:akc},
we show a simple modification to this algorithm ensures that it returns all optimal clusters satisfying a condition slightly
stronger than 2-local perturbation resilience (all neighboring clusters must satisfy 2-local perturbation resilience).
The first phase of Vishwanathan's approximation algorithm involves iteratively removing the neighborhood of special points called
\emph{center-capturing vertices (CCVs)}. We show the centers of locally perturbation resilient clusters are CCVs and satisfy a separation condition,
by constructing 2-perturbations in which neighboring clusters cannot be too close to the 
locally perturbation resilient centers without causing a contradiction.
This allows us to modify the approximation algorithm by first removing the neighborhood around CCVs satisfying the separation condition.
With a careful reasoning, we maintain the original approximation guarantee while only removing points from a single local perturbation resilient
cluster at a time.

\paragraph{Robust perturbation resilience}
In Section~\ref{sec:3eps}, we consider $(\alpha,\epsilon)$-local perturbation resilience, 
which states that at most $\epsilon n$ points can swap into or out of the cluster under any $\alpha$-perturbation.
For $k$-center, we show that any 2-approximation algorithm will return
the optimal $(3,\epsilon)$-locally perturbation resilient clusters, assuming a mild lower bound on optimal cluster sizes.
To prove this, we show that if points from two different locally perturbation resilient clusters are close to each other, 
then $k-1$ centers achieve the optimal value under a carefully constructed 3-perturbation. 
The rest of the analysis involves building up conditional claims
dictating the possible centers for each locally perturbation resilient cluster under the 3-perturbation.
We utilize the idea of a \emph{cluster-capturing center}~\cite{symmetric} along with machinery specific to handling local perturbation
resilience to show that a locally perturbation resilient cluster must split into two clusters under the 3-perturbation, causing a contradiction.
Finally, we show that the mild lower bound on the cluster sizes is necessary. Specifically, we show
hardness of approximation for $k$-median, $k$-means, and $k$-center, 
even when it is guaranteed the clustering satisfies $(\alpha,\epsilon)$-perturbation resilience for any $\alpha\geq 1$ and $\epsilon>0$.
In fact, the result holds even for a stronger notion called $(\alpha,\epsilon)$-approximation stability.
The hardness is based on a reduction from the general clustering
instances, so the APX-hardness constants match the worst-case APX-hardness results
of 2, 1.73, and 1.0013 for $k$-center~\cite{gonzalez1985clustering}, $k$-median~\cite{jain2002new}, 
and $k$-means~\cite{lee2017improved}, respectively.
This generalizes prior hardness results in BWCA~\cite{as,symmetric}.

\subsection{Related work} \label{sec:related_work}

\noindent {\bf Clustering}~~
The first constant-factor approximation algorithm for $k$-median was given by
Charikar et al.~\cite{charikar1999constant},
and the current best approximation ratio is 2.675 by Byrka et al.~\cite{byrka2015improved}.
Jain et al.\ proved $k$-median is NP-hard to approximate to a factor better than 1.73~\cite{jain2002new}.
For $k$-center, Gonzalez showed a tight 2-approximation algorithm \cite{gonzalez1985clustering}.
For $k$-means, the best approximation ratio was recently lowered to $6.357$ 
by Ahmadian et al.~\cite{ahmadian2016better}.
$k$-means was shown to be APX-hard by Awasthi et al.~\cite{awasthi2015hardness}, and the constant was recently improved to 1.0013~\cite{lee2017improved}.

\noindent {\bf Perturbation resilience}~~
Perturbation resilience was introduced by Bilu and Linial, who showed algorithms that outputted
the optimal solution for max cut under $\Omega(\sqrt{n})$-perturbation resilience~\cite{bilu2012stable}.
This result was improved by Markarychev et al.~\cite{makarychev2014bilu}, who
showed the standard SDP relaxation is integral for $\Omega(\sqrt{\log n}\log{\log{n}})$-perturbation resilient
instances.
They also show an optimal algorithm for minimum multiway cut under 4-perturbation resilience.
The study of clustering under perturbation resilience was initiated by Awasthi et al.~\cite{awasthi2012center},
who provided an optimal algorithm for center-based clustering objectives (which includes $k$-median, $k$-means, and $k$-center clustering,
as well as other objectives) under 3-perturbation resilience.
This result was improved by Balcan and Liang~\cite{balcan2012clustering}, who showed an algorithm for center-based clustering under $(1+\sqrt{2})$-perturbation resilience. 
They also gave a near-optimal algorithm for $k$-median $(2+\sqrt{3},\epsilon)$-perturbation resilience, a robust version of perturbation resilience, when the optimal clusters are not too small.
Balcan et al.~\cite{symmetric} constructed algorithms for $k$-center and asymmetric $k$-center under 2-perturbation resilience and $(3,\epsilon)$-perturbation resilience, and they showed no polynomial-time
algorithm can solve $k$-center under $(2-\epsilon)$-approximation stability (a notion that is stronger than perturbation resilience) unless $NP=RP$.
Recently, Angelidakis et al.~\cite{angelidakis2017algorithms},
gave algorithms for center-based clustering under 2-perturbation resilience
and minimum multiway cut with $k$ terminals 
under $(2-2/k)$-perturbation resilience.
They also define the more general notion of metric perturbation resilience.
In Appendix~\ref{app:algos}, we discuss prior work in the context of local
perturbation resilience.
Perturbation resilience has also been applied to other problems, such as the traveling salesman problem, and finding Nash 
equilibria~\cite{as,mihalak2011complexity}.

\paragraph{Approximation stability}
Approximation stability is a related definition that is stronger than
perturbation resilience.
It was introduced by Balcan et al.~\cite{as},
who showed algorithms that outputted nearly optimal solutions under $(\alpha,\epsilon)$-approximation stability
for $k$-median and $k$-means when $\alpha>1$.
Balcan et al.~\cite{balcan2009agnostic} studied a relaxed notion of approximation stability in which a specified $\nu$ fraction
of the data satisfies approximation stability. In this setting, there may not be a unique approximation stable
solution. The authors provided an algorithm which outputted a small list of clusterings, such that all
approximation stable clusterings are close to one clustering in the list.
We remark the property itself is similar in spirit to local stability, although the solution/results are much different.
Voevodski et al.~\cite{voevodski2011min} gave an algorithm for empirically clustering protein sequences
using the min-sum objective under approximation stability, 
which compares favorably to popular clustering algorithms used in practice.
Gupta et al.~\cite{gupta2014decompositions} showed algorithm for finding near-optimal solutions for $k$-median under approximation stability in the context of finding triangle-dense graphs.

\paragraph{Other stability notions}
Ostrovsky et al.\ show how to efficiently cluster instances
in which the $k$-means clustering cost is much lower than the $(k-1)$-means cost~\cite{ostrovsky2012effectiveness}.
Kumar and Kannan give an efficient clustering algorithm for instances in which
the projection of any point onto the line between its cluster center to any other cluster center
is a large additive factor closer to its own center than the other center~\cite{kumar2010clustering}.
This result was later improved along multiple axes by Awasthi and Sheffet~\cite{Awasthi2012Improved}.
There are many other works that show positive results for different natural notions of stability in various 
settings~\cite{arora2012,awasthi2010stability,gupta2014decompositions,hardt2013beyond,kumar2010clustering,kumar2004simple,tim}.

\section{Preliminaries}\label{sec:prelim}

A clustering instance consists of a set $S$ of $n$ points, as well as a distance function
$d:S\times S\rightarrow\mathbb{R}_{\geq 0}$.
For a point $u\in S$ and a set $A\subseteq S$, we define $d(u,A)=\min_{v\in A}d(u,v)$.
The $k$-median, $k$-means, and $k$-center objectives are to find a set of points
$X= \{x_1, \dots, x_k\}\subseteq S$ called \emph{centers}
to minimize $\sum_{v\in S} d(v,X)$, $\sum_{v\in S} d(v,X)^2$,  and $\max_{v\in S}d(v,X)$, respectively.
We denote by $\text{Vor}_{X}(x)$ the Voronoi tile of $x$ induced by $X$ on the set of points $S$,
and we denote $\text{Vor}_X(X')=\bigcup_{x\in X'}\text{Vor}_X(x)$ for a subset $X'\subseteq X$.
We refer to the Voronoi partition induced by $X$ as a clustering.
Throughout the paper, we denote the clustering  with minimum cost by $\OPT=\{C_1, \dots, C_k \}$, 
and we denote the optimal centers by $c_1,\dots,c_k$, where $c_i$ is the center of $C_i$ for all $1\leq i\leq k$.

All of the distance functions we study are metrics, except for Section~\ref{sec:akc}, in which we study an \emph{asymmetric} distance function.
An asymmetric distance function satisfies all the properties of a metric space except for symmetry.
In particular, an asymmetric distance function must satisfy the \emph{directed} triangle inequality: for all $u,v,w\in S$,
$d(u,w)\leq d(u,v)+d(v,w)$.

We formally define \emph{perturbation resilience}, a notion introduced by Bilu and Linial~\cite{bilu2012stable}.
$d'$ is called an $\alpha$-perturbation of the distance function $d$, if for all $u,v\in S$, $d(u,v)\leq d'(u,v) \leq \alpha d(u,v)$.
(We only consider perturbations in which the distances increase because WLOG we can scale
the distances to simulate decreasing distances.)

\begin{definition} \label{def:pr}
A clustering instance $(S,d)$ satisfies \emph{$\alpha$-perturbation resilience} ($\alpha$-PR) if for any $\alpha$-perturbation $d'$ of $d$,
the optimal clustering under $d'$ is unique and  equal to $\OPT$.
\end{definition}

Now we define \emph{local perturbation resilience}, a property of an optimal cluster rather than a dataset.

\begin{definition}\label{def:lpr}
Given a clustering instance $(S,d)$ with optimal clustering $\mathcal{C}=\{C_1,\dots,C_k\}$, 
an optimal cluster $C_i$ satisfies \emph{$\alpha$-local perturbation resilience} ($\alpha$-LPR) if for any $\alpha$-perturbation $d'$ of $d$,
the optimal clustering $\mathcal{C}'$ under $d'$ contains $C_i$.
\end{definition}

We will sometimes refer to a center $c_i$ of an $\alpha$-LPR cluster $C_i$ as an $\alpha$-LPR center.
Clearly, if a clustering instance is perturbation resilient, then every optimal cluster satisfies local perturbation resilience.
Now we will show the converse is also true.

\begin{fact} \label{fact:local-iff}
A clustering instance $(S,d)$ satisfies $\alpha$-PR if and only if
each optimal cluster satisfies $\alpha$-LPR.
\end{fact}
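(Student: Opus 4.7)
The plan is to unpack both implications directly from the definitions; no nontrivial machinery is needed. Fact~\ref{fact:local-iff} is essentially a bookkeeping check that global $\alpha$-PR decomposes into the conjunction of the per-cluster $\alpha$-LPR conditions, so the proof should make this explicit in each direction.

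For the forward direction, assume $(S,d)$ is $\alpha$-PR, fix any optimal cluster $C_i$, and let $d'$ be an arbitrary $\alpha$-perturbation of $d$. By Definition~\ref{def:pr}, the unique optimal clustering under $d'$ is $\OPT = \{C_1,\dots,C_k\}$, which trivially contains $C_i$. Since $d'$ was arbitrary, $C_i$ satisfies $\alpha$-LPR; as $i$ was arbitrary, every optimal cluster satisfies $\alpha$-LPR.

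For the reverse direction, assume every $C_i$ satisfies $\alpha$-LPR and let $d'$ be any $\alpha$-perturbation of $d$. For each $i \in \{1,\dots,k\}$, Definition~\ref{def:lpr} gives that the optimal clustering $\mathcal{C}'$ under $d'$ contains $C_i$. Hence $\mathcal{C}'$ contains all of $C_1,\dots,C_k$. Because these $k$ sets already partition $S$ and $\mathcal{C}'$ is itself a partition of $S$, there is no room for additional parts, so $\mathcal{C}' = \{C_1,\dots,C_k\} = \OPT$. This shows that the optimal clustering under $d'$ is $\OPT$ for every $\alpha$-perturbation, giving $\alpha$-PR.

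The only subtle point — and what I would flag as the ``hard part'' even though it is minor — is the uniqueness clause of Definition~\ref{def:pr}. Definition~\ref{def:lpr} as written refers to ``the'' optimal clustering $\mathcal{C}'$, so uniqueness is built in and the argument above suffices. If one instead read $\alpha$-LPR in the weaker ``there exists an optimal clustering containing $C_i$'' form, one would need to observe additionally that any optimal clustering under $d'$ must simultaneously contain every $C_i$ (apply the existence statement to any fixed optimum), and the partition argument then forces that optimum to be $\OPT$, recovering uniqueness. Either way, the conclusion follows without further ingredients.
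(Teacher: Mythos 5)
Your proof is correct and takes essentially the same approach as the paper: both directions unwind the definitions directly, and the reverse direction uses the observation that once $\mathcal{C}'$ contains every $C_i$, the partition structure forces $\mathcal{C}'=\OPT$. Your extra remark on the uniqueness clause is a reasonable clarification but not a departure from the paper's argument.
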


\begin{proof}
Given a clustering instance $(S,d)$, the forward direction follows by definition:
assume $(S,d)$ satisfies $\alpha$-PR, and given an optimal cluster $C_i$, then for each $\alpha$-perturbation $d'$, the optimal clustering
stays the same under $d'$, therefore $C_i$ is contained in the optimal clustering under $d'$.
Now we prove the reverse direction. Given a clustering instance with optimal clustering $\mathcal{C}$, and given an $\alpha$-perturbation
$d'$, let the optimal clusetring under $d'$ be $\mathcal{C}'$. For each $C_i\in\mathcal{C}$, by assumption, $C_i$ satisfies $\alpha$-LPR,
so $C_i\in\mathcal{C}'$. Therefore $\mathcal{C}=\mathcal{C}'$.
\end{proof}

In Section~\ref{sec:akc}, we define a stronger version of Definition~\ref{def:lpr} specifically for $k$-center.
Next, we define a more robust version of $\alpha$-PR and $\alpha$-LPR that allows a small change in the optimal clusters when the distances are perturbed.
We say that two clusters $A$ and $B$ are $\epsilon$-close if they differ by only $\epsilon n$ points, i.e., $|A\setminus B|+|B\setminus A|\leq\epsilon n$.
We say that two clusterings $\mathcal{C}$ and $\mathcal{C}'$ are $\epsilon$-close if
$\min_\sigma\sum_{i=1}^k |C_i\setminus C_{\sigma(i)}'|\leq\epsilon n$.

\begin{definition}~\cite{balcan2012clustering} \label{def:alpha-epsilon}
A clustering instance $(S,d)$ satisfies \emph{$(\alpha,\epsilon)$-perturbation resilience $((\alpha,\epsilon)$-PR)} if for any $\alpha$- perturbation $d'$ of $d$,
all optimal clusterings under $d'$ must be $\epsilon$-close to $\OPT$.
\end{definition}

\begin{definition}
Given a clustering instance $(S,d)$ with optimal clustering $\mathcal{C}=\{C_1,\dots,C_k\}$, 
an optimal cluster $C_i$ satisfies \emph{$(\alpha,\epsilon)$-local perturbation resilience} $((\alpha,\epsilon)$-LPR$)$ if for any $\alpha$-perturbation $d'$ of $d$,
the optimal clustering $\mathcal{C}'$ under $d'$ contains a cluster $C_i'$ which is $\epsilon$-close to $C_i$.
\end{definition}

We prove a statement similar to Fact~\ref{fact:local-iff}, for $(\alpha,\epsilon)$-PR, but the $\epsilon$ error adds up among the clusters.
See Appendix~\ref{app:prelim} for the proof.

\begin{lemma} \label{lem:eps-iff}
A clustering instance $(S,d)$ satisfies $(\alpha,\epsilon)$-PR if and only if
each optimal cluster $C_i$ satisfies $(\alpha,\epsilon_i)$-LPR and $\sum_i\epsilon_i\leq2\epsilon n$.
\end{lemma}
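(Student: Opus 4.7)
The plan is to reduce both directions of the iff to a single cardinality identity. For any bijection $\sigma$ matching the optimal clustering $\OPT$ with a clustering $\mathcal{C}'$ on the same $n$-point set,
\[
\sum_{i=1}^{k} |C_i \triangle C_{\sigma(i)}'| \;=\; 2\sum_{i=1}^{k} |C_i \setminus C_{\sigma(i)}'|,
\]
since both sides equal $2n - 2\sum_i |C_i \cap C_{\sigma(i)}'|$. Thus the clustering-level condition $\min_\sigma \sum_i |C_i \setminus C_{\sigma(i)}'| \le \epsilon n$ used in Definition~\ref{def:alpha-epsilon} is equivalent to the existence of a bijection $\sigma$ with $\sum_i |C_i \triangle C_{\sigma(i)}'| \le 2\epsilon n$, and this already explains the factor of $2$ appearing in the lemma.

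For the forward direction, I fix an arbitrary $\alpha$-perturbation $d'$ with optimal clustering $\mathcal{C}'$. By $(\alpha,\epsilon)$-PR there is a bijection $\sigma$ with $\sum_i |C_i \setminus C_{\sigma(i)}'| \le \epsilon n$, so by the identity $\sum_i |C_i \triangle C_{\sigma(i)}'| \le 2\epsilon n$. Defining $\epsilon_i$ so that $\epsilon_i n$ is the per-cluster symmetric difference $|C_i \triangle C_{\sigma(i)}'|$, each $C_i$ satisfies $(\alpha,\epsilon_i)$-LPR with witness $C_{\sigma(i)}'$, and the budget $\sum \epsilon_i n \le 2\epsilon n$ is immediate.

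For the reverse direction, assume $\sum_i \epsilon_i n \le 2\epsilon n$ and each $C_i$ is $(\alpha,\epsilon_i)$-LPR. Fix any $\alpha$-perturbation $d'$ with optimal $\mathcal{C}'$, and let $\pi(i)$ be a witness index so that $|C_i \triangle C_{\pi(i)}'| \le \epsilon_i n$. If $\pi$ happens to be a bijection, summing gives $\sum |C_i \triangle C_{\pi(i)}'| \le 2\epsilon n$, and the identity yields $(\alpha,\epsilon)$-PR. If $\pi$ is not a bijection, two distinct indices $i \ne i'$ share a target $\pi(i) = \pi(i') = j$; since the optimal clusters are pairwise disjoint, the triangle inequality for symmetric differences gives $|C_i| + |C_{i'}| = |C_i \triangle C_{i'}| \le (\epsilon_i + \epsilon_{i'})n$, so both colliding clusters are small. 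I then repair $\pi$ into a bijection $\sigma$ by reassigning one colliding cluster to an unmatched $C_l' \in \mathcal{C}'$, charging the extra symmetric difference against the slack that the small colliding clusters contribute to the $\sum_i \epsilon_i$ budget.

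The main obstacle is the bijection-repair step in the reverse direction: I have to show that the additional cost of any reassignment is dominated by the already-allocated $\sum_i \epsilon_i n$ budget. I expect this to reduce to an induction on the number of $\pi$-collisions, where at each step the disjointness bound above guarantees that the $C_i$'s being relocated are small enough that matching them to any leftover $C_l'$ keeps the total symmetric difference below $2\epsilon n$. By contrast, the forward direction and the cardinality identity itself are essentially one-line computations once the right matching is chosen.
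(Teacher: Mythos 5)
Your forward direction is essentially the paper's: fix a perturbation, take the best bijection $\sigma$ given by $(\alpha,\epsilon)$-PR, set $\epsilon_i n = |C_i \triangle C_{\sigma(i)}'|$, and use the cardinality identity $\sum_i |C_i\triangle C_{\sigma(i)}'| = 2\sum_i|C_i\setminus C_{\sigma(i)}'|$ (the paper proves the equivalent identity $\sum_i|C_i\setminus C_i'|=\sum_i|C_i'\setminus C_i|$).

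In the reverse direction you have correctly flagged something the paper glosses over: the paper's proof simply writes ``for each $i$, $|C_i\setminus C_i'|+|C_i'\setminus C_i|\leq\epsilon_i$,'' which tacitly assumes the LPR witnesses already form a bijection onto $\{C_1',\dots,C_k'\}$, whereas the LPR definition only supplies \emph{some} witness for each $C_i$ and two clusters could a priori share one. Your observation that colliding clusters $C_i,C_{i'}$ must satisfy $|C_i|+|C_{i'}|\leq(\epsilon_i+\epsilon_{i'})n$ by disjointness and the symmetric-difference triangle inequality is correct and is the right thing to look at.

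However, your proposed repair does not close the gap. Reassigning a small colliding $C_i$ to an unmatched $C_\ell'$ incurs symmetric difference $|C_i\triangle C_\ell'| = |C_i| + |C_\ell'| - 2|C_i\cap C_\ell'|$, and you have no control over $|C_\ell'|$: the leftover cluster of $\mathcal{C}'$ can be large even when every colliding $C_i$ is tiny. Concretely, with $C_1=\{1\}$, $C_2=\{2\}$, $C_3=\{3,\dots,10\}$ and $\mathcal{C}'=\{\{1,2\},\{3,4,5,6\},\{7,8,9,10\}\}$, the minimal LPR witnesses give $\epsilon_1 n=\epsilon_2 n=1$ and $\epsilon_3 n=4$ (so the budget is $2\epsilon n=6$), both $C_1,C_2$ want witness $\{1,2\}$, and any bijection has total symmetric difference $10$. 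So the ``charge against the slack'' step cannot work in general; an induction on collisions won't save it because the cost of a single reassignment already overshoots.

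The correct way to close the gap is not a repair but an injectivity observation: if one additionally knows $|C_i|>\epsilon_i n$ for every $i$ (which holds, for instance, under the cluster-size lower bound $|C_i|>2\epsilon n$ that the paper imposes elsewhere, since $\sum_j\epsilon_j\leq2\epsilon n$ forces each $\epsilon_i\leq 2\epsilon n$), then your own disjointness bound $|C_i|+|C_{i'}|\leq(\epsilon_i+\epsilon_{i'})n$ is immediately contradicted, so the witness map is injective, hence a bijection, and the paper's one-line computation goes through unchanged. Without some such size hypothesis the reverse implication as literally stated is in doubt, and your sketch should either add that hypothesis or prove it is unnecessary—your current write-up does neither.
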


In all definitions thus far, we do not assume that the $\alpha$-perturbations satisfy the triangle inequality.
Angelidakis et al.~\cite{angelidakis2017algorithms} recently studied the weaker definition in which the $\alpha$-perturbations 
must satisfy the triangle inequality, called \emph{metric perturbation resilience}.
All of our definitions can be generalized accordingly, and some of our results hold under this weaker assumption.
To this end, we will sometimes take the \emph{metric completion} $d'$ of a non-metric distance function $d''$, 
by setting the distances in $d'$ as the length of the shortest path on the graph whose edges are the lengths in $d''$.

\section{Approximation algorithms under local perturbation resilience} \label{sec:approx}

In this section, we show that local search for $k$-median will always return the $(3+\epsilon)$-LPR clusters,
and for $k$-means it will return the $(9+\epsilon)$-LPR clusters.
We also show that \emph{any} 2-approximation for $k$-center will return the 2-LPR clusters.

\paragraph{$k$-median}

We start by giving a condition on an approximate $k$-median solution, which is sufficient to show
the solution contains all $\alpha$-LPR clusters.

\begin{lemma} \label{lem:approx}
Given a $k$-median instance $(S,d)$ and a set of $k$ centers $X$, if for all sets $Y$ of size $k$,
\begin{equation*}
\sum_{v\in \text{Vor}_X(X\setminus Y)\cup\text{Vor}_Y(Y\setminus X)} d(v,X)
\leq\sum_{v\in\text{Vor}_Y(Y\setminus X)}\min(d(v,X),\alpha d(v,Y))+ \alpha \sum_{v\in \text{Vor}_X(X\setminus Y)} d(v,Y),
\end{equation*}
then all $\alpha$-LPR clusters $C_i$ are contained in the clustering defined by $X$.
\end{lemma}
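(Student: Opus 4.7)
The approach is to argue by contrapositive: assume that some $\alpha$-LPR cluster $C_i$ is not a cluster of the Voronoi partition induced by $X$, and exhibit a $k$-center set $Y$ for which the stated inequality is violated. The natural choice is $Y = \{c_1, \ldots, c_k\}$, the set of optimal centers under $d$, for which $C_i$ is by definition a cluster of $\text{Vor}_Y$; the two partitions therefore disagree on a nonempty region $R := \text{Vor}_X(X \setminus Y) \cup \text{Vor}_Y(Y \setminus X)$.

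I would construct an $\alpha$-perturbation $d'$ of $d$ tailored to $(X,Y)$, designed so that $X$'s Voronoi partition and cost are preserved while $Y$'s cost splits into sums matching the RHS of the condition. A natural candidate is to set $d'(v,u) = \alpha d(v,u)$ whenever $v \in R$ and $u \neq x(v)$ (where $x(v)$ denotes $v$'s closest center in $X$), and $d'(v,u) = d(v,u)$ otherwise. One checks that $d'$ is a valid $\alpha$-perturbation, that the closest $X$-center of every $v$ under $d'$ is still $x(v)$ (so $X$'s partition and $\text{cost}_{d'}(X) = \text{cost}_d(X)$ are preserved), and that $\text{cost}_{d'}(Y)$ decomposes over $S$ into three clean pieces: $d(v,X)$ on $\bar R$ (where $x(v)=y(v)\in X\cap Y$ so $d(v,X)=d(v,Y)$ and $d'=d$), $\alpha d(v,Y)$ on $\text{Vor}_X(X\setminus Y)$ (where $x(v)\notin Y$ forces every $y\in Y$ to have its distance inflated), and $\min(d(v,X), \alpha d(v,Y))$ on $\text{Vor}_Y(Y\setminus X)\setminus\text{Vor}_X(X\setminus Y)$ (where $x(v)\in X\cap Y\subseteq Y$ remains available at its original distance).

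Invoking $\alpha$-LPR of $C_i$, the optimal clustering under $d'$ must contain $C_i$, so $X$'s partition (which does not) is strictly suboptimal under $d'$. Converting this structural fact into the numerical comparison $\text{cost}_{d'}(X) > \text{cost}_{d'}(Y)$, substituting the explicit decomposition above, and cancelling the $\bar R$ contributions produces a strict inequality $\sum_{v\in R} d(v,X) > \alpha \sum_{v\in\text{Vor}_X(X\setminus Y)} d(v,Y) + \sum_{v\in\text{Vor}_Y(Y\setminus X)\setminus\text{Vor}_X(X\setminus Y)} \min(d(v,X), \alpha d(v,Y))$, which together with matching to the lemma's RHS yields the desired violation for this $Y$.

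The main obstacle is the step from $\alpha$-LPR to the comparison $\text{cost}_{d'}(X) > \text{cost}_{d'}(Y)$ itself: LPR only guarantees that \emph{some} optimum under $d'$ contains $C_i$, not that $Y$ itself realizes that optimum. A related subtlety is that the lemma's RHS effectively double-counts contributions on $\text{Vor}_X(X\setminus Y)\cap\text{Vor}_Y(Y\setminus X)$, whereas a single-perturbation cost decomposition only yields the weaker inequality over the disjoint union. Bridging these gaps likely requires either (i) exploiting that our specific $d'$ inflates every distance from $R$ to a non-$x(v)$ point, making $Y$ the cheapest $C_i$-containing candidate under $d'$, or (ii) combining two $\alpha$-perturbations via LPR and summing the resulting inequalities to recover exactly the double-counted RHS; either route, the remaining algebraic bookkeeping to reach the lemma's stated form is routine.
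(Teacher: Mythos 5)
Your proposal has a genuine gap, and it is precisely the one you flag at the end without resolving: the contrapositive route needs the comparison $\text{cost}_{d'}(X) > \text{cost}_{d'}(Y)$ for your chosen $Y$ (the optimal centers under $d$), but $\alpha$-LPR only tells you that whatever clustering is optimal under $d'$ contains $C_i$, hence only that $X$ is not optimal under $d'$, i.e.\ $\text{cost}_{d'}(X) > \text{cost}_{d'}(Z)$ for \emph{some} center set $Z$ that need not be $Y$. The $d$-optimal centers have no reason to remain competitive under a perturbation engineered to favor $X$, and since your $d'$ is built from the pair $(X,Y)$ through the region $R$, you cannot simply rerun the cost decomposition with $Z$ in place of $Y$; neither remedy (i) nor (ii) is worked out enough to break this circularity. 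A secondary obstruction, which you also notice, is that even granting the cost comparison you would only contradict the tighter inequality (with the $\min$ taken over $\text{Vor}_Y(Y\setminus X)\setminus\text{Vor}_X(X\setminus Y)$), not the lemma's stated hypothesis, whose right-hand side is larger on $\text{Vor}_X(X\setminus Y)\cap\text{Vor}_Y(Y\setminus X)$; so the contrapositive would still not be closed.

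The paper sidesteps both issues by arguing in the forward direction with a single perturbation that does not depend on $Y$: every distance is scaled by $\alpha$ except each point's distance to its nearest center in $X$. The hypothesis, quantified over all $Y$, is then used to show directly that $\text{cost}_{d'}(X)\leq\text{cost}_{d'}(Y)$ for every $k$-set $Y$ (via the same four-region decomposition $A_1,\dots,A_4$ you describe), i.e.\ that $X$ itself is an optimal center set under $d'$. LPR is then applied in the easy direction: the optimal clustering under $d'$ contains $C_i$, the clustering induced by $X$ is optimal under $d'$, and since the perturbation preserves each point's nearest $X$-center, that clustering coincides with the Voronoi partition of $X$ under $d$, so $C_i$ is one of $X$'s clusters. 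No comparison against the $d$-optimal centers, and no identification of the $d'$-optimum, is ever needed. To salvage your argument, keep the perturbation idea but make it $Y$-independent as above and prove the lemma directly rather than by contrapositive.
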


\begin{proof}
Given such a set of centers $X$, we construct an $\alpha$-perturbation $d'$ as follows. 
Increase all distances by a factor of $\alpha$ except for the distances between each point
$v$ and its closest center in $X$.
Now our goal is to show that $X$ is the optimal set of centers under $d'$.

Given any other set $Y$ of $k$ centers, we consider four types of points:
$\text{Vor}_X(X\cap Y)\cap\text{Vor}_Y(X\cap Y)$, 
$\text{Vor}_X(X\cap Y)\setminus\text{Vor}_Y(X\cap Y)$,
$\text{Vor}_Y(X\cap Y)\setminus\text{Vor}_X(X\cap Y)$, and 
$\text{Vor}_X(X\setminus Y)\cap \text{Vor}_Y(Y\setminus X)$,
which we denote by $A_1$, $A_2$, $A_3$, and $A_4$, respectively
(see Figure~\ref{fig:approx}).
The distance from a point $v\in S$ to its center in $Y$ might stay the same under $d'$, or increase, depending on its type.
For each point $v\in A_1$, $d'(v,Y)=d(v,Y)=d(v,X)$ because these points have centers in $X\cap Y$.
For each point $v\in A_3\cup A_4$, $d'(v,Y)=\alpha d(v,Y)$ because their centers are in $Y\setminus X$.
The points in $A_2$ were originally closest to a center in $Y\setminus X$, but might switch to their center in $X$, since it is in $X\cap Y$.
Therefore, for each $v\in A_2$, $d'(v,Y)= \min(d(v,X),\alpha d(v,Y))$. Altogether,
\begin{align*}
\sum_{v\in S}d'(v,Y)\geq\sum_{v\in A_1}d(v,X)+\sum_{v\in A_2}\min(d(v,X),\alpha d(v,Y))+\alpha\sum_{v\in A_3\cup A_4} d(v,Y).
\end{align*}

The cost of the clustering induced by $X$ under $d'$, using our assumption, is equal to 
\begin{align*}
\sum_{v\in S} d'(v,X)\leq \sum_{v\in A_1} d(v,X)+\sum_{v\in A_2}\min(d(v,X),\alpha d(v,Y))+\alpha\sum_{v\in A_3\cup A_4} d(v,Y).
\end{align*}

Therefore, $X$ is the optimal set of centers under the perturbation $d'$.
Given an $\alpha$-LPR cluster $C_i$, by definition, there exists $x\in X$ such that $\text{Vor}_X(x)=C_i$ under $d'$,
therefore by construction, $\text{Vor}_X(x)=C_i$ under $d$ as well.
This proves the theorem.
\end{proof}

\begin{figure}
    \centering
    \begin{subfigure}[b]{0.5\textwidth}
        \includegraphics[width=\textwidth]{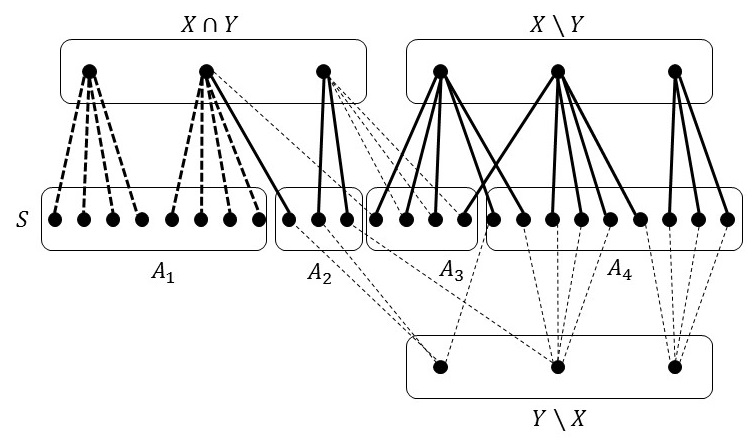}
        \caption{\small The bold lines represent center assignment in $X$, and 
the dotted lines represent center assignment in $Y$.}
        \label{fig:approx}
    \end{subfigure}
    \quad\quad\quad\quad
    \begin{subfigure}[b]{0.3\textwidth}
        \includegraphics[width=\textwidth]{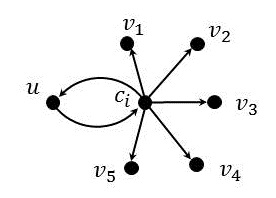}
        \caption{\small $u$ is a CCV, so it is distance $2r^*$ to its entire cluster.
				Each directed arrow represents a distance of $r^*$. 
				}
        \label{fig:ccv}
    \end{subfigure}
    \caption{The clustering setup for Lemma~\ref{lem:approx} and Theorem~\ref{thm:approx} (left),
		and an example of a center-capturing vertex (right).}
    \label{fig:approx-ccv}
\end{figure}

Essentially, Lemma~\ref{lem:approx} claims that any $\alpha$-approximation algorithm will return the $\alpha$-LPR clusters
as long as the approximation ratio is ``uniform'' across all clusters. For example, a 3-approximation algorithm that returns
half of the clusters paying 1.1 times their cost in $\OPT$, and half of the clusters paying 5 times their cost in $\OPT$, would
fail the property. Luckily, the local search algorithm is well-suited for this property, due to the local optimum guarantee.

The local search algorithm starts with any set of $k$ centers, and iteratively replaces $t$ centers with $t$ different centers if it
leads to a better clustering (see Algorithm~\ref{alg:local}).
The number of iterations is $O(\frac{n}{\epsilon})$.
The classic Local Search heuristic is widely used in practice, and many works have 
studied local search theoretically for $k$-median and $k$-means~\cite{arya2004local,gupta2008simpler,kanungo2002local}.
For more information on local search, see a general introduction by Aarts and Lenstra~\cite{aarts1997local}.

\begin{algorithm}[h]
\caption {Local Search algorithm for $k$-median clustering}
\label{alg:local}
\begin{algorithmic}[1]
\Require{$k$-median instance $(S,d)$, parameter $\epsilon$}
\State Pick an arbitrary set of centers $C$ of size $k$.
\State While $\exists C'$ of size $k$ such that $|C\setminus C'|+|C'\setminus C|\leq\frac{2}{\epsilon}$ and $\text{cost}(C')\leq(1-\frac{\epsilon}{n})\text{cost}(C)$
\begin{itemize}
\item Replace $C$ with $C'$.
\end{itemize}
\Ensure{Centers $C$}
\end{algorithmic}
\end{algorithm}

The next theorem utilizes Lemma~\ref{lem:approx} and a result by Cohen-Addad and Schwiegelshohn~\cite{cohen2017one}, who showed that local
search returns the optimal clustering under a stronger version of
$(3+2\epsilon)$-PR.
For the formal proof of 
Theorem~\ref{thm:approx}, see Appendix~\ref{app:approx}.

\begin{theorem}\label{thm:approx}
Given a $k$-median instance $(S,d)$, running local search with search size $\frac{1}{\epsilon}$ returns a clustering that contains
every $(3+2\epsilon)$-LPR cluster, and it gives a $(3+2\epsilon)$-approximation overall.
\end{theorem}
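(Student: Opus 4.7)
The plan is to establish that the output $X$ of Algorithm~\ref{alg:local} satisfies the inequality of Lemma~\ref{lem:approx} with $\alpha = 3+2\epsilon$. Once that is in hand, Lemma~\ref{lem:approx} immediately delivers the first conclusion (every $(3+2\epsilon)$-LPR cluster is contained in the induced clustering), and specializing the same inequality to $Y=\OPT$ yields the second conclusion: using the notation $A_1,\dots,A_4$ from the proof of Lemma~\ref{lem:approx}, add $\sum_{v\in A_1} d(v,X)=\sum_{v\in A_1} d(v,\OPT)$ to both sides and bound $\min(d(v,X),\alpha d(v,\OPT))\leq \alpha d(v,\OPT)$ to conclude $\text{cost}(X)\leq (3+2\epsilon)\,\text{cost}(\OPT)$.

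To derive the inequality itself, I would invoke the classical swap analysis of Arya et al.~\cite{arya2004local} for $k$-median local search. At termination of the $(1/\epsilon)$-swap algorithm, no change of at most $1/\epsilon$ centers improves the cost by more than a $(1-\epsilon/n)$ factor. For any competing set $Y$ of $k$ centers, Arya et al. build a collection of test swaps between $X$ and $Y$, each of size at most $1/\epsilon$, and sum the corresponding non-improvement inequalities to bound $\text{cost}(X)$ against $\text{cost}(Y)$. The standard presentation collapses everything into a single global bound $\text{cost}(X)\leq (3+2\epsilon)\,\text{cost}(Y)$, but the refinement of Cohen-Addad and Schwiegelshohn~\cite{cohen2017one} retains the Voronoi structure throughout, so that each point's contribution depends on which tile it belongs to. Specifically, points in $A_2=\text{Vor}_X(X\cap Y)\cap \text{Vor}_Y(Y\setminus X)$ may remain with their shared $X\cap Y$ center during the relevant test swap, contributing only $\min(d(v,X),\alpha d(v,Y))$, whereas points in $\text{Vor}_X(X\setminus Y)$ must be reassigned and contribute $\alpha d(v,Y)$. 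This is precisely the mixed inequality demanded by Lemma~\ref{lem:approx}.

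The main obstacle will be matching this refined inequality to the exact form required. Cohen-Addad and Schwiegelshohn phrase their result in terms of a stability notion rather than as a standalone combinatorial identity, so I would have to unpack their proof and verify that the per-point bounds it produces coincide with those in Lemma~\ref{lem:approx}. A second piece of bookkeeping is to check that the $(1-\epsilon/n)$ slack in the stopping criterion of local search is absorbed into the $2\epsilon$ factor of $\alpha=3+2\epsilon$, rather than generating an additional additive error term; this follows by summing the non-improvement inequalities over the $O(k)$ test swaps and observing that each contributes only an $O(\epsilon/n)\cdot \text{cost}(X)$ slack. Once both checks are in place, Lemma~\ref{lem:approx} closes the argument and delivers both the LPR containment and the worst-case approximation simultaneously.
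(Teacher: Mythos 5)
Your plan matches the paper's proof: it cites Lemmas B.3 and B.4 of Cohen-Addad and Schwiegelshohn to obtain $\sum_{v\in A_2\cup A_4}d(v,X)\leq\sum_{v\in A_2\cup A_4}d(v,Y)+2(1+\epsilon)\sum_{v\in A_3\cup A_4}d(v,Y)$, then rearranges using $d(v,Y)\leq d(v,X)$ on $A_2$ and $d(v,X)\leq d(v,Y)$ on $A_3$ to reach the mixed per-tile form required by Lemma~\ref{lem:approx}. The only difference is cosmetic: you also spell out how specializing to $Y=\OPT$ recovers the $(3+2\epsilon)$-approximation, a step the paper leaves implicit.
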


The value of local perturbation resilience, $3+2\epsilon$, is tight due to a counterexample by Cohen-Addad et al.~\cite{cohen2016local}.
For $k$-means, we can use local search to return each $(9+\epsilon)$-LPR cluster by using Lemma~\ref{lem:approx} (which works for squared distances as well)
and adding the result from Kanungo et al.~\cite{kanungo2002local}.

\paragraph{$k$-center}

Because the $k$-center objective takes the maximum rather than the average of all cluster costs, 
the equivalent of the condition in Lemma~\ref{lem:approx} is essentially satisfied by any $\alpha$-approximation algorithm.
We will now prove an even stronger result. Any $\alpha$-approximation for $k$-center returns the $\alpha$-LPR clusters, even
for metric perturbation resilience.
First we state a lemma which allows us to reason about a specific class of $\alpha$-perturbations which will be useful in this section as well as 
throughout the paper, for symmetric and asymmetric $k$-center.
For the full proofs, see Appendix~\ref{app:approx}.

\begin{lemma} \label{lem:d'_metric}
Given $\alpha\geq 1$ and an 
asymmetric $k$-center clustering instance $(S,d)$ with optimal radius $r^*$,
let $d''$ denote an $\alpha$-perturbation such that for all $u,v$, either
$d''(u,v)=\min(\alpha r^*,\alpha d(u,v))$ or $d''(u,v)=\alpha d(u,v)$.
Let $d'$ denote the metric completion of $d''$.
Then $d'$ is an $\alpha$-metric perturbation of $d$, and the optimal cost under $d'$ is $\alpha r^*$.
\end{lemma}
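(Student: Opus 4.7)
The lemma has two parts: showing $d'$ is an $\alpha$-metric perturbation of $d$, and showing the optimal $k$-center cost under $d'$ equals $\alpha r^*$. The first part reduces to standard manipulations of shortest-path distances, and the upper bound on the cost is immediate from using the original optimal centers as candidate centers. The main obstacle is the matching lower bound on the cost: showing that no set of $k$ centers can achieve cost strictly less than $\alpha r^*$ under $d'$, which is delicate because shortest paths in $d''$ could in principle route through short edges and avoid paying the full $\alpha r^*$ on any single hop.

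\textbf{$\alpha$-metric perturbation claim.} The directed triangle inequality for $d'$ is immediate from the definition of metric completion. For the upper bound $d'(u,v) \leq \alpha d(u,v)$, I would use the direct edge: $d'(u,v) \leq d''(u,v) \leq \alpha d(u,v)$, where the second inequality uses that $d''$ is already an $\alpha$-perturbation by hypothesis. For the lower bound $d'(u,v) \geq d(u,v)$, any path $u = v_0, v_1, \ldots, v_\ell = v$ satisfies $\sum_i d''(v_{i-1}, v_i) \geq \sum_i d(v_{i-1}, v_i) \geq d(u,v)$, where the first inequality uses $d'' \geq d$ and the second is the directed triangle inequality for $d$. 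Minimizing over paths yields $d'(u,v) \geq d(u,v)$.

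\textbf{Optimal cost equals $\alpha r^*$.} For the upper bound, I would take the original optimal centers $c_1, \dots, c_k$: each point $v$ satisfies $d(v, c_i) \leq r^*$ for its optimal center $c_i$, and both cases of the definition give $d''(v, c_i) \leq \alpha r^*$, hence $d'(v, c_i) \leq \alpha r^*$. The main obstacle is the lower bound, where the key pointwise estimate I plan to establish up front is $d''(u,v) \geq \alpha \min(d(u,v), r^*)$, which holds in both cases of the definition (in case 1 it is an equality, and in case 2 it holds since $\alpha d(u,v) \geq \alpha \min(d(u,v), r^*)$). For any set $X$ of $k$ centers, since $r^*$ is the optimal cost under $d$, there exists a point $v$ with $d(v,x) \geq r^*$ for every $x \in X$. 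I would then show that every path $v = v_0, v_1, \ldots, v_\ell = x$ in the weighted digraph defined by $d''$ has length at least $\alpha r^*$, via a two-case analysis: if some edge $(v_{i-1}, v_i)$ has $d$-length exceeding $r^*$, the pointwise estimate gives $d''(v_{i-1}, v_i) \geq \alpha r^*$ for that single edge; otherwise every edge has $d$-length at most $r^*$ and the estimate gives $\sum_i d''(v_{i-1}, v_i) \geq \alpha \sum_i d(v_{i-1}, v_i) \geq \alpha\, d(v, x) \geq \alpha r^*$ via the directed triangle inequality for $d$. Minimizing over paths yields $d'(v, x) \geq \alpha r^*$ for every $x \in X$, which completes the lower bound and hence the lemma.
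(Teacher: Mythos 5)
Your proposal is correct and follows essentially the same route as the paper's proof: the $\alpha$-perturbation bounds are verified directly, and the lower bound on the optimal $d'$-cost comes from the same two-case path decomposition (either some edge of the shortest path already has $d'' \geq \alpha r^*$, or every edge is scaled by exactly $\alpha$ so the total is $\geq \alpha\, d(u,v) \geq \alpha r^*$). The only cosmetic difference is that you package the case analysis through the pointwise estimate $d''(u,v) \geq \alpha \min(d(u,v), r^*)$ and apply it to a point that is $d$-far from the candidate center set, while the paper states the equivalent contrapositive claim (if $d(u,v) \geq r^*$ then $d'(u,v) \geq \alpha r^*$) and derives a contradiction from a hypothetical cheap set of centers.
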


\begin{sproof}
First, $d'$ is a valid $\alpha$-metric perturbation of $d$ because for all $u,v$,
$d(u,v)\leq d'(u,v)\leq d''(u,v)\leq\alpha d(u,v)$.
To show the optimal cost under $d'$ is $\alpha r^*$, it suffices to prove that for all 
$u,v$, if $d(u,v)\geq r^*$, then $d'(u,v)\geq\alpha r^*$.
This is true of $d''$ by construction, so we show it still holds after taking the metric completion
of $d''$, which can shrink some distances.
Given $u,v$ such that $d(u,v)\geq r^*$, there exists a path 
$u=u_0$--$u_1$--$\cdots$--$u_{s-1}$--$u_s=v$
such that $d''(u,v)=\sum_{i=0}^{s-1}d'(u_i,u_{i+1})$ and
for all $0\leq i\leq s-1$, $d'(u_i,u_{i+1})=d''(u_i,u_{i+1})$.
If one of the segments has length $\geq r^*$ in $d$, then it has length $\geq\alpha r^*$ in $d''$ and we are done.
If not, all distances increase by exactly a factor of $\alpha$, so we sum up all distances to show $d'(u,v)\geq\alpha r^*$.
\end{sproof}

\begin{theorem} \label{thm:kcenter}
Given an asymmetric $k$-center clustering instance $(S,d)$ and an $\alpha$-approximate clustering $\mathcal{C}$,
each $\alpha$-LPR cluster is contained in $\mathcal{C}$, even under the weaker metric perturbation resilience condition.
\end{theorem}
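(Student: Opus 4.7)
The plan is to construct an $\alpha$-metric perturbation $d'$ of $d$ under which the given $\alpha$-approximate clustering $\mathcal{C}$ itself achieves the optimal cost $\alpha r^*$, and then invoke the $\alpha$-LPR property of $C_i$ to conclude $C_i \in \mathcal{C}$.

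Concretely, let $X=\{x_1,\dots,x_k\}$ be the centers of $\mathcal{C}$, and for each $v\in S$ let $x(v)\in X$ denote the center assigned to $v$, so $d(x(v),v)\le\alpha r^*$ by the approximation guarantee. Following Lemma~\ref{lem:d'_metric}, I would define an intermediate function $d''$ by setting $d''(x(v),v)=\min(\alpha r^*,\alpha d(x(v),v))$ for every $v\in S$, and $d''(u,w)=\alpha d(u,w)$ for every other ordered pair $(u,w)$. Let $d'$ be the metric completion of $d''$. Lemma~\ref{lem:d'_metric} then immediately gives that $d'$ is a valid $\alpha$-metric perturbation of $d$ and that the optimal $k$-center cost under $d'$ is exactly $\alpha r^*$.

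To show $\mathcal{C}$ attains this optimum under $d'$, note that for every point $v$ we have $d'(x(v),v)\le d''(x(v),v)\le\alpha r^*$, so $X$ covers $S$ within radius $\alpha r^*$ under $d'$. Combined with the lower bound $\alpha r^*$ on the optimum from Lemma~\ref{lem:d'_metric}, this makes $\mathcal{C}$ an optimal clustering of $(S,d')$. Because $C_i$ is $\alpha$-LPR (in the metric sense), every optimal clustering under the $\alpha$-metric perturbation $d'$ must contain $C_i$; hence $C_i\in\mathcal{C}$, as desired.

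The main obstacle I anticipate is being careful with two issues specific to the asymmetric setting. First, only the directed distances $d''(x(v),v)$ from center to member should be truncated to $\alpha r^*$; truncating the reverse direction or inter-center distances could artificially shrink other clusterings' costs in a way that is harder to control. Second, metric completion may reduce some distances, so one must rely on the guarantee in Lemma~\ref{lem:d'_metric} that pairs originally at distance $\ge r^*$ remain at distance $\ge\alpha r^*$ under $d'$ (which is what forces the optimum under $d'$ to be at least $\alpha r^*$). Once these are in hand, the argument is essentially the same as in the symmetric metric case, and in fact gives the stronger conclusion under the weaker metric-perturbation-resilience condition automatically, since the $d'$ we construct is metric by definition.
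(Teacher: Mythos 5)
Your construction of $d''$ and $d'$ is identical to the paper's, and the appeal to Lemma~\ref{lem:d'_metric} is exactly right. But there is a genuine gap at the final step. You write that the cover bound ``makes $\mathcal{C}$ an optimal clustering of $(S,d')$'' and then conclude $C_i\in\mathcal{C}$ from $\alpha$-LPR. What you actually establish is that the \emph{set of centers} $X$ of $\mathcal{C}$ is optimal under $d'$; the ``optimal clustering under $d'$'' in the LPR definition is the \emph{Voronoi partition of $X$ under $d'$}, which is a priori a different partition from $\mathcal{C}=\{\text{Vor}_{X,d}(x)\}_{x\in X}$. So LPR only tells you that $C_i$ appears in $\{\text{Vor}_{X,d'}(x)\}_{x\in X}$, not that $C_i\in\mathcal{C}$.

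The paper closes this gap with an additional argument: it shows that for every $v$, $\operatorname{argmin}_{u\in X} d'(u,v) = \operatorname{argmin}_{u\in X} d(u,v)$, i.e.\ the two Voronoi partitions coincide. The reason is that the only distance to $v$ that was truncated relative to $\alpha d$ is the one from $v$'s own assigned center $x(v)$, and the metric completion cannot make any other center overtake $x(v)$: the proof of Lemma~\ref{lem:d'_metric} shows every pair has $d'(u,v)\ge\min(\alpha r^*,\alpha d(u,v))$, so any $u\neq x(v)$ still satisfies $d'(u,v)\ge\min(\alpha r^*,\alpha d(u,v))\ge\min(\alpha r^*,\alpha d(x(v),v))\ge d'(x(v),v)$. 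Hence the cluster $\text{Vor}_{X,d'}(x_i)=C_i$ guaranteed by LPR equals $\text{Vor}_{X,d}(x_i)\in\mathcal{C}$. Your proposal already flags this issue implicitly (``metric completion may reduce some distances''), but it needs to be turned into the explicit argument that the $d$-Voronoi and $d'$-Voronoi cells agree; without it, the chain of implications does not reach $C_i\in\mathcal{C}$.
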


\begin{sproof}
Similar to the proof of Lemma~\ref{lem:approx}, we construct an $\alpha$-perturbation $d'$ and argue that
$\mathcal{C}$ becomes the optimal clustering under $d'$. Let $r^*$ denote the optimal $k$-center radius of $(S,d)$.
First we define an $\alpha$-perturbation $d''$ by increasing the distance from each point $v\in S$ to its center $c$ in
$\mathcal{C}$ to $\min\{\alpha r^*,\alpha d(v,\mathcal{C}(v))\}$, and increase all other distances by a factor of $\alpha$.
Then by Lemma~\ref{lem:d'_metric}, the metric completion $d'$ of $d''$ has optimal cost $\alpha r^*$,
and so $\mathcal{C}$ is the optimal clustering. Now we finish off the proof in a manner identical to Lemma~\ref{lem:approx}.
\end{sproof}

We remark that Theorem~\ref{thm:kcenter} generalizes the result from Balcan et al.~\cite{symmetric}.
Although Theorem~\ref{thm:kcenter} applies more generally to asymmetric $k$-center, it is most useful for symmetric $k$-center,
for which there exist several 2-approximation algorithms~\cite{dyer1985simple,gonzalez1985clustering,hochbaum1985best}.
Asymmetric $k$-center is NP-hard to approximate to within a factor of $o(log^* n)$~\cite{chuzhoy2005asymmetric},
so Theorem~\ref{thm:kcenter} only guarantees returning the $O(\log^* n)$-LPR clusters. In the next section,
we show how to substantially improve this result.

\section{Asymmetric $k$-center} \label{sec:akc}

In this section, we show that a slight modification to the
$O(\log^* n)$ approximation algorithm of Vishwanathan~\cite{vishwanathan}
leads to an algorithm that maintains its performance in the worst case, while returning
each cluster $C_i$ with the following property: $C_i$ is 2-LPR, and all nearby clusters are 2-LPR as well.
This result also holds for metric perturbation resilience.
We start by formally giving the stronger version of Definition~\ref{def:lpr}.
Throughout this section, we denote the optimal $k$-center radius by $r^*$.

\begin{definition}
An optimal cluster $C_i$ satisfies \emph{$\alpha$-strong local perturbation resilience ($\alpha$-SLPR)} if 
for each $j$ such that there exists $u\in C_i$, $v\in C_j$ and $d(u,v)\leq r^*$,
then $C_j$ is $\alpha$-LPR.
\end{definition}

\begin{theorem} \label{thm:asy_local}
Given an asymmetric $k$-center clustering instance $(S,d)$ of size $n$, 
Algorithm \ref{alg:asy-pr} returns each 2-SLPR cluster exactly.
For each 2-LPR cluster $C_i$, Algorithm \ref{alg:asy-pr} outputs a cluster that is a superset of $C_i$
and does not contain any other 2-LPR cluster.
These statements hold for metric perturbation resilience as well.
Finally, the overall clustering returned by Algorithm \ref{alg:asy-pr} is an $O(\log^* n)$-approximation.
\end{theorem}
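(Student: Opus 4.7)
The plan is to exploit the two-phase structure of Vishwanathan's $O(\log^* n)$-approximation. Phase~1 iteratively selects a \emph{center-capturing vertex} (CCV) $u$, places $u$ in the output, and removes $u$'s outgoing $2r^*$-ball; the defining property of a CCV guarantees this ball covers the entire optimal cluster containing $u$. Phase~2 patches the residual points via a recursive construction that contributes the $\log^* n$ factor. The modification carried out by Algorithm~\ref{alg:asy-pr} is to split Phase~1 into two stages: first process every CCV that also satisfies an explicit, data-testable \emph{separation} condition; then proceed with Vishwanathan's original rule on what remains. Since Phase~2 is untouched and the first stage is merely a restriction of the original Phase~1, the worst-case $O(\log^* n)$ bound is inherited essentially for free.

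The heart of the proof is two structural properties of 2-LPR centers. \textbf{Claim A} (CCV property): every 2-LPR center $c_i$ is a CCV. I would argue by contradiction: if some $v$ has $d(v,c_i)\le r^*$ but $d(c_i,v)>r^*$, construct a 2-perturbation $d''$ in the template of Lemma~\ref{lem:d'_metric} that scales every distance by $2$ except along a short path that keeps $v$ a cheap center for $C_i$, then take its metric completion $d'$. In $d'$, swapping $c_i$ for $v$ yields a $k$-center solution of cost $2r^*$ whose induced partition disagrees with $\OPT$ on $C_i$, contradicting 2-LPR. \textbf{Claim B} (separation): for any two distinct 2-LPR clusters $C_i,C_j$, the centers $c_i,c_j$ are sufficiently far apart that no point of $C_j$ lies in the portion of $c_i$'s outgoing $2r^*$-ball that the modified algorithm would absorb. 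Again a tailored 2-perturbation that would otherwise split $C_j$ or pull points of $C_j$ into $C_i$ yields a contradiction with 2-LPR of one of them. Under the 2-SLPR hypothesis, 2-LPR propagates from $C_i$ to every cluster within $r^*$ of it, so Claim~B rules out \emph{every} foreign point in the ball, giving equality with $C_i$; under plain 2-LPR it still rules out every foreign 2-LPR cluster, yielding the superset guarantee.

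Transfer to metric perturbation resilience is automatic: the 2-perturbations used in Claims~A and B fit the template of Lemma~\ref{lem:d'_metric}, so their metric completions remain genuine 2-metric-perturbations of cost $2r^*$ under which the same contradictions persist. The main obstacle will be pinning down the separation condition so that it is simultaneously (i)~testable from $(S,d)$ and a guess of $r^*$ without knowledge of $\OPT$, (ii)~provably implied by 2-LPR of the center together with 2-LPR of all of its $r^*$-neighbors, and (iii)~compatible with Vishwanathan's Phase~2 analysis, so that pre-processing the separated CCVs does not deplete the pool of CCVs that Phase~2's induction relies on. Threading a single condition through all three requirements, together with carefully verifying the asymmetric triangle inequality for each bespoke 2-perturbation, is where the delicate work lives.
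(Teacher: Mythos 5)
Your high-level plan matches the paper's: restrict Phase~I to prefer CCVs with an extra structural property, prove via tailored $2$-perturbations that $2$-LPR centers are CCVs and satisfy that property, and observe that Phase~II is untouched so the $O(\log^* n)$ bound survives. The use of Lemma~\ref{lem:d'_metric} to lift to metric perturbation resilience is also correct and is exactly what the paper does.

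However, several pieces of the argument that you flag as "where the delicate work lives" are genuine gaps, and at least one of them cannot be filled without changing the algorithm in a way your sketch does not anticipate. First, the separation condition the paper actually uses is \emph{CCV-proximity}: $c$ is a CCV and every $v\in\Gamma^-(c)$ is strictly closer to $c$ than to any CCV outside $\Gamma^+(c)$. This is testable from $(S,d)$ and $r^*$ alone (it quantifies over CCVs, not over the unknown optimal centers), and it is implied by $2$-LPR via center-separation plus the comparison property $d(c,v)<d(c',v)$ for CCVs $c\in C_i$, $c'\notin C_i$, $v\in C_i$. Your Claim~B gestures at something like this but does not pin it down, and without the exact quantifier structure (over CCVs, over $\Gamma^-(c)$) the Phase~I ordering argument does not go through. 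Second---and this is the gap your sketch would actually fail on---merely reordering which CCVs go first is insufficient: if a selected CCV $c$ still marks all of $\Gamma^+_2(c)$ as in Vishwanathan's original rule, it can mark the $2$-LPR center $c_j$ of a neighboring cluster, and then no point of $C_j$ will be selected as a Phase~I center even though $C_j$ is $2$-LPR. The paper's fix is to change the \emph{marking rule}: $c$ marks $\Gamma^+(c')$ for each $c'\in\Gamma^-(c)$, which is a strict subset of $\Gamma^+_2(c)$; combined with CCV-proximity this guarantees $c$ can never mark a foreign LPR center (since no point can be in $\Gamma^-$ of two CCV-proximity vertices). This still marks the whole of $C_i$ because $c_i\in\Gamma^-(c)$, so the Phase~II precondition (no residual CCVs) is preserved. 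Third, the superset guarantee requires an explicit constraint that Phase~II centers cannot shrink the Phase~I Voronoi tiles: a Phase~II center (which is not a CCV) may be closer to some $v\in C_i$ than $c$ is, and nothing in the separation/CCV structure prevents this. Finally, for 2-SLPR exactness you need to rule out foreign points entering $V_c$ from Phase~II as well; the paper's argument traces a shortest path from $c$ to such a point $v$, isolates the first vertex on the path leaving $C_i$, shows its cluster must be $2$-LPR by the SLPR hypothesis, and then uses CCV-proximity plus a bespoke $2$-perturbation to show that the other cluster's Phase~I CCV is strictly closer to $v$. Your "the ball gives equality" claim skips this, and the ball argument alone does not work because $V_c$ is a Voronoi tile of \emph{all} centers (including Phase~II ones), not merely the $2r^*$-ball of $c$.
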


\paragraph{Approximation algorithm for asymmetric $k$-center}
We start with a recap of the $O(\log^* n)$-approximation algorithm of Vishwanathan~\cite{vishwanathan}.
This was the first nontrivial algorithm for asymmetric $k$-center, 
and the approximation ratio was later proven to be tight~\cite{chuzhoy2005asymmetric}.
To explain the algorithm, it is convenient to think of asymmetric $k$-center as a set covering problem.
Given an asymmetric $k$-center instance $(S,d)$,
define the directed graph $D_{(S,d)}=(S,A)$, where $A=\{(u,v)\mid d(u,v)\leq r^*\}$.
For a point $v\in S$, we define $\Gamma^+(v)$ and $\Gamma^-(v)$ as the set of vertices with an arc to and from $v$, respectively.
The asymmetric $k$-center problem is equivalent to finding a subset $C\subseteq S$ of size
$k$ such that $\cup_{c\in C}\Gamma^+(c)=S$.
We also define $\Gamma^-_x(v)$ and $\Gamma^+_x(v)$ as the set of vertices which have a path of length $\leq x$ 
to and from $v$ in $D_{(S,d)}$, respectively, and we define $\Gamma^+_x(A)=\bigcup_{v\in A}\Gamma^+_x(v)$ for a set $A\subseteq S$, 
and similarly for $\Gamma^-_x(A)$.
It is standard to assume the value of $r^*$ is known; since it is one of $O(n^2)$ distances, 
the algorithm can search for the correct value in polynomial time.
Vishwanathan's algorithm crucially utilizes the following concept.

\begin{definition}
Given an asymmetric $k$-center clustering instance $(S,d)$,
a point $v\in S$ is a \emph{center-capturing vertex} (CCV) if 
$\Gamma^-(v)\subseteq \Gamma^+(v)$.
In other words, for all $u\in S$, $d(u,v)\leq r^*$ implies $d(v,u)\leq r^*$.
\end{definition}

As the name suggests, each CCV $v\in C_i$, ``captures'' its center, i.e.\ $c_i\in\Gamma^+(v)$ (see Figure~\ref{fig:ccv}).
Therefore, $v$'s entire cluster is contained inside $\Gamma^+_2(v)$, which is a nice property that the
approximation algorithm exploits.
At a high level, the approximation algorithm has two phases. In the first phase, the algorithm iteratively picks
a CCV $v$ arbitrarily and removes all points in $\Gamma^+_2(v)$. 
This continues until there are no more CCVs.
For every CCV picked, the algorithm is guaranteed to remove an entire optimal cluster.
In the second phase, the algorithm runs $\log^* n$ rounds of
a greedy set-cover subroutine on the remaining points. 
See Algorithm \ref{alg:asy}.
To prove the second phase terminates in $O(\log^* n)$ rounds, the analysis crucially assumes there are no
CCVs among the remaining points. We refer the reader to~\cite{vishwanathan} for these details.

\begin{algorithm}[h]
\caption {\textsc{$O(\log^* n)$ approximation algorithm for asymmetric $k$-center~\cite{vishwanathan}}}
\label{alg:asy}
\begin{algorithmic}
\Require{Asymmetric $k$-center instance $(S, d)$, optimal radius $r^*$ (or try all possible candidates)}
\Statex Set $C=\emptyset$.
\Statex\textbf{Phase I: Pull out arbitrary CCVs}
\State While there exists an unmarked CCV
\begin{itemize}
\item Pick an unmarked CCV $c$, add $c$ to $C$, and mark all vertices in $\Gamma^+_{2}(c)$
\end{itemize}
\textbf{Phase II: Recursive set cover}
\State Set $A_0=S\setminus\Gamma^+_{5}(C)$, $i=0$. While $|A_i|>k$:
\begin{itemize}
\item Set $A'_{i+1}=\emptyset$. While there exists an unmarked point in $A_i$:
\begin{itemize}
\item Pick $v\in S$ which maximizes $\Gamma^+_{5}(v)\cap A_i$
\item Mark all points $\Gamma^+_{5}(v)\cap A_i$ and add $v$ to $A'_{i+1}$. 
\end{itemize}
\item Set $A_{i+1}=A'_{i+1}\cap A$ and $i=i+1$
\end{itemize}
\Ensure{Centers $C\cup A_{i+1}$}
\end{algorithmic}
\end{algorithm}

\paragraph{Robust algorithm for asymmetric $k$-center}
We show a small modification to Vishwanathan's approximation algorithm leads to simultaneous guarantees
in the worst case and under local perturbation resilience.
We show that each 2-LPR center is itself a CCV, and displays other structure which allows us to 
distinguish it from non-center CCVs. 
This suggests a simple modification to Algorithm~\ref{alg:asy}: 
instead of picking CCVs arbitrarily, we first pick CCVs which display the added structure,
and then when none are left, we go back to picking regular CCVs.
However, we need to ensure that a CCV chosen by the algorithm marks points from at most one 2-LPR cluster, 
or else we will not be able to output a separate cluster for each 2-LPR cluster.
Thus, the difficulty in our argument is carefully specifying which CCVs the algorithm picks, 
and which nearby points get marked by the CCVs, 
so that we do not mark other LPR clusters and \emph{simultaneously} maintain the 
guarantee of the original approximation algorithm,
namely that in every round, we mark an entire optimal cluster. 
To accomplish this tradeoff, we start by defining two properties.
The first property will determine which CCVs are picked by the algorithm.
The second property is used in the proof of correctness, but is not used
explicitly by the algorithm. We give the full details of the proofs in Appendix~\ref{app:akc}.

\begin{definition}
\emph{(1)} A point $c$ satisfies \emph{CCV-proximity} if it is a CCV, and each point in $\Gamma^-(c)$ 
is closer to $c$ than any CCV outside of $\Gamma^+(c)$. 
That is, for all points $v\in \Gamma^-(c)$ and CCVs $c'\notin \Gamma^+(c)$, 
$d(c,v)<d(c',v)$.
\footnote{
This property loosely resembles $\alpha$-center proximity~\cite{awasthi2012center},
a property defined over an entire clustering instance, 
which states for all $i$, for all $v\in C_i$, $j\neq i$, we have $\alpha d(c_i,v)<d(c_j,v)$.
}
\emph{(2)} An optimal center $c_i$ satisfies \emph{center-separation} if 
any point within distance $r^*$ of $c_i$ belongs to its cluster $C_i$. 
That is, for all $v\notin C_i$,  $c_i\notin\Gamma^+(v)$.
\end{definition}

\begin{lemma}\label{lem:properties}
Given an asymmetric $k$-center clustering instance $(S,d)$ and a 2-LPR cluster $C_i$,
$c_i$ satisfies CCV-proximity and center-separation.
Furthermore, given a CCV $c\in C_i$, a CCV $c'\notin C_i$, and a point $v\in C_i$, we have
$d(c,v)<d(c',v)$.
\end{lemma}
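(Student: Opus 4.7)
The plan is to prove each of the three assertions by contradiction: assume the claimed property fails and, via Lemma~\ref{lem:d'_metric}, construct a $2$-metric-perturbation $d'$ together with a replacement center set $X'$ whose cost under $d'$ matches the optimum $2r^*$ but whose Voronoi partition omits $C_i$, contradicting $2$-LPR of $C_i$.

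First I would handle center-separation. Assuming $v\notin C_i$ with $d(v,c_i)\le r^*$, I would take $X'=(X\setminus\{c_i\})\cup\{v\}$. For each $w\in C_i$ the directed triangle inequality gives $d(v,w)\le d(v,c_i)+d(c_i,w)\le 2r^*$, while every other $w$ is still within $r^*$ of its original center in $X'$. Lemma~\ref{lem:d'_metric} with $\alpha=2$ (capping long edges) then produces $d'$ with optimum cost $2r^*$, under which $X'$ is optimal. Since $v$ lies in its own Voronoi tile under $X'$ but $v\notin C_i$, and every other center $c_l$ still belongs to its own original cluster $C_l\ne C_i$, no tile of $X'$ equals $C_i$, contradicting $2$-LPR. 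The CCV-ness of $c_i$ then follows immediately: if $d(u,c_i)\le r^*$, center-separation forces $u\in C_i$, so $d(c_i,u)\le r^*$, and hence $\Gamma^-(c_i)\subseteq\Gamma^+(c_i)$.

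For the ``furthermore'' clause I would proceed analogously. Let $c\in C_i$, $c'\notin C_i$ be CCVs and $v\in C_i$, and suppose toward contradiction $d(c,v)\ge d(c',v)$; let $c'\in C_m$. Take $X'=(X\setminus\{c_i,c_m\})\cup\{c,c'\}$ (collapsing to $(X\setminus\{c_i\})\cup\{c\}$ when $c'=c_m$). Because $c$ is a CCV and $c_i\in\Gamma^-(c)$, we have $d(c,c_i)\le r^*$, and similarly $d(c',c_m)\le r^*$; hence $d(c,w)\le 2r^*$ for every $w\in C_i$ and $d(c',w)\le 2r^*$ for every $w\in C_m$, while all remaining clusters are still covered by their original centers in $X'$. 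Lemma~\ref{lem:d'_metric} again yields a $d'$ of optimum cost $2r^*$ under which $X'$ is optimal. Using the assumed $d(c,v)\ge d(c',v)$ together with a careful choice of options in Lemma~\ref{lem:d'_metric}, $v$'s Voronoi tile in $X'$ is $c'$ rather than $c$; then $c$'s tile omits $v\in C_i$, no center in $X'$ has tile equal to $C_i$, and $2$-LPR fails. Finally, CCV-proximity of $c_i$ is the furthermore clause specialized to $c=c_i$, once I note that $\Gamma^-(c_i)\subseteq C_i$ by center-separation and that any CCV $c'\notin\Gamma^+(c_i)$ satisfies $c'\notin C_i$ (else $d(c_i,c')\le r^*$).

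The main obstacle I anticipate is the metric completion step of Lemma~\ref{lem:d'_metric} in the furthermore argument: shortest-path completion could in principle introduce shortcuts between $c$, $c'$, and $v$ that invert the ordering $d(c,v)\ge d(c',v)$ into $d'(c,v)<d'(c',v)$, undermining the claim that $v$ is assigned to $c'$ in the $X'$-Voronoi partition. I plan to handle this by a careful per-pair choice of cap vs.\ scaling in Lemma~\ref{lem:d'_metric} and by Voronoi tie-breaking when distances match. A secondary point to verify is that no existing center $c_l\in X'$ could accidentally have Voronoi tile equal to $C_i$; this holds because such a $c_l$ already owns its own original cluster's worth of points and so cannot have exactly $C_i$ as its tile.
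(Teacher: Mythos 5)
Your route is the same as the paper's: build $2$-perturbations via Lemma~\ref{lem:d'_metric}, swap in new centers that cover the relevant clusters within $2r^*$, and contradict 2-LPR because no center of the new set lying in $C_i$ can have Voronoi tile equal to $C_i$ (or because $v$ leaves $c$'s tile). Your center-separation argument, the derivation that $c_i$ is a CCV, and your reduction of CCV-proximity to the ``furthermore'' clause all coincide with the paper's proof and are correct.

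The gap is in the furthermore clause, and it is precisely the step you defer to ``a careful per-pair choice of cap vs.\ scaling''. The contradiction needs that under the completed perturbation $d'$ the point $v$ is at least as close to $c'$ as to $c$, and this hinges on capping the specific pair $(c',v)$ in addition to the cluster pairs: the paper sets $d''(c',t)=\min(2r^*,2d(c',t))$ for $t\in C_j\cup\{v\}$ (where $c'\in C_j$) and $d''(c,t)=\min(2r^*,2d(c,t))$ for $t\in C_i$. This is a legal $2$-perturbation because $d(c',v)\le d(c,v)\le d(c,c_i)+d(c_i,v)\le 2r^*$, and it yields $d''(c',v)\le\min(2r^*,2d(c,v))=d''(c,v)$. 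If you cap only the cluster pairs, the construction genuinely fails to give a contradiction: when $r^*<d(c',v)\le d(c,v)$ one gets $d''(c,v)=2r^*<2d(c',v)=d''(c',v)$, so $v$ may well stay with $c$ and $c$'s tile can equal $C_i$. One must then also discharge, not merely flag, your shortcut worry: any $d''$-shortest $c$--$v$ path either uses only doubled edges (length $\ge 2d(c,v)\ge 2d(c',v)\ge d''(c',v)$), or starts with a capped edge $c\to u_1$, $u_1\in C_i$, and avoids $c'$ (length $\ge\min(2r^*,2d(c,u_1))+2d(u_1,v)\ge d''(c',v)$ by the two cases $d(c,u_1)\ge r^*$ and $d(c,u_1)<r^*$), or passes through $c'$, in which case its suffix from $c'$ to $v$ already has length $\ge d'(c',v)$; hence $d'(c',v)\le d'(c,v)$. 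Finally, note that a tie suffices and no special tie-breaking device is needed: 2-LPR must hold for every optimal Voronoi partition of an optimal center set, so $d'(c',v)\le d'(c,v)$ already permits an optimal clustering assigning $v$ outside $c$'s tile, and since $c$ is the only center of $X'$ inside $C_i$, that clustering contains no cluster equal to $C_i$.
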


\begin{sproof}
Given an instance $(S,d)$ and a 2-LPR cluster $C_i$, we show that $c_i$ has the desired properties.

\emph{Center Separation:}
Assume there exists a point $v\in C_j$ for $j\neq i$ such that $d(v,c_i)\leq r^*$.
The idea is to construct a $2$-perturbation in which $v$ becomes the center for $C_i$,
since the distance from $v$ to each point in $C_i$ is $\leq 2r^*$ by the triangle inequality.
Define $d''$ by increasing all distances by a factor of 2, except for the distances between
$v$ and each point $u$ in $C_i$, which we increase to $\min(2r^*,2d(v,u))$.
By Lemma~\ref{lem:d'_metric}, the metric completion $d'$ of $d''$ is a 2-metric perturbation
with optimal cost $2r^*$, so we can replace $c_i$ with $v$ in the set of optimal centers under $d'$.
However, now $c_i$ switches to a different cluster, contradicting 2-LPR.

\emph{Final property:}
Given CCVs $c\in C_i$, $c'\in C_j$, and a point $v\in C_i$, assume $d(c',v)\leq d(c,v)$.
Again, we will use a perturbation to construct a contradiction. 
Since $c$ and $c'$ are CCVs and thus distance $2r^*$ to their clusters, we can construct a 2-metric perturbation
with optimal cost $2r^*$ in which $c$ and $c'$ become centers for their respective clusters.
Then $v$ switches clusters, so we have a contradiction.

\emph{CCV-proximity:}
By center-separation and the definition of $r^*$, we have that $\Gamma^-(c_i)\subseteq C_i\subseteq\Gamma^+(c_i)$,
so $c_i$ is a CCV.
Now given a point $v\in \Gamma^-(c_i)$ and a CCV $c\notin \Gamma^+(c_i)$,
from center-separation and definition of $r^*$,
$v\in C_i$ and $c\in C_j$ for $j\neq i$.
Then from the property in the previous paragraph, 
$d(c_i,v)<d(c,v)$.
\end{sproof}

Now we can modify the algorithm so that it first chooses CCVs satisfying CCV-proximity.
The other crucial change is instead of each chosen CCV $c$ marking all points in $\Gamma^+_2(c)$,
it instead marks all points $v$ such that $v\in\Gamma^+(c')$ for some $c'\in\Gamma^-(c)$.
See Algorithm~\ref{alg:asy-pr}.
Note this new way of marking preserves the guarantee that each CCV $c\ C_i$ marks its own cluster,
because $c_i\in\Gamma^-(c)$.
It also allows us to prove that each CCV $c$ satisfying CCV-proximity can never mark an LPR center $c_i$ from
a different cluster.
Intuitively, if $c$ marks $c_i$, then there exists a point $v\in\Gamma^-(c_i)\cap\Gamma^-(c)$,
but there can never exist a point $v$ distance $\leq r^*$ to two points satisfying CCV-proximity,
since both would need to be closer to $v$ by definition.
Finally, the last property in Lemma~\ref{lem:properties} allows us to prove that when the algorithm computes
the Voronoi tiles after Phase 1, all points will be correctly assigned.
Now we are ready to prove Theorem~\ref{thm:asy_local}.

\begin{algorithm}[h]
\caption {Robust algorithm for asymmetric $k$-center}
\label{alg:asy-pr}
\begin{algorithmic}
\Require{ Asymmetric $k$-center instance $(S, d)$, distance $r^*$ (or try all possible candidates)}
\State Set $C=\emptyset$. Redefine $d$ using the shortest path length in $D_{(S,d)}$,
breaking ties by distance to first common vertex in the shortest path.  \\
\textbf{Phase I: Pull out special CCVs}
\begin{itemize}
\item While there exists an unmarked CCV:
\begin{itemize}
\item Pick an unmarked point $c$ which satisfies CCV-proximity.
If no such $c$ exists, then pick an arbitrary unmarked CCV instead. Add $c$ to $C$.
\item For all points $c'\in \Gamma^-(c)$, mark all points in $\Gamma^+(c')$.
\end{itemize}
\item For each $c\in C$, let $V_c$ denote $c$'s Voronoi tile of the marked points
induced by $C$.
\end{itemize}
\textbf{Phase II: Recursive set cover}
\State Run Phase II as in Algorithm~\ref{alg:asy}, outputting $A_{i+1}$.
\State Compute the Voronoi tile for each center in $C\cup A_{i+1}$, 
but a point in $V_c$ must remain in $c$'s Voronoi tile.
\footnote{We note this step breaks the requirement that the outputted clustering
is a Voronoi partition. See the proof sketch of Theorem~\ref{thm:asy_local}}
\end{algorithmic}
\end{algorithm}

\begin{proof}[Proof sketch of Theorem~\ref{thm:asy_local}]
First we explain why Algorithm \ref{alg:asy-pr} retains the approximation guarantee of Algorithm \ref{alg:asy}.
Given any CCV $c\in C_i$ chosen in Phase I, $c$ marks its entire cluster by definition, and we start Phase II with no remaining CCVs. 
This condition is sufficient for Phase II to return an
$O(\log^* n)$ approximation (Theorem 3.1 from~\cite{vishwanathan}).

Next we claim that for each 2-LPR cluster $C_i$, there exists a cluster outputted by Algorithm~\ref{alg:asy-pr} 
that is a superset of $C_i$ and does not contain any other 2-LPR cluster.
To prove this claim, we first show there exists a point from $C_i$ satisfying CCV-proximity that cannot be marked by any point from
a different cluster in Phase I.
From Lemma~\ref{lem:properties}, $c_i$ satisfies CCV-proximity and
center-separation. If a point $c\notin C_i$ marks $c_i$, then
$\exists v\in\Gamma^-(c)\cap\Gamma^-(c_i)$.
By center-separation, $c\notin\Gamma^-(c_i)$. 
Then from the definition of CCV-proximity, both $c$ and $c_i$
must be closer to $v$ than the other, causing a contradiction.
At this point, we know a point $c\in C_i$ will always be chosen by the algorithm in Phase I.
To finish the claim, we show that each point $v$ from $C_i$ is closer to $c$ than to any other point $c'\notin C_i$ chosen in Phase I.
Since $c$ and $c'$ are both CCVs, this follows directly from Lemma~\ref{lem:properties}.
However, it is possible that a center $c'\in A_{i+1}$ is closer to $v$ than $c$ is to $v$, causing $c'$ to ``steal'' $v$; this is unavoidable.
Therefore, we forbid the algorithm from decreasing the size of the Voronoi tiles of $C$ after Phase I.

Finally, we claim that Algorithm \ref{alg:asy-pr} returns each 2-SLPR cluster exactly.
Given a 2-SLPR cluster $C_i$, by our previous argument, the algorithm chooses a CCV $c\in C_i$ such that $C_i\subseteq V_c$.
It is left to show that $V_c\subseteq C_i$.
The intuition is that since $C_i$ is 2-SLPR, its neighboring clusters were also marked in Phase I, and these clusters ``shield''
$V_c$ from picking up superfluous points in Phase II.
Specifically, there are two cases. If there exists $v\in V_c\subseteq C_i$ that was marked in Phase I, then we can prove that $v$
comes from a 2-LPR cluster, so $v\in V_c$ contradicts our previous argument.
If there exists $v\in V_c\subseteq C_i$ from Phase II, then the shortest path in $D_{(S,d)}$ from $c$ to $v$ is length at least 5 (see Figure~\ref{fig:slpr}).
The first point $u'\in C_j$, $j\neq i$ on the shortest path must come from a 2-LPR cluster, 
and we prove that $v$ is closer to $C_j$'s cluster using CCV-proximity.
\end{proof}

\begin{figure}
    \centering
    \begin{subfigure}[b]{0.45\textwidth}
        \includegraphics[width=\textwidth]{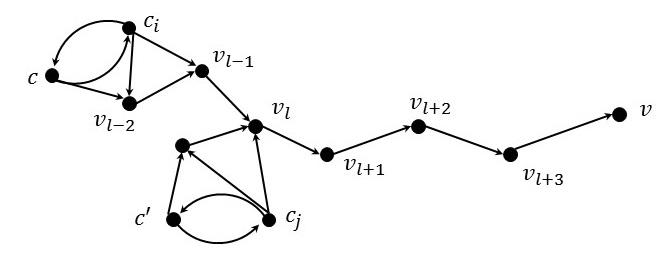}
        \caption{\small Proof of Theorem~\ref{thm:asy_local}. The arrows represent distances of $\leq r^*$.}
        \label{fig:slpr}
    \end{subfigure}
    \quad\quad\quad\quad
    \begin{subfigure}[b]{0.35\textwidth}
        \includegraphics[width=\textwidth]{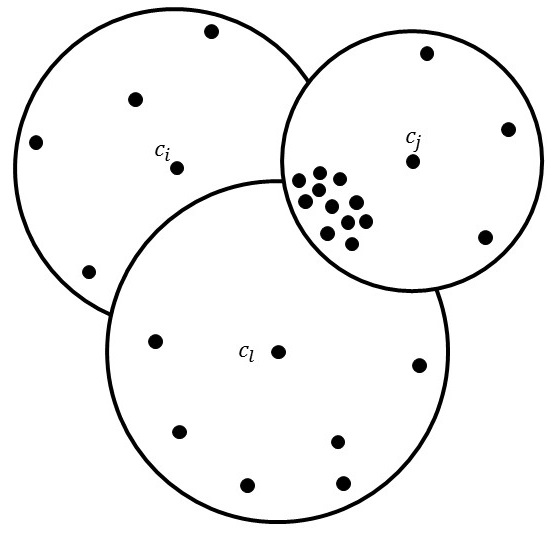}
        \caption{\small $c_i$ is a CCC for $C_j$, and $c_\ell$ is a CCC2 for $C_j$. The black disks have radius $r^*$.}
        \label{fig:ccc2}
    \end{subfigure}
    \caption{The proof of Theorem~\ref{thm:asy_local}, (left),
		and an example of a cluster-capturing center (right).}
    \label{fig:slpr-ccc2}
\end{figure}

\section{Robust local perturbation resilience} \label{sec:3eps}

In this section, we show that any 2-approximation algorithm for $k$-center returns
the optimal $(3,\epsilon)$-SLPR clusters, provided the
clusters are size $>2\epsilon n$.
Our main structural result is the following theorem.

\begin{theorem} \label{thm:3epsthm}
Given a $k$-center clustering instance $(S,d)$ with optimal radius $r^*$ such that all optimal clusters are size $>2\epsilon n$
and there are at least three $(3,\epsilon)$-LPR clusters,
then for each pair of $(3,\epsilon)$-LPR clusters $C_i$ and $C_j$,
for all $u\in C_i$ and $v\in C_j$, we have $d(u,v)>r^*$. 
\end{theorem}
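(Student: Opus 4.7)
The approach is proof by contradiction. Suppose to the contrary that there exist $(3,\epsilon)$-LPR clusters $C_i$ and $C_j$ and points $u\in C_i$, $v\in C_j$ with $d(u,v)\le r^*$. The plan is to construct a particular $3$-perturbation $d'$ of $d$ and exhibit an optimal $k$-center clustering under $d'$ containing no cluster $\epsilon$-close to $C_i$, directly contradicting the $(3,\epsilon)$-LPR property of $C_i$.

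The perturbation is built through Lemma~\ref{lem:d'_metric}. Set $d''(u,w)=\min(3r^*,3d(u,w))$ for every $w\in S$, keep $d''(x,y)=3d(x,y)$ on all other pairs, and take the metric completion $d'$. By Lemma~\ref{lem:d'_metric}, $d'$ is a valid $3$-metric perturbation with optimal cost $3r^*$. The point $u$ becomes a cluster-capturing center for $C_i\cup C_j$ in $d'$: for every $w\in C_i$, the triangle inequality through $c_i$ gives $d(u,w)\le 2r^*$, while for every $w\in C_j$, passing through $v$ and $c_j$ gives $d(u,w)\le d(u,v)+d(v,c_j)+d(c_j,w)\le 3r^*$, so in both cases $d'(u,w)\le 3r^*$. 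Hence the center set $\{u\}\cup\{c_m:m\ne i,j\}\cup\{c^*\}$ achieves cost $3r^*$ in $d'$ and is therefore optimal; here $c^*$ is an auxiliary center whose placement is guided by the third $(3,\epsilon)$-LPR cluster (the hypothesis that at least three LPR clusters exist ensures a safe choice of $c^*$ that does not by itself resolve the conflict). The remainder of the argument is a center-assignment analysis showing that in the optimal clustering induced by this set, the Voronoi cell of $u$ absorbs all of $C_i$ along with more than $\epsilon n$ points of $C_j$. Because $|C_j|>2\epsilon n$ and $|C_i|>2\epsilon n$, this $u$-cluster differs from $C_i$ in more than $\epsilon n$ points, while every other cluster in the solution contains no points of $C_i$ and so also differs from $C_i$ by more than $\epsilon n$, contradicting the $(3,\epsilon)$-LPR of $C_i$.

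The key obstacle is the final step: showing that $u$ truly captures a piece of $C_j$ of size more than $\epsilon n$ in the alternative optimal solution, rather than merely $v$ and a handful of neighbors. A naive perturbation allows points of $C_j$ to flee to their next-nearest chosen center $c_m$, where $d'(c_m,w)=3d(c_m,w)$ may still be small. Handling this requires the paper's conditional-claim machinery: one builds a list of structural properties (analogous to the cluster-capturing-center and CCC2 notions illustrated in Figure~\ref{fig:ccc2}) that any optimal center for a $(3,\epsilon)$-LPR cluster under $d'$ must satisfy, and then enumerates the consistent assignments to show that in every case either $C_i$ or $C_j$ is forced to split into pieces of size larger than $\epsilon n$. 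The three-LPR-cluster hypothesis is used precisely to rule out degenerate configurations in this case analysis in which no surplus center placement can trigger such a split. Finally, care must be taken in the metric completion step of Lemma~\ref{lem:d'_metric} to ensure that no shortest-path shortcut through the uncapped edges inadvertently reduces $d'(u,\cdot)$ or $d'(c_m,\cdot)$ in a way that invalidates the Voronoi analysis.
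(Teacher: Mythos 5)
There is a genuine gap. Your opening move --- cap the distances from $u$ so that, via Lemma~\ref{lem:d'_metric}, $u$ can serve as a single center covering $C_i\cup C_j$ at radius $3r^*$, freeing up one center --- is indeed how the paper starts (it is the opening of the second case of Lemma~\ref{lem:closepoints}). But everything after that is asserted rather than proved, and the central assertion is not true as stated: it does not follow that in the alternative optimal solution the Voronoi cell of $u$ ``absorbs all of $C_i$ along with more than $\epsilon n$ points of $C_j$.'' Points of $C_i$ can be closer (even after tripling) to some surviving center $c_m$, points of $C_j$ can likewise flee elsewhere, and --- as the paper itself stresses --- the Voronoi partition under the perturbation can ``accidentally'' reproduce the LPR clusters, in which case no contradiction arises. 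Your proposal acknowledges this obstacle and then defers to ``the paper's conditional-claim machinery,'' but that machinery \emph{is} the proof: a case analysis on whether a cluster-capturing center (CCC/CCC2) for an LPR cluster exists, Fact~\ref{fact:half} to pin down which center serves a majority of each LPR cluster, the resulting partition $S_x\cup S_y$ of non-centers with the $(3,3)$-hitting property (Lemma~\ref{lem:closepoints}), Lemma~\ref{lem:hit} to get a single $3$-perturbation under which \emph{every} $k$-subset of the $k+2$ candidate centers is optimal, the ranking formalism of Fact~\ref{fact:ranking} with the structural constraints of Lemma~\ref{lem:rank-struct}, the specific choice of $p$ as the point of an LPR cluster closest to another optimal center justified by Fact~\ref{fact:prox}, and finally a three-way case analysis ending in a pigeonhole argument. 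None of these steps is supplied or replaced in your write-up.

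Two further inaccuracies are worth flagging. First, your ``auxiliary center $c^*$ whose placement is guided by the third LPR cluster'' misdescribes the role of the three-cluster hypothesis: in the paper the third $(3,\epsilon)$-LPR cluster is not used to place any center, but is needed only in the last case of the ranking analysis, where the pigeonhole principle forces one of the two extra candidate centers $p,q$ to be ranked in the top two by two distinct LPR clusters, contradicting Lemma~\ref{lem:rank-struct}; the paper even exhibits why two LPR clusters would not suffice. Second, a single surplus center is not enough for the argument: the paper deliberately works with \emph{two} extra non-centers $p$ and $q$ (a $(3,3)$-hitting set of size $k+2$), precisely so that removing two centers at a time still leaves an optimal set and the rankings can be played against each other. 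In short, your proposal reproduces the easy first perturbation but leaves the actual contradiction --- the part that distinguishes local from global perturbation resilience --- unestablished.
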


Later in this section, we will show that both added conditions (the lower bound on the size of the clusters,
and that there are at least three $(3,\epsilon)$-LPR clusters) are necessary.
Since the distance from each point to its closest center is $\leq r^*$, a corollary 
of Theorem~\ref{thm:3epsthm} is that
any 2-approximate solution must contain the optimal $(3,\epsilon)$-SLPR clusters,
as long as the 2-approximation satisfies two sensible conditions:
\emph{(1)} for every edge $d(u,v)\leq 2r^*$ in
the 2-approximation, $\exists w$ s.t. $d(u,w)$ and $d(w,v)$ are $\leq r^*$, and \emph{(2)}
there cannot be multiple clusters outputted in the 2-approximation that can be combined into one cluster with the same radius.
Both of these properties are easily satisfied using quick pre- or post-processing steps.
\footnote{
For condition \emph{(1)}, before running the algorithm, remove all edges of distance $>r^*$, and then take the metric completion of
the resulting graph.
For condition \emph{(2)}, given the radius $\hat r$ of the
outputted solution, for each $v\in S$, check if the ball of radius $\hat r$ around $v$ captures multiple clusters. If so, combine them.}
We may also combine this result with Theorem \ref{thm:kcenter} to obtain a more powerful result for $k$-center.

\begin{theorem} \label{thm:kcenter_full}
Given a $k$-center clustering instance $(S,d)$ such that all optimal clusters are size $>2\epsilon n$
and there are at least three $(3,\epsilon)$-LPR clusters,
then any 2-approximate solution satisfying conditions \emph{(1)} and \emph{(2)} must contain all optimal 2-LPR clusters and $(3,\epsilon)$-SLPR
clusters.
\end{theorem}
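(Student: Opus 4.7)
The plan is to handle the two kinds of clusters in the theorem statement separately. For every 2-LPR cluster, Theorem~\ref{thm:kcenter} applies verbatim: any 2-approximation for $k$-center already contains each optimal 2-LPR cluster, so no new work is required there.

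For a $(3,\epsilon)$-SLPR cluster $C_i$, I would begin with an \emph{isolation} claim: every point within distance $r^*$ of $C_i$ already lies in $C_i$. Indeed, if some $v\in C_j$ with $j\ne i$ satisfied $d(u,v)\le r^*$ for some $u\in C_i$, then the SLPR hypothesis would force $C_j$ to be $(3,\epsilon)$-LPR, whereupon Theorem~\ref{thm:3epsthm} applied to the pair $(C_i,C_j)$ would yield $d(u,v)>r^*$, a contradiction.

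With isolation in hand, I would show that the approximate cluster $\hat C_i$ containing $c_i$ satisfies $\hat C_i=C_i$. For $C_i\subseteq\hat C_i$: take any $v\in C_i$ with assigned approximate center $\hat c$. Since $d(v,\hat c)\le\hat r\le 2r^*$, condition~(1) produces $w$ with $d(v,w),d(w,\hat c)\le r^*$; applying isolation twice---first at $v$ to conclude $w\in C_i$, then at $w$ to conclude $\hat c\in C_i$---shows that the center serving $v$ lies in $C_i$. If two distinct approximate centers $\hat c,\hat c'$ both lay in $C_i$, the same argument would place both their approximate clusters inside $C_i$, so they could be merged under the single center $c_i$ with radius at most $r^*\le\hat r$, contradicting condition~(2). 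Hence a unique $\hat c_i\in C_i$ exists and $C_i\subseteq\hat C_i$. The reverse inclusion $\hat C_i\subseteq C_i$ follows by the same isolation-plus-triangle argument applied to each $v\in\hat C_i$, using $\hat c_i\in C_i$ as the center.

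The main obstacle I foresee is the merging step: to invoke condition~(2) one must exhibit a vertex of $S$ at which the merged cluster has radius at most $\hat r$, and the natural choice $c_i$ is only legitimate once we have already established that the clusters being merged both lie inside $C_i$. A secondary subtlety is that the isolation step implicitly uses the fact that a $(3,\epsilon)$-SLPR cluster is itself $(3,\epsilon)$-LPR, so that Theorem~\ref{thm:3epsthm} applies to $C_i$ in the pair $(C_i,C_j)$; this should be an immediate consequence of how the robust SLPR notion is defined to strengthen LPR.
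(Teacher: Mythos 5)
Your proposal is correct and follows essentially the same route as the paper: the isolation claim (every point within $r^*$ of a $(3,\epsilon)$-SLPR cluster lies inside it, derived from the SLPR hypothesis plus Theorem~\ref{thm:3epsthm}) is exactly how the paper uses Theorem~\ref{thm:3epsthm}, condition~(1) is used to prevent any $u\in C_i$, $v\notin C_i$ from sharing an approximate cluster, condition~(2) rules out $C_i$ being split, and the 2-LPR clusters are handled by citing Theorem~\ref{thm:kcenter}. Your extra care with the merging step and with the observation that SLPR of $C_i$ entails LPR of $C_i$ itself (take $j=i$ in the SLPR definition) is sound and just fills in detail the paper leaves implicit.
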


\begin{proof}
Given such a clustering instance, then Theorem~\ref{thm:3epsthm} ensures that there is no edge of length $r^*$ between 
points from two different $(3,\epsilon)$-LPR clusters. Given a $(3,\epsilon)$-SLPR cluster $C_i$, it follows that there is no
point $v\notin C_i$ such that $d(v,C_i)\leq r^*$.
Therefore, given a 2-approximate solution $\mathcal{C}$ satisfying condition \emph{(1)}, any $u\in C_i$ and $v\notin C_i$ cannot be in the same
cluster. Furthermore, by condition \emph{(2)}, $C_i$ must not be split into two clusters.
Therefore, $C_i\in\mathcal{C}$.
The second part of the statement follows directly from Theorem \ref{thm:kcenter}.
\end{proof}

\paragraph{Proof idea for Theorem~\ref{thm:3epsthm}}
The proof consists of two parts. The first part is to show that if two points from
different LPR clusters are close together, then \emph{all} points in the clustering instance must be near each
other, in some sense (Lemma~\ref{lem:closepoints}).
The second part of the proof consists of showing that the points from three LPR clusters must be reasonably far from one
another; therefore, we achieve the final result by contradiction.

Here is the intution for part 1. Assume that there are points $u\in C_i$ and $v\in C_j$ 
from different LPR clusters, but $d(u,v)\leq r^*$.
Then by the triangle inequality, the distance from $u$ to $C_i\cup C_j$ is less than $3r^*$. We show that under a suitable
3-perturbation, we can replace $c_i$ and $c_j$ with $u$ in the set of optimal centers.
So, there is a 3-perturbation in which the optimal solution uses just $k-1$ centers.
However, as pointed out in~\cite{symmetric}, we are still a long way off from showing a contradiction.
Since the definiton of local perturbation resilience reasons about sets of $k$ centers, we must add a ``dummy center''.
But adding any point as a dummy center might not immediately result in a contradiction, if the voronoi partition
``accidentally'' outputs the LPR clusters.
To handle this problem, we use the notion of a \emph{cluster-capturing center}~\cite{symmetric},
intuitively, a center which is within $r^*$ of most of the points of a different optimal cluster (see Figure~\ref{fig:ccc2}).
This allows us to construct perturbations and control which points become centers for which clusters.
We show all of the points in the instance are close together, in some sense.

The second part of the argument diverges from all previous work in perturbation resilience,
since finding a contradiction under \emph{local} perturbation resilience poses a novel challenge.
From the previous part of the proof, we are able to find two noncenters $p$ and $q$, which are collectively close to all other points in the dataset.
Then we construct a 3-perturbation such that any size $k$ subset of $\{c_i\}_{i=1}^k\cup\{p,q\}$ is an optimal set of centers.
Our goal is to show that at least one of these subsets must break up a LPR cluster, causing a contradiction.
There are many cases to consider, so we build up conditional structural claims dictating the possible centers for each LPR cluster
under the 3-perturbation.
For instance, if a center $c_j$ is the best center for a LPR cluster $C_i$ under some set of optimal centers,
then $p$ or $q$ must be the best center for $C_j$, otherwise we would arrive at a contradiction by definition of LPR (Lemma~\ref{lem:rank-struct}).
We build up enough structural results to examine every possibility of center-cluster combinations, showing they all
lead to contradictions, thus negating our original assumption.

\paragraph{Formal analysis of Theorem~\ref{thm:3epsthm}}
Now we give the proof details for Theorem~\ref{thm:3epsthm}.
The first part of the proof resembles the argument for 
$(3,\epsilon)$-perturbation resilience by Balcan et al.~\cite{symmetric}.
We start with the following fact.

\begin{fact} \label{fact:half}
Given a $k$-center clustering instance $(S,d)$
such that all optimal clusters have size $>2\epsilon n$,
let $d'$ denote an $\alpha$-perturbation with optimal
centers $C'=\{c'_1,\dots,c'_k\}$.
Let $\mathcal{C'}$ denote the set of 
$(\alpha,\epsilon)$-LPR clusters.
Then there exists a one-to-one function $f:\mathcal{C'}\rightarrow C'$
such that for all $C_i\in\mathcal{C'}$, 
$f(C_i)$ is the center for more than half of the points in $C_i$ under $d'$.
\footnote{A non local version of this fact appeared in \cite{symmetric}.}
\end{fact}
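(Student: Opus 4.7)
The plan is to define $f$ directly from the LPR property: for each $C_i\in\mathcal{C'}$, the $(\alpha,\epsilon)$-LPR guarantee supplies a cluster $C_i'$ in the clustering under $d'$ with $|C_i\triangle C_i'|\le\epsilon n$, and I set $f(C_i)$ to be the center in $C'$ whose Voronoi tile (under $d'$) equals $C_i'$. The two properties to verify are (i) $f(C_i)$ is the nearest center under $d'$ for a strict majority of the points in $C_i$, and (ii) $f$ is injective.

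Property (i) is a short counting argument. From $|C_i\setminus C_i'|\le|C_i\triangle C_i'|\le\epsilon n$ combined with the size hypothesis $|C_i|>2\epsilon n$, I get $|C_i\cap C_i'|\ge|C_i|-\epsilon n>|C_i|/2$. Every point in $C_i\cap C_i'$ lies in the Voronoi tile of $f(C_i)$ under $d'$, so by definition $f(C_i)$ is the nearest center under $d'$ for more than half the points of $C_i$, as required.

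Property (ii) is the only place where something could go wrong, and again uses the size bound crucially. Suppose for contradiction that distinct $C_i,C_j\in\mathcal{C'}$ satisfy $f(C_i)=f(C_j)$; then $C_i'$ and $C_j'$ are the same Voronoi tile, call it $C^*$. Applying (i) to each of $C_i$ and $C_j$ gives $|C_i\cap C^*|>\epsilon n$ and $|C_j\cap C^*|>\epsilon n$. Since $C_i$ and $C_j$ are disjoint optimal clusters, $C_j\cap C^*\subseteq C^*\setminus C_i$, so $|C^*\setminus C_i|>\epsilon n$, which contradicts $|C_i\triangle C^*|\le\epsilon n$. I do not anticipate a real obstacle; the argument is just pushing the definitions through with the hypothesis $|C_i|>2\epsilon n$ invoked twice. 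The one subtlety worth flagging is that the LPR definition guarantees existence but not uniqueness of the witness $C_i'$; however, an almost identical double-counting argument shows that any two witnesses $C_i',\tilde{C}_i'$ for the same $C_i$ would force $|C_i|\le 2\epsilon n$, so $C_i'$ is in fact unique and $f$ is unambiguously defined.
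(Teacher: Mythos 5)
Your proposal is correct and takes the same approach as the paper, which only gives a one-sentence justification (``This follows simply because all optimal clusters are size $>2\epsilon n$, yet under a perturbation, fewer than $\epsilon n$ points can switch out of each LPR cluster''). You have filled in the counting details for both the majority claim and injectivity, and correctly noted that the witness $C_i'$ is in fact unique by the same argument, so $f$ is well-defined.
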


In words, for any set of optimal centers under an $\alpha$-perturbation, 
each LPR cluster can be paired to a unique center.
This follows simply because all optimal clusters are size $>2\epsilon n$,
yet under a perturbation, $<\epsilon n$ points can switch out of each LPR cluster.
Next, we give the following definition, which will be a key point in the first part of the proof.

\begin{definition} \cite{symmetric} \label{def:CCC}
A center $c_i$ is a \emph{first-order cluster-capturing center} (CCC) 
for $C_j$ if for all $x\neq j$,
for all but $\epsilon n$ points
$v\in C_j$, $d(c_i,v)< d(c_x,v)$ and
$d(c_i,v)\leq r^*$.
$c_i$ is a \emph{second-order cluster-capturing center} (CCC2) 
for $C_j$ if there exists an $\ell$ such that for all $x\neq j,\ell$,
for all but $\epsilon n$ points
$v\in C_j$, $d(c_i,v)< d(c_x,v)$
and $d(c_i,v)\leq r^*$.
Then we say that $c_i$ is a CCC2 for $c_j$ discounting $c_\ell$.
See Figure~\ref{fig:ccc2}.
\end{definition}

Intuitively, a center $c_x$ is a CCC for $C_y$ if $c_x$ is a valid
center for $C_y$ when $c_y$ is removed from the set of optimal centers.
This is particularly useful when $C_y$ is $(\alpha,\epsilon)$-LPR,
since we can combine it with Fact~\ref{fact:half} to show that
$c_x$ is the unique center for the majority of points in $C_y$.
Another key idea in our analysis is the following concept.

\begin{definition} \label{def:hit}
A set $C\subseteq S$ $(\beta,\gamma)$-hits $S$ if for all $s\in S$,
there exist $\beta$ points in $C$ at distance $\leq\gamma r^*$ to $s$.
\end{definition}

We present the following lemma to demonstrate the usefulness of Definition~\ref{def:hit},
although this lemma will not be used until the second half of the proof of Theorem~\ref{thm:3epsthm}.

\begin{lemma} \label{lem:hit}
Given a $k$-center clustering instance $(S,d)$, given $z\geq 0$,
and given a set $C\subseteq S$ of size $k+z$ which $(z+1,\alpha)$-hits $S$,
there exists an $\alpha$-perturbation $d'$ such that all size $k$ subsets of $C$ are optimal sets of centers under $d'$.
\end{lemma}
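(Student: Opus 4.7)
The plan is to exhibit an explicit $\alpha$-perturbation $d'$ under which every size-$k$ subset of $C$ covers $S$ within radius $\alpha r^*$, matching the optimum. The construction slots directly into the framework of Lemma~\ref{lem:d'_metric}: shorten a carefully chosen set of ``useful'' $C$-to-$S$ edges down to at most $\alpha r^*$, inflate every other distance by the factor $\alpha$, and then take the metric completion.

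Concretely, for each $s\in S$, invoke the $(z+1,\alpha)$-hit hypothesis to fix a witness set $N(s)\subseteq C$ of size $z+1$ with $d(c,s)\le \alpha r^*$ for every $c\in N(s)$. Define
\[
d''(c,s) \;=\; d''(s,c) \;=\; \min\{\alpha r^*,\; \alpha d(c,s)\} \qquad \text{whenever } s\in S, \ c\in N(s),
\]
and set $d''(u,v)=\alpha d(u,v)$ for all other unordered pairs. Then $d''$ is a valid $\alpha$-perturbation: the only nontrivial lower bound $d(c,s)\le \min\{\alpha r^*,\alpha d(c,s)\}$ is exactly the content of the hit condition $d(c,s)\le\alpha r^*$, while $\min\{\alpha r^*,\alpha d(c,s)\}\le \alpha d(c,s)$ is immediate. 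Because $d''$ fits the hypothesis of Lemma~\ref{lem:d'_metric}, the metric completion $d'$ of $d''$ is an $\alpha$-metric perturbation of $d$ whose optimal $k$-center radius equals $\alpha r^*$.

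To finish, fix any size-$k$ subset $C'\subseteq C$ and let $R=C\setminus C'$, so $|R|=z$. For each $s\in S$ we have $|N(s)|=z+1>|R|$, so some $c^\star\in N(s)\cap C'$ exists. The useful-edge rule and the fact that metric completion can only decrease distances give $d'(c^\star,s)\le d''(c^\star,s)\le\alpha r^*$. Hence $C'$ covers every point of $S$ at radius $\alpha r^*$ under $d'$, so $C'$ is optimal.

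I do not foresee any substantial obstacle: Lemma~\ref{lem:d'_metric} already shoulders the burden of verifying that the optimal cost under such a perturbation is $\alpha r^*$, and the $(z+1,\alpha)$-hit hypothesis is precisely the pigeonhole strength needed so that after deleting any $z$ centers from $C$, each $s\in S$ still has a short useful edge to the surviving subset $C'$. The only care required is bookkeeping the min-shortening so that the $d(c,s)\le\alpha r^*$ inequality is used exactly where needed to keep $d''\ge d$ pointwise.
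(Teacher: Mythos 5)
Your proof is correct and follows essentially the same route as the paper's: both constructions shrink $C$-to-$S$ edges of length at most $\alpha r^*$ down to $\min(\alpha r^*,\alpha d)$, apply Lemma~\ref{lem:d'_metric} to get an $\alpha$-metric perturbation with optimum $\alpha r^*$, and then use the pigeonhole consequence of $(z+1,\alpha)$-hitting to show any size-$k$ subset of $C$ still covers $S$ at radius $\alpha r^*$. The only cosmetic difference is that the paper shrinks \emph{all} qualifying edges from $C$ whereas you shrink only a chosen set of $z+1$ witness edges per point; both instantiations satisfy the hypothesis of Lemma~\ref{lem:d'_metric} and yield the same conclusion.
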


\begin{proof}
Consider the following perturbation $d''$.

\begin{equation*}
d''(s,t)=
\begin{cases}
\min( \alpha r^*, \alpha d(s,t) ) & \text{if } s\in C \text{ and } d(s,t)\leq \alpha r^*  \\
\alpha d(s,t) & \text{otherwise.}
\end{cases}
\end{equation*}

This is an $\alpha$-perturbation by construction.
Define $d'$ as the metric completion of $d''$.
Then by Lemma \ref{lem:d'_metric}, $d'$ is an $\alpha$-metric perturbation with
optimal cost $\alpha r^*$. Given any size $k$ subset $C'\subseteq C$, 
then for all $v\in S$, there is still at least one $c\in C'$ such that
$d(c,v)\leq \alpha r^*$, therefore by construction, $d'(c,v)\leq \alpha r^*$.
It follows that $C'$ is a set of optimal centers under $d'$.
\end{proof}

Now we are ready to prove the first half of Theorem \ref{thm:3epsthm},
stated in the following lemma. For the full details, see Appendix~\ref{app:3eps}.

\begin{lemma} \label{lem:closepoints}
Given a $k$-center clustering instance $(S,d)$
such that all optimal clusters are size $>2\epsilon n$
and there exist two points at distance $r^*$
from different $(3,\epsilon)$-LPR clusters,
then there exists a partition $S_x\cup S_y$ of the non-centers $S\setminus\{c_\ell\}_{\ell=1}^k$ 
such that for all pairs $p\in S_x$, $q\in S_y$, $\{c_\ell\}_{\ell=1}^k\cup\{p,q\}$ $(3,3)$-hits $S$.
\end{lemma}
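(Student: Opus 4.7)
The plan is to leverage the 3-metric perturbation construction of Lemma~\ref{lem:d'_metric} together with Fact~\ref{fact:half} and the cluster-capturing-center definitions (Definition~\ref{def:CCC}) to label each non-center by its geometric affinity toward $C_i$ or $C_j$, and then to use those affinities to define the partition $S_x\cup S_y$. Fix $u\in C_i$ and $v\in C_j$ with $d(u,v)\leq r^*$, both $(3,\epsilon)$-LPR. Two applications of the triangle inequality yield $d(u,s)\leq 2r^*$ for every $s\in C_i$ and $d(u,s)\leq 3r^*$ for every $s\in C_j$ (routed through $v$), and symmetrically for $v$. Hence $\{c_\ell\}_{\ell=1}^k\cup\{u,v\}$ already $(3,3)$-hits all of $C_i\cup C_j$, and the main task is to produce a partition whose every cross-pair $(p,q)$ reproduces this coverage.

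Next, I would invoke Lemma~\ref{lem:d'_metric} to construct a 3-metric perturbation $d'$ that caps distances from $u$ and from each $c_\ell$ ($\ell\neq i,j$) to points originally within $3r^*$, and scales all remaining distances by $3$. Under $d'$ the optimal cost is $3r^*$, already attained by only $k-1$ centers $\{u\}\cup\{c_\ell:\ell\neq i,j\}$; hence for every non-center $p$, $\{u,p\}\cup\{c_\ell:\ell\neq i,j\}$ is optimal under $d'$. Fact~\ref{fact:half} then yields distinct $f(C_i),f(C_j)$ in this set, each capturing a strict majority of the corresponding LPR cluster. Since $u$ can fill at most one role, at least one of $f(C_i),f(C_j)$ lies in $\{p\}\cup\{c_\ell:\ell\neq i,j\}$. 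When $p\in\{f(C_i),f(C_j)\}$, the uncapped status of $p$ in $d'$ forces $d(p,s_0)\leq r^*$ for at least one $s_0$ in the corresponding LPR cluster, which via the triangle inequality through $c_i$ or $c_j$ propagates to $d(p,s)\leq 3r^*$ for \emph{every} $s$ in that cluster. In the residual subcase $f(C_i),f(C_j)\in\{c_\ell:\ell\neq i,j\}$, a secondary 3-perturbation that selectively uncaps one of the offending $c_\ell$'s (while respecting the $(3,\epsilon)$-LPR of $C_i,C_j$) forces $p$ back into a direct LPR-serving role.

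This labeling partitions the non-centers into those forced close to $C_i$ and those forced close to $C_j$; call them $S_y$ and $S_x$ respectively, breaking ties arbitrarily. The $(3,3)$-hitting property for any $p\in S_x,q\in S_y$ then follows by cases: for $s\in C_i$ the witnesses are $c_i$, $q$ (within $3r^*$ by the labeling), and a third drawn from the CCC-type $c_\ell$'s surfaced during the perturbation analysis; for $s\in C_j$ symmetric; and for $s\in C_m$ with $m\neq i,j$, $c_m$ combined with two witnesses from $\{c_\ell\}\cup\{p,q\}$ arising from a CCC/CCC2 analysis applied to $C_m$ under the same framework.

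The main obstacle will be the residual subcase where both $f(C_i),f(C_j)\in\{c_\ell:\ell\neq i,j\}$, so $p$ has no direct LPR-serving role. The secondary-perturbation trick must be calibrated carefully to respect the $(3,\epsilon)$-LPR hypothesis of $C_i,C_j$ while still forcing a geometric constraint on $p$, and the resulting labeling must be consistent across all non-centers simultaneously rather than point-by-point. Equally delicate is extending the $(3,3)$-hitting verification to points in non-LPR clusters, which requires extracting enough CCC/CCC2 structure on the $c_\ell$'s from the perturbation analysis. Reconciling these case-by-case forcings into a single partition that uniformly guarantees $(3,3)$-hitting across all cross-pairs is the technical heart of the lemma.
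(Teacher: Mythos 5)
Your proposal reaches the right intermediate goal (use Lemma~\ref{lem:d'_metric} plus Fact~\ref{fact:half} to force every non-center into a ``serving'' role for one of two clusters, then partition accordingly), but it has a genuine gap exactly where you flag ``the main obstacle,'' and your sketch of how to close it would not work. The paper resolves the subcase $f(C_i),f(C_j)\in\{c_\ell:\ell\neq i,j\}$ by a case split that you are missing: if some other center is a CCC (or CCC2 discounting an LPR center) for an LPR cluster, then one should \emph{abandon} the pair $(C_i,C_j)$ altogether and build the partition relative to the CCC's own cluster $C_x$ and the LPR cluster $C_y$ it captures (uncapping only $c_x$ toward $C_y$ and removing only $c_y$); only in the complementary case, where no such CCC2 exists, does the perturbation anchored at the close pair force every non-center to be within $r^*$ of a majority of $C_i$ or $C_j$. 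Your ``secondary 3-perturbation that selectively uncaps one of the offending $c_\ell$'s'' cannot force an arbitrary non-center $p$ close to $C_i$ or $C_j$: when a genuine CCC for $C_i$ exists, $p$ simply need not be close to either cluster, and the partition relative to $C_i,C_j$ is not the right one to construct.

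There are two further problems. First, your perturbation caps the distances from \emph{every} $c_\ell$ ($\ell\neq i,j$) to points within $3r^*$; this destroys the back-translation that makes Fact~\ref{fact:half} useful. In the paper's perturbation all distances from the $c_\ell$'s are scaled by exactly $3$, so ``$c_\ell$ is the center of the cluster $\epsilon$-close to $C_i$ under $d'$'' implies $d(c_\ell,\cdot)\leq r^*$ for a majority of $C_i$ and yields CCC/CCC2 structure (or a contradiction); under your capping, it only implies original distance $\leq 3r^*$, from which nothing can be extracted. Second, the verification that $\{c_\ell\}_{\ell=1}^k\cup\{p,q\}$ $(3,3)$-hits $S$ is not routine for points of clusters $C_m$ with $m\neq i,j$ and for the centers themselves: in the paper's first case it uses $d(c_x,c_y)\leq 2r^*$, and in the second case it requires showing that each of $S_x$ and $S_y$ contains points from at least two distinct clusters (proved by running the perturbation argument with two non-centers from different clusters); your appeal to ``CCC-type $c_\ell$'s surfaced during the perturbation analysis'' does not supply these third witnesses.
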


\begin{sproof}
This proof is split into two main cases. The first case is the following: there exists a CCC2 for a
$(3,\epsilon)$-LPR cluster, discounting a $(3,\epsilon)$-LPR cluster.
In fact, in this case, we do not need the assumption that two points from different LPR clusters are close.
If there exists a CCC to a $(3,\epsilon)$-LPR cluster,
denote the CCC by $c_x$ and the cluster by $C_y$. Otherwise, let $c_x$ denote a CCC2
to a $(3,\epsilon)$-LPR cluster $C_y$, discounting a $(3,\epsilon)$-LPR center $c_z$.
Then $c_x$ is at distance $\leq r^*$ to all but
$\epsilon n$ points in $C_y$. Therefore, $d(c_x,c_y)\leq 2r^*$ and so $c_x$ is at
distance $\leq 3r^*$ to all points in $C_y$.
Now we can create a 3-perturbation $d'$ by increasing all distances by a factor of 3 except for the distances
between $c_x$ and each point $v\in C_y$, which we increase to $\min(3r^*,3d(c_x,v))$.
Then by Lemma~\ref{lem:d'_metric}, $d'$ is a 3-perturbation with optimal cost $3 r^*$.
Therefore, given any non-center $v\in S$, the set of centers $\{c_\ell\}_{\ell=1}^k\setminus\{c_y\}\cup\{v\}$ achieves the optimal
score, and from Fact~\ref{fact:half}, one of the centers in 
$\{c_\ell\}_{\ell=1}^k\setminus\{c_y\}\cup\{v\}$ must be the center for the majority of points in $C_y$ under $d'$.
If this center is $c_\ell$, $\ell\neq x,y$, 
then by definition, $c_\ell$ is a CCC for the $(3,\epsilon)$-LPR cluster, $C_y$, which creates a contradiction because $\ell\neq x$.
Therefore, either $v$ or $c_x$ must be the center for the majority of points in $C_y$ under $d'$.

If $c_x$ is the center for the majority of points in $C_y$, 
then because $C_y$ is $(3,\epsilon)$-LPR, the corresponding cluster must contain
fewer than $\epsilon n$ points from $C_x$. 
Furthermore, since for all $\ell\neq x$ and $u\in C_x$, $d(u,c_x)<d(u,c_\ell)$,
it follows that $v$ must be the center
for the majority of points in $C_x$.
Therefore, every non-center $v\in S$ is at distance $\leq r^*$ to the majority of points in
either $C_x$ or $C_y$.

Now partition all the non-centers into two sets $S_x$ and $S_y$, such that
$S_x=\{p\mid \text{for the majority of points }q\in C_x,\text{ }d(p,q)\leq r^*\}$ and
$S_y=\{p\mid p\notin S_x\text{ and for the majority of points }q\in C_y,\text{ }d(p,q)\leq r^*\}$.
Given $p,q\in S_x$, then $d(p,q)\leq 2r^*$ since both points are close to more than half the points
in $C_x$. Similarly, any two points $p,q\in S_y$ are $\leq 2r^*$ apart.

Now we claim that $\{c_\ell\}_{\ell=1}^k\cup\{p,q\}$ $(3,3)$-hits $S$ for any pair $p\in S_x$, $q\in S_y$.
This is because a point $v\in C_i$ from $S_x$ is $3r^*$ to $p$, $c_i$, and $c_x$
and a point $v\in C_x$ is $3r^*$ to $c_x$, $c_y$, and $p$.
The latter follows because $d(c_x,c_y)\leq 2r^*$.
Similar statements are true for $S_y$ and $C_y$.

Now we turn to the other case.
Assume there does not exist a CCC2 to a LPR cluster, discounting a LPR center. 
In this case, we need to use the assumption that there exist $(3,\epsilon)$-LPR clusters $C_x$ and $C_y$,
and $p\in C_x$, $q\in C_y$ such that $d(p,q)\leq r^*$.
Then by the triangle inequality, $p$ is distance $\leq 3r^*$
to all points in $C_x$ and $C_y$.
Again we construct a 3-perturbation $d'$ by increasing all distances by a factor of 3 except for the distances
between $p$ and $v\in C_x\cup C_y$, which we increase to $\min(3r^*,3d(s,t))$.
By Lemma \ref{lem:d'_metric}, $d'$ has optimal cost $3 r^*$.
Then given any non-center $s\in S$, the set of centers $\{c_\ell\}_{\ell=1}^k\setminus\{c_x,c_y\}\cup\{p,s\}$ achieves the optimal
score.

From Fact~\ref{fact:half}, one of the centers in 
$\{c_\ell\}_{\ell=1}^k\setminus\{c_x,c_y\}\cup\{p,s\}$ must be the center for
the majority of points in $C_x$ under $d'$.
If this center is $c_\ell$ for $\ell\neq x,y$, 
then $c_\ell$ is a CCC2 for $C_x$ discounting $c_y$, which contradicts our assumption.
Similar logic applies to the center for the majority of points in $C_y$.
Therefore, $p$ and $s$ must be the centers for $C_x$ and $C_y$.
Since $s$ was an arbitrary non-center, all non-centers are 
distance $\leq r^*$ to all but $\epsilon n$ points in either $C_x$ or $C_y$.

Similar to Case 1, we now partition all the non-centers into two sets $S_x$ and $S_y$, such that
$S_x=\{u\mid \text{for the majority of points }v\in C_x,\text{ }d(u,v)\leq r^*\}$ and
$S_y=\{u\mid u\notin S_x\text{ and for the majority of points }v\in C_y,\text{ }d(u,v)\leq r^*\}$.
As before, each pair of points in $S_x$ are distance $\leq 2r^*$ apart, and similarly for $S_y$.

Again we must show that $\{c_\ell\}_{\ell=1}^k\cup\{p,q\}$ $(3,3)$-hits $S$ for each pair $p\in S_x$, $q\in S_y$.
It is no longer true that $d(c_x,c_y)\leq 2r^*$, however, we can prove that for both $S_x$ and $S_y$,
there exist points from two distinct clusters each. From the previous paragraph, given a non-center
$s\in C_i$ for $i\neq x,y$, we know that $p$ and $s$ are centers for $C_x$ and $C_y$.
With an identical argument, given $t\in C_j$ for $j\neq x,y,i$, we can show that $q$ and $t$ are centers for $C_x$ and $C_y$.
It follows that $S_x$ and $S_y$ both contain points from at least two distinct clusters.
Now given a non-center $s\in C_i$, WLOG $s\in S_x$, then there exists $j\neq i$ and $t\in C_j\cap S_x$.
Then $c_i$, $c_j$, and $p$ are $3r^*$ to $s$ and $c_i$, $c_x$, and $p$ are $3r^*$ to $c_i$.
In the case where $i=x$, then $c_i$, $c_j$, and $p$ are $3r^*$ to $c_i$.
This concludes the proof.
\end{sproof}

Now we move to the second half of the proof of Theorem \ref{thm:3epsthm}.
Recall that our goal is to show a contradiction assuming two points from different LPR clusters are close.
From Lemma~\ref{lem:hit} and Lemma~\ref{lem:closepoints},
we know there is a set of $k+2$ points, and any size $k$ subset is optimal under a suitable perturbation.
And by Lemma~\ref{fact:half}, each size $k$ subset must have a mapping from LPR clusters to centers.
Now we state a fact which states these mappings are derived from a ranking of all possible center points by the LPR clusters.
In other words, each LPR cluster $C_x$ can rank all the points in $S$, so that for any set of optimal
centers for an $\alpha$-perturbation, the top-ranked center is the one whose cluster is $\epsilon$-close to $C_x$.
We defer the proof to Appendix~\ref{app:3eps}.

\begin{fact} \label{fact:ranking}
Given a $k$-center clustering instance $(S,d)$
such that all optimal clusters have size $>2\epsilon n$,
and an $\alpha$-perturbation $d'$ of $d$,
let $\mathcal{C'}$ denote the set of 
$(\alpha,\epsilon)$-LPR clusters.
For each $C_x\in\mathcal{C'}$, there exists a bijection $R_{x,d'}:S\rightarrow[n]$
such that for all sets of $k$ centers $C$ that achieve the optimal cost under $d'$, 
then $c=\text{argmin}_{c'\in C} R_{x,d'}(c')$ if and only if $\text{Vor}_C(c)$ is $\epsilon$-close to $C_x$.
\end{fact}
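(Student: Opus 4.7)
The plan is to construct a strict total order $\prec$ on $S$ and define $R_{x,d'}(v)$ to be the rank of $v$ in this order. First I would show that for each $k$-center set $C$ achieving the optimal cost under $d'$, there is a unique $\phi_x(C)\in C$ whose Voronoi tile $\text{Vor}_C(\phi_x(C))$ is $\epsilon$-close to $C_x$. Existence is immediate from $C_x$ being $(\alpha,\epsilon)$-LPR, which guarantees the optimal clustering under $d'$ (the Voronoi partition induced by $C$) contains a cluster $\epsilon$-close to $C_x$. Uniqueness uses $|C_x|>2\epsilon n$: two distinct Voronoi tiles are disjoint, and each one $\epsilon$-close to $C_x$ would have to contain more than $|C_x|/2$ points of $C_x$, a contradiction.

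Next I would declare $a\prec' b$ whenever some optimal $C$ under $d'$ contains both $a$ and $b$ with $\phi_x(C)=a$, and take $\prec$ to be any linear extension of the transitive closure of $\prec'$ (completing arbitrarily on pairs that never co-occur in an optimal center set). Given such a $\prec$, the required biconditional is immediate: for every optimal $C$ and every $c\in C\setminus\{\phi_x(C)\}$ we have $\phi_x(C)\prec c$ by construction, so the $\prec$-minimum of $C$ is $\phi_x(C)$, which by the uniqueness above is the only element $c\in C$ with $\text{Vor}_C(c)$ $\epsilon$-close to $C_x$. Defining $R_{x,d'}(v)$ as the rank of $v$ in $\prec$ then yields the desired bijection into $[n]$.

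The main obstacle will be showing that $\prec'$ is acyclic, so that a linear extension exists. The 2-cycle case follows cleanly from Fact~\ref{fact:half}: if $\phi_x(C_1)=a$ and $\phi_x(C_2)=b$ with $a,b\in C_1\cap C_2$, then $M_1=\{u\in C_x: u\in \text{Vor}_{C_1}(a)\}$ has size $>|C_x|/2$ (every such $u$ satisfying $d'(a,u)\leq d'(b,u)$ under the Voronoi tie-breaking in $C_1$), and symmetrically $M_2=\{u\in C_x: u\in\text{Vor}_{C_2}(b)\}$ has size $>|C_x|/2$; these two sets must share some $u$, which under consistent Voronoi tie-breaking cannot simultaneously lie in $\text{Vor}_{C_1}(a)$ and $\text{Vor}_{C_2}(b)$, a contradiction. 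For a cycle of length $m\geq 3$ the analogous $m$-wise intersection only has size $\geq |C_x|-m\epsilon n$, which the hypothesis $|C_x|>2\epsilon n$ need not render positive. To close this gap I would combine the $m$ witness center sets $C_1,\ldots,C_m$ via swap arguments — attempting to build an auxiliary optimal center set that contains several of the cycle elements $a_i$ at once — to reduce a longer cycle to a shorter one, and ultimately to the already-handled 2-cycle case.
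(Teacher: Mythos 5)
The part of your argument that you actually complete is, essentially verbatim, the paper's entire proof. The paper argues by contradiction: if no valid ranking exists, then there are two distinct points $u,v$ and two center sets $C,C'$, both optimal under $d'$ and both containing $u$ and $v$, such that $u$ is the designated center for $C_x$ in $C$ while $v$ is designated in $C'$; since each designated center's Voronoi tile is $\epsilon$-close to $C_x$, each of $u,v$ is closer than the other to all but $\epsilon n$ points of $C_x$, and $|C_x|>2\epsilon n$ makes the two majority sets intersect. That is exactly your existence/uniqueness step for $\phi_x(C)$ together with your $2$-cycle exclusion (the paper does not even fuss over tie-breaking). In other words, the paper implicitly treats pairwise consistency of the map $C\mapsto\phi_x(C)$ as sufficient to produce the bijection $R_{x,d'}$; it never discusses longer cycles in your relation $\prec'$.

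The genuine gap is therefore the one you flag yourself and do not close: for your construction via a linear extension of $\prec'$, you need acyclicity of all cycle lengths, and for $m\geq 3$ the counting bound $|C_x|-m\epsilon n$ is vacuous under the hypothesis $|C_x|>2\epsilon n$, so the majority-overlap argument alone cannot finish. Your proposed fix --- ``swap arguments'' combining the witness center sets $C_1,\dots,C_m$ into an auxiliary optimal center set containing several cycle elements --- is only a plan: for arbitrary optimal center sets under an arbitrary $\alpha$-perturbation you have no analogue of Lemma~\ref{lem:hit} guaranteeing that the swapped set is still optimal, so it is not clear the reduction from an $m$-cycle to a shorter cycle can be carried out, and you give no argument for it. Note that in the only place the fact is used (Lemma~\ref{lem:rank-struct}), the relevant center sets are the size-$k$ subsets of one fixed $(k+2)$-point set, all of which are optimal by Lemma~\ref{lem:hit}; that much richer co-occurrence structure (any subfamily of cycle elements can be completed to an optimal center set) is where the leverage for your missing step would most plausibly come from, and proving the statement in that restricted form would suffice for the downstream argument.
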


For an LPR cluster $C_x$ and a subset $S'\subseteq S$ of size $n'$, 
we also define $R_{x,d',S'}:S'\rightarrow [n']$
as the ranking specific to $S'$.
Now, using Fact~\ref{fact:ranking} with the previous Lemmas, we can try to give a contradiction by showing that there is no
set of rankings for the LPR clusters that is consistent with all the optimal sets of centers guaranteed by Lemmas~\ref{lem:hit} and~\ref{lem:closepoints}.
The following lemma gives relationships among the possible rankings.
These will be our main tools for contradicting LPR and thus finishing the proof of Theorem~\ref{thm:3epsthm}.
Again, the proof is provided in Appendix~\ref{app:3eps}.

\begin{lemma} \label{lem:rank-struct}
Given a $k$-center clustering instance $(S,d)$
such that all optimal clusters are size $>2\epsilon n$, 
and given non-centers $p,q\in S$ such that
$C=\{c_\ell\}_{\ell=1}^k\cup\{p,q\}$ $(3,3)$-hits $S$,
let the set $\mathcal{C'}$ denote the set of 
$(3,\epsilon)$-LPR clusters.
Define the 3-perturbation $d'$ as in Lemma \ref{lem:hit}. 
The following are true.
\begin{enumerate}
\item Given $C_x\in \mathcal{C'}$ and $C_i$ such that $i\neq x$, $R_{x,d'}(c_x)<R_{x,d'}(c_i)$.
\item There do not exist $s\in C$ and $C_x,C_y\in\mathcal{C'}$ such that $x\neq y$, and $R_{x,d',C}(s)+R_{y,d',C}(s)\leq 4$.
\item Given $C_i$ and $C_x,C_y\in\mathcal{C'}$ such that $x\neq y\neq i$,
if $R_{x,d',C}(c_i)\leq 3$, then $R_{y,d',C}(p)\geq 3$ and $R_{y,d',C}(q)\geq 3$.
\end{enumerate}
\end{lemma}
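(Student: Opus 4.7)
I will prove the three parts in order, using Lemma~\ref{lem:hit} (every size-$k$ subset of $C$ is optimal under $d'$), Fact~\ref{fact:ranking} (the top-ranked center of an LPR cluster in any optimal set of $k$ centers has its Voronoi tile $\epsilon$-close to that cluster), and the bound from Lemma~\ref{lem:d'_metric} that $d(u,v) \geq r^*$ implies $d'(u,v) \geq 3r^*$.

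For Part 1, I apply Fact~\ref{fact:ranking} to $C^* = \{c_\ell\}_{\ell=1}^k \subseteq C$, which is optimal under $d'$ by Lemma~\ref{lem:hit}. Lemma~\ref{lem:d'_metric} ensures that for $v \in C_x$, $c_x$ remains the closest original center in $d'$: any $c_j$, $j \neq x$, with $d(c_j,v) \geq r^*$ satisfies $d'(c_j,v) \geq 3r^* \geq d'(c_x,v)$, while any $c_j$ with $d(c_j,v) < r^*$ still satisfies $d(c_x,v) \leq d(c_j,v)$ by the original Voronoi structure, so $d'(c_x,v) \leq d'(c_j,v)$. Hence $\text{Vor}_{C^*}(c_x)$ is $\epsilon$-close to $C_x$, and Fact~\ref{fact:ranking} forces $c_x = \arg\min_{c \in C^*} R_{x,d'}(c)$, yielding $R_{x,d'}(c_x) < R_{x,d'}(c_i)$ for every $i \neq x$.

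For Part 2, I argue by contradiction. If $R_{x,d',C}(s) + R_{y,d',C}(s) \leq 4$, then the union of elements of $C$ ranked strictly ahead of $s$ in either ranking has size at most $(R_{x,d',C}(s)-1) + (R_{y,d',C}(s)-1) \leq 2$. Since $|C| = k+2$, I pad this set to a size-two removal set disjoint from $\{s\}$, forming a size-$k$ subset $C' \subseteq C$ in which $s$ is top-ranked under both $R_{x,d',C'}$ and $R_{y,d',C'}$. By Lemma~\ref{lem:hit}, $C'$ is optimal, so Fact~\ref{fact:ranking} forces $\text{Vor}_{C'}(s)$ to be $\epsilon$-close to both $C_x$ and $C_y$. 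The triangle inequality on symmetric differences then gives $|C_x \triangle C_y| \leq 2\epsilon n$, contradicting $|C_x \triangle C_y| = |C_x| + |C_y| > 4\epsilon n$ (by disjointness and the size lower bound on optimal clusters).

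For Part 3, assume for contradiction $R_{y,d',C}(p) \leq 2$ (the argument for $q$ is symmetric). The plan is to build a size-$k$ subset $C' \subseteq C$ in which $c_i$ is top-ranked for $C_x$ and $p$ is top-ranked for $C_y$. Applying Part~2 to $p$ gives $R_{x,d',C}(p) \geq 3$, and applying it to $c_i$ (using $R_{x,d',C}(c_i) \leq 3$) rules out $R_{y,d',C}(c_i) \leq 1$; thus neither $p$ lies strictly above $c_i$ in $R_{x,d',C}$ nor $c_i$ lies strictly above $p$ in $R_{y,d',C}$. A short case analysis on $R_{x,d',C}(c_i) \in \{2,3\}$, together with Part~1 placing $c_x$ in the top two of $R_{x,d',C}$, shows that the union of ``must-remove'' elements (those strictly ahead of $c_i$ for $C_x$ and strictly ahead of $p$ for $C_y$) has size at most two, so a valid $C'$ exists. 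Fact~\ref{fact:ranking} then yields $\text{Vor}_{C'}(c_i)$ is $\epsilon$-close to $C_x$ and $\text{Vor}_{C'}(p)$ is $\epsilon$-close to $C_y$. The contradiction comes from showing $\text{Vor}_{C'}(c_i)$ contains more than $\epsilon n$ points of $C_i$, which is disjoint from $C_x$: by Lemma~\ref{lem:d'_metric} together with the $d$-Voronoi dominance of $c_i$ over other $c_j \in C'$, most of $C_i$ falls into $\text{Vor}_{C'}(c_i)$ unless stolen by $p$ or $q$; the $\epsilon$-closeness of $\text{Vor}_{C'}(p)$ to $C_y$ caps $p$'s interference by $<\epsilon n$, and the removal set for $C'$ is chosen to exclude $q$ whenever possible. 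The hardest sub-case, and my expected main obstacle, is when $q$ necessarily remains in $C'$; there I would fall back on a refined analysis, either invoking Fact~\ref{fact:half} and injectivity of the LPR-cluster-to-center assignment to constrain $\text{Vor}_{C'}(q)$, or (when $C_i \in \mathcal{C}'$) using a Part-1-style argument that makes $c_i$ top-ranked for $C_i$ and yields an immediate contradiction.
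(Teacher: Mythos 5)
Your Parts 1 and 2 are correct and essentially the paper's own argument (remove the at most two elements ranked ahead of $s$, apply Fact~\ref{fact:ranking} twice, and contradict disjointness of two clusters of size $>2\epsilon n$; you merely spell out the final counting step that the paper leaves implicit).

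Part 3 is where the gap is, and it is precisely where you deviate from the paper by trying to use a \emph{single} optimal subset $C'$ in which $c_i$ is top for $C_x$ and $p$ is top for $C_y$. Two concrete failures. First, the claim that the union of ``must-remove'' elements has size at most two is false: the rankings $R_{x,d',C}\colon c_x$ first, $q$ second, $c_i$ third, and $R_{y,d',C}\colon c_y$ first, $p$ second are consistent with Parts 1 and 2 and with your reductions ($R_{x,d',C}(p)\geq 3$, $R_{y,d',C}(c_i)\geq 2$), yet making $c_i$ top for $C_x$ and $p$ top for $C_y$ would require deleting $c_x$, $q$ and $c_y$ --- three elements, when only two may be removed. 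Second, even when such a $C'$ exists it typically retains $q$ (this happens whenever $c_y$ is ranked above $p$ for $C_y$, e.g.\ already when $R_{x,d',C}(c_i)=2$ and one removes $\{c_x,c_y\}$), and then the intended contradiction evaporates: nothing forces $\text{Vor}_{C'}(c_i)$ to keep more than $\epsilon n$ points of $C_i$, because $q$ may absorb nearly all of $C_i$ --- indeed the paper's proof shows exactly that this is what happens. Your fallbacks do not repair this: $C_i$ need not be an LPR cluster, so the ``$C_i\in\mathcal{C}'$'' branch is unavailable, and Fact~\ref{fact:half} by itself does not keep $\text{Vor}_{C'}(q)$ away from $C_i$.

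The paper's proof avoids both problems by decoupling the argument into \emph{two} optimal subsets. After reducing (by relabeling, when some other center $c_j$ occupies rank two for $C_x$) to the case where the elements above $c_i$ in $R_{x,d',C}$ lie in $\{c_x,p,q\}$, it applies Fact~\ref{fact:ranking} to $C\setminus\{c_x,p\}$: there $c_i$ is top for $C_x$, so its tile is $\epsilon$-close to $C_x$ and misses all but $\epsilon n$ of $C_i$; since $c_i$ beats every other original center on $C_i$ and $p$ is absent, it follows that $q$ is strictly closest under $d'$ to all but $\epsilon n$ points of $C_i$. Then, assuming $R_{y,d',C}(q)\leq 2$, the element above $q$ for $C_y$ is $c_y$ or $p$, so in the second subset $C\setminus\{c_y,p\}$ the point $q$ is top for $C_y$; but its Voronoi tile there contains more than $\epsilon n$ points of $C_i$, which is disjoint from $C_y$ --- contradiction, giving $R_{y,d',C}(q)\geq 3$. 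The bound $R_{y,d',C}(p)\geq 3$ follows by the symmetric two-set argument (swapping the roles of $p$ and $q$) or, in the sub-case where $R_{x,d',C}(p)\leq 2$, directly from Part 2. This ``locate where $C_i$'s mass must go with one subset, then contradict $C_y$'s LPR with another'' step is the missing idea in your proposal.
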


We are almost ready to bring everything together to give a contradiction. Recall that Lemma~\ref{lem:closepoints}
allows us to choose a pair $(p,q)$ such that $\{c_\ell\}_{\ell=1}^k\cup\{p,q\}$ $(3,3)$-hits $S$.
For an arbitrary choice of $p$ and $q$, we may not end up with a contradiction.
It turns out, we will need to make sure one of the points comes from an LPR cluster, and is very high in the ranking
list of its own cluster. This motivates the following fact, which is the final piece to the puzzle.

\begin{fact} \label{fact:prox}
Given a $k$-center clustering instance $(S,d)$
such that all optimal clusters are size $>2\epsilon n$, 
given an $(\alpha,\epsilon)$-LPR cluster $C_x$, and given $i\neq x$,
then there are fewer than $\epsilon n$ points $s\in C_x$ such that $d(c_i,s)\leq\min(r^*,\alpha d(c_x,s))$.
\end{fact}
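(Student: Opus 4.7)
The plan is to argue by contradiction. Let $T = \{s \in C_x : d(c_i, s) \leq \min(r^*, \alpha d(c_x, s))\}$ and suppose $|T| \geq \epsilon n$. I will construct an $\alpha$-perturbation $d'$ under which any optimal clustering has no cluster $\epsilon$-close to $C_x$, contradicting the $(\alpha, \epsilon)$-LPR hypothesis on $C_x$.

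First, I define $d'$ by leaving distances involving $c_i$ unchanged and scaling all other distances by $\alpha$: $d'(c_i, v) = d(c_i, v)$ for $v \in S$ and $d'(u, v) = \alpha d(u, v)$ otherwise. This is a valid $\alpha$-perturbation, and the original centers $\{c_1, \ldots, c_k\}$ achieve cost at most $\alpha r^*$ under $d'$. For every $s \in T$ the chain $d'(c_i, s) = d(c_i, s) \leq \alpha d(c_x, s) = d'(c_x, s)$ holds by the definition of $T$, and for any other center $c_j$ the estimate $d'(c_j, s) = \alpha d(c_j, s) \geq \alpha d(c_x, s) \geq d(c_i, s)$ holds because $s \in C_x$. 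Hence in the Voronoi partition induced by the original centers under $d'$, every $s \in T$ is assigned to $c_i$, producing the clustering $\{C_j : j \neq x, i\} \cup \{C_x \setminus T,\, C_i \cup T\}$.

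Next I would check that no cluster in this family is $\epsilon$-close to $C_x$: each $C_j$ for $j \neq x, i$ is disjoint from $C_x$ and of size greater than $2\epsilon n$; the cluster $C_i \cup T$ contains all of $C_i$, again disjoint from $C_x$ and of size greater than $2\epsilon n$; and the cluster $C_x \setminus T$ has symmetric difference exactly $|T| \geq \epsilon n$ with $C_x$. Taking a tiny further $\alpha$-perturbation that strictly separates $c_i$ from $c_x$ on the points of $T$ pushes this last symmetric difference strictly above $\epsilon n$, yielding the desired contradiction with $(\alpha, \epsilon)$-LPR.

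The main obstacle is ensuring that the Voronoi partition above is actually induced by an \emph{optimal} set of centers under $d'$, rather than by a different set that could produce a partition with a cluster $\epsilon$-close to $C_x$. I would address this by combining the trivial lower bound $r^*$ on the optimal $d'$-cost (since $d' \geq d$) with Fact~\ref{fact:half}, which supplies a bijection from $(\alpha, \epsilon)$-LPR clusters to centers in any optimum. A short case analysis on the center $c^*$ assigned to $C_x$ under an arbitrary optimum — whether $c^*$ equals $c_x$, equals $c_i$, or is some third point — shows that the reassignment argument above applies verbatim (or the purported optimum is immediately beaten), so the contradiction with LPR holds regardless of the specific optimum chosen.
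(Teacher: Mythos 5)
There is a genuine gap, and it sits exactly at the obstacle you flag at the end. Your whole contradiction rests on the Voronoi partition of the \emph{original} centers being an optimal clustering under your $d'$, but the perturbation you chose (scale every distance by $\alpha$ except those from $c_i$) gives you no control over the optimum under $d'$: you only show the original centers achieve cost at most $\alpha r^*$, while the optimal $d'$-cost can be anywhere in $[r^*,\alpha r^*]$ and can be attained by a completely different center set. Since $(\alpha,\epsilon)$-LPR constrains only optimal clusterings, nothing is contradicted until you handle such an optimum, and the sketched case analysis on the center $c^*$ serving $C_x$ does not do so: when $c^*$ is a third point (say another point of $C_x$ lying near the points of $T$), its distances are scaled by $\alpha$ but can still be far smaller than $d(c_i,s)$ for $s\in T$, so $T$ stays with $c^*$, the resulting cluster can perfectly well be $\epsilon$-close to $C_x$, and there is no contradiction; nor is there any sense in which such an optimum is ``immediately beaten,'' since by assumption it is optimal. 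The defining inequality $d(c_i,s)\leq\min(r^*,\alpha d(c_x,s))$ mentions $c_x$ only, so the reassignment argument does not ``apply verbatim'' to an arbitrary optimum. Two smaller slips: under your $d'$ the Voronoi partition of the original centers is not the one you wrote, because every point $v$ with $d(c_i,v)<\alpha\, d(c_{j(v)},v)$ migrates to $c_i$, not just $T$ (this happens to not hurt the closeness computation, but the partition claim is false as stated); and the ``tiny further $\alpha$-perturbation'' to break ties is unavailable, since on the tied pairs the distances are already at the extreme values permitted by an $\alpha$-perturbation.

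The paper sidesteps all of this with a more surgical perturbation: take $B\subseteq C_x$ of size $\epsilon n$ consisting of bad points, set $d'(c_x,s)=\alpha d(c_x,s)$ for $s\in B$, and leave \emph{every other} distance unchanged. Then the optimal cost under $d'$ is still exactly $r^*$: it cannot drop below $r^*$ because $d'\geq d$, and the original centers still achieve $r^*$ because each $s\in B$ is covered by $c_i$ at distance $d'(c_i,s)=d(c_i,s)\leq r^*$. Hence the original centers remain optimal under $d'$, and since $d'(c_x,s)=\alpha d(c_x,s)\geq d(c_i,s)=d'(c_i,s)$, the points of $B$ leave $c_x$'s Voronoi cell, contradicting $(\alpha,\epsilon)$-LPR; there is no need for Fact~\ref{fact:half} or any case analysis over alternative optima. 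If you want to salvage your write-up, replace the global scaling by this local increase of the $(c_x,s)$ distances only; keeping the optimal radius pinned at $r^*$ is what makes the optimality of the original centers, and hence the contradiction, immediate.
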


\begin{proof}
Assume the fact is false. Then let $B\subseteq C_x$ denote a set of size $\epsilon n$ such that for all $s\in B$, 
$d(c_i,s)\leq\min(r^*,\alpha d(c_x,s))$.
Construct the following perturbation $d'$. For all $s\in B$, set $d'(c_x,s)=\alpha d(c_x,s)$. 
For all other pairs $s,t$, set $d'(s,t)=d(s,t)$. This is clearly an $\alpha$-perturbation by construction.
Then the original set of optimal centers still achieves cost $r^*$ under $d'$ because for all $s\in B$, $d'(c_i,s)\leq r^*$.
Clearly, the optimal cost under $d'$ cannot be $<r^*$. It follows that the original set of optimal centers $C$ is still optimal under $d'$.
However, all points in $B$ are no longer in $\text{Vor}_C(c_x)$ under $d'$, contradicting the fact that $C_x$ is $(\alpha,\epsilon)$-LPR.
\end{proof}

Now we are ready to prove Theorem~\ref{thm:3epsthm}.

\begin{proof} [Proof of Theorem~\ref{thm:3epsthm}]
Assume towards contradiction that there are two points at distance $\leq r^*$ from different
$(3,\epsilon)$-LPR clusters.
Then by Lemma~\ref{lem:closepoints},
there exists a partition $S_1,S_2$ of non-centers of $S$ such that for all 
pairs $p\in S_1$, $q\in S_2$, $\{c_\ell\}_{\ell=1}^k\cup\{p,q\}$ $(3,3)$-hit $S$.
Given three $(3,\epsilon)$-LPR clusters $C_x$, $C_y$, and $C_z$,
let $c_x'$, $c_y'$, and $c_z'$ denote the optimal centers ranked highest by $C_x$, $C_y$, and $C_z$ disregarding
$c_x$, $c_y$, and $c_z$, respectively.
Define $p=\text{argmin}_{s\in C_x} d(c_i,s)$, and WLOG let $p\in S_1$. Then pick an arbitrary point $q$ from $S_2$, and define $C=\{c_\ell\}_{\ell=1}^k\cup\{p,q\}$.
Define $d'$ as in Lemma \ref{lem:hit}. 
We claim that $R_{x,d',C}(p)<R_{x,d',C}(c_x')$:
from Fact~\ref{fact:prox}, there are fewer than $\epsilon n$ points $s\in C_x$ such that $d(c_x',s)\leq\min(r^*,3 d(c_x,s))$.
Among each remaining point $s\in C_x$, we will show $d'(p,s)\leq d'(c_x',s)$. Recall that $d(p,s)\leq d(p,c_x)+d(c_x,s)\leq 2r^*$, so $d'(p,s)=\min(3r^*,3d(p,s))$. 
There are two cases to consider.

Case 1: $d(c_x',s)>r^*$. Then by construction, $d'(c_x',s)\geq 3r^*$, and so $d'(p,s)\leq d'(c_x',s)$.

Case 2: $3d(c_x,s)<d(c_x',s)$. Then $d'(p,s)\leq 3d(p,s)\leq 3(d(p,c_x)+d(c_x,s))\leq 6d(c_x,s)\leq 2d(c_x',s)\leq \min(3r^*,3d(c_x',s))=d'(c_x',s)$,
and this proves our claim.

Because $R_{x,d',C}(p)<R_{x,d',C}(c_x')$, it follows that either $R_{x,d',C}(p)\leq 2$ or $R_{x,d',C}(q)\leq 2$,
since the top two can only be $c_x$, $p$, or $q$.
The rest of the argument is broken up into cases.

Case 1: $R_{x,d',C}(c_x')\leq 3$. 
From Lemma~\ref{lem:rank-struct}, then $R_{y,d',C}(p)\geq 3$ and $R_{y,d',C}(q)\geq 3$.
It follows by process of elimination that 
$R_{y,d',C}(c_{y})=1$ and $R_{y,d',C}(c_{y'})=2$.
Again by Lemma~\ref{lem:rank-struct}, $R_{x,d',C}(p)\geq 3$ and $R_{x,d',C}(q)\geq 3$, causing a contradiction. 

Case 2: $R_{x,d',C}(c_{x'})>3$ and $R_{y,d',C}(c_{y'})\leq 3$.
Then $R_{x,d',C}(p)\leq 3$ and $R_{x,d',C}(q)\leq 3$. 
From Lemma~\ref{lem:rank-struct}, $R_{x,d',C}(p)\geq 3$ and $R_{x,d',C}(q)\geq 3$, therefore we have a contradiction.
Note, the case where $R_{x,d',C}(c_{x'})>3$ and $R_{z,d',C}(c_{z'})\leq 3$ is identical to this case.

Case 3: The final case is when $R_{x,d',C}(c_{x'})>3$, $R_{y,d',C}(c_{y'})>3$, and $R_{z,d',C}(c_{z'})>3$.
So for each $i\in\{x,y,z\}$, the top three for $C_i$ in $C$ is a permutation of $\{c_i,p,q\}$.
Then each $i\in\{x,y,z\}$ must rank $p$ or $q$ in the top two, so by the Pigeonhole Principle, 
either $p$ or $q$ is ranked top two by
two different LPR clusters, contradicting Lemma~\ref{lem:rank-struct}.
This completes the proof.
\end{proof}

We note that Case 3 in Theorem~\ref{thm:3epsthm} is the reason why we need to assume there are at least three
$(3,\epsilon)$-LPR clusters. If there are only two, $C_x$ and $C_y$, it is possible that there exist
$u\in C_x$, $v\in C_y$ such that $d(u,v)\leq r^*$. 
In this case, for $p,q,d',$ and $C$ as defined in the proof of
Theorem~\ref{thm:3epsthm}, if $C_x$ ranks $c_x$, $p$, $q$ as its top three and $C_y$ ranks $c_y$, $q$, $p$
as its top three, then there is no contradiction.

\paragraph{APX-Hardness under approximation stability}
Now we show the lower bound on the cluster sizes in Theorem~\ref{thm:kcenter_full} is necessary,
by showing hardness of approximation even when it is guaranteed the clustering satisfies 
$(\alpha,\epsilon)$-perturbation resilience for $\alpha\geq 1$ and $\epsilon>0$. 
In fact, this hardness holds even under the strictly stronger notion of 
\emph{approximation stability}~\cite{as}. 
We say that two clusterings $\mathcal{C}$ and $\mathcal{C}'$ are $\epsilon$-close if only
an $\epsilon$-fraction of the input points are clustered differently, i.e.,
$\min_\sigma\sum_{i=1}^k |C_i\setminus C_{\sigma(i)}'|\leq\epsilon n$.

\begin{definition}
A clustering instance $(S,d)$
satisfies \emph{$(\alpha,\epsilon)$-approximation stability ($(\alpha,\epsilon)$-AS)}
if any clustering $\mathcal{C}'$ (not necessarily a Voronoi partition)
such that $\text{cost}(\mathcal{C}')\leq\alpha\text{cost}(\mathcal{OPT})$
is $\epsilon$-close to $\mathcal{OPT}$.
\end{definition}

The hardness is based on a reduction from the general clustering
instances, so the APX-hardness constants match the non-stable APX-hardness results.

\begin{theorem} \label{thm:lowerbound}
Given  $\alpha\geq 1$, $\epsilon>0$,
it is NP-hard to approximate $k$-center to 2, $k$-median to $1.73$, or $k$-means to 1.0013,
even when it is guaranteed the instance satisfies $(\alpha,\epsilon)$-approximation stability.
\end{theorem}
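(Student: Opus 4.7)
The plan is to reduce from the classical worst-case APX-hardness of each objective. Let $\rho$ denote the corresponding APX-hardness factor ($2$, $1.73$, or $1.0013$), and let $(S,d,k)$ be a worst-case hard instance with optimum cost $C^*$. I will construct an instance $(S',d',k')$ that satisfies $(\alpha,\epsilon)$-approximation stability and for which any $\rho$-approximation restricts to a $\rho$-approximation of $(S,d,k)$, so that $\rho$-approximating $(S',d',k')$ inherits the NP-hardness of $\rho$-approximating $(S,d,k)$.

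The construction takes $S'=S\cup T$, where $T=T_1\cup\cdots\cup T_{k_T}$ is a union of $k_T$ ``dummy'' groups, each consisting of $N$ identical copies of a single point $\tau_j$. All intra-group distances are set to $0$, while inter-group distances and distances between $T$ and $S$ are set to a large value $D$; the metric on $S$ is preserved. I take $k'=k+k_T$. By construction the optimum of $(S',d',k')$ has cost $C^*$, realized by the optimum centers on $S$ together with one representative from each $T_j$.

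The key parameter choice is $D>\alpha C^*$ (or $D>\alpha r^*$ for $k$-center), which guarantees that any $\alpha$-approximate clustering cannot place points from two different groups, or from $T$ and $S$, into the same cluster: such a mixed cluster would cost at least $D>\alpha C^*$. Hence every cluster in an $\alpha$-approximation is either entirely inside some $T_j$ or entirely inside $S$. Choosing $N$ so that $|S|\leq\epsilon|S'|$ ensures that any clustering which places each $T_j$ into a single cluster and $S$ into $k$ clusters is automatically $\epsilon$-close to OPT, since the two clusterings can only disagree on $S$.

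The main remaining obstacle, and the heart of the proof, is ruling out the possibility that an $\alpha$-approximation splits some $T_j$ into two or more pure subclusters, thereby leaving $S$ to be covered by fewer than $k$ clusters at cost $0$ on $T$. I handle this by padding the hard instance: I ``blow up'' each optimal cluster center $c_i$ of $S$ by adding a large number $M$ of duplicates at $c_i$. This leaves the $k$-clustering optimum of $S$ unchanged, but any solution for $S$ that uses fewer than $k$ clusters must leave the $M$ duplicates at some $c_i$ distant from their assigned center, incurring cost $\Omega(M)$. Choosing $M$ large enough forces this cost to exceed $\alpha C^*$, so every $\alpha$-approximation of $(S',d',k')$ uses exactly one cluster per $T_j$ and exactly $k$ clusters on $S$, completing the verification of $(\alpha,\epsilon)$-approximation stability. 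Since any $\rho$-approximation has the same forced structure, restricting its $k$ $S$-centers yields a $\rho$-approximation of the original hard instance, transferring the NP-hardness. The hardest step is balancing the three parameters $N$, $D$, and $M$ together with the padding construction so that the argument goes through uniformly for every $\alpha\geq 1$ and $\epsilon>0$ without disturbing the APX-hardness of the underlying $(S,d,k)$.
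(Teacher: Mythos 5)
Your high-level idea is the same as the paper's: dilute the hard instance with far-away dummy points so that $S$ becomes at most an $\epsilon$-fraction of the new instance, which makes $(\alpha,\epsilon)$-approximation stability hold essentially for free while any $\rho$-approximation of the new instance restricts to a $\rho$-approximation of $(S,d,k)$. However, your specific construction (dummy groups of $N$ identical copies) creates the splitting obstacle you yourself identify, and the patch you propose does not work. First, a polynomial-time reduction cannot ``add $M$ duplicates at each optimal cluster center $c_i$ of $S$'': the optimal centers of the hard instance are not known to the reduction (computing them is exactly the NP-hard problem you are reducing from), so this padding step is not implementable. Second, even granting knowledge of the centers, the claim that any solution using fewer than $k$ clusters on $S$ incurs cost $\Omega(M)$ is false: if two optimal centers $c_i,c_j$ are at distance $\delta=\min_{j\neq i}d(c_i,c_j)$, dropping one of them costs only about $M\delta$ on the padded points, and $\delta$ can be exponentially small relative to $C^*$ in the bit encoding, so forcing $M\delta>\alpha C^*$ may require adding super-polynomially many points. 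So the verification of $(\alpha,\epsilon)$-approximation stability for your instance has a genuine hole.

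The gap is avoidable without padding. The paper's construction makes each dummy a \emph{single} point: it adds $n/\epsilon$ new points, each at distance $2\alpha n\max_{u,v\in S}d(u,v)$ from everything else (including each other), and sets $k'=k+n/\epsilon$. A singleton cluster cannot be split, and no $\alpha$-approximate clustering can merge a dummy with anything (that alone costs more than $\alpha$ times the optimum), so every $\alpha$-approximation agrees with the optimal clustering on all dummies and can only disagree on $S$, which is at most $\epsilon n'$ points -- stability follows with no case analysis about how $S$ itself is clustered. Alternatively, within your group-based construction one can take the number of groups $k_T\gtrsim k/\epsilon$: splitting groups requires sacrificing clusters on $S$, at most $k-1$ of them, so the total mismatch from splits is below $kN\leq\epsilon n'$. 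Either repair replaces your padding step; as written, the padding step is where your proof fails.
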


This theorem generalizes hardness results from~\cite{as} and~\cite{symmetric}.
Also, because of Fact~\ref{fact:local-iff}, a corollary is that
unless $P=NP$, there is no efficient algorithm which outputs each $(\alpha,\epsilon)$-LPR cluster
for $k$-center (showing the condition on the cluster sizes in Theorem~\ref{thm:kcenter_full}
is necessary).

\begin{proof}
Given $\alpha\geq 1$, $\epsilon>0$, assume there exists a $\beta$-approximation algorithm $\mathcal{A}$ for $k$-median under
$(\alpha,\epsilon)$-approximation stability. We will show a reduction to $k$-median without approximation stability.
Given a $k$-median clustering instance $(S,d)$ of size $n$, we will create a new instance $(S',d')$ for $k'=k+n/\epsilon$ with
size $n'=n/\epsilon$ as follows.
First, set $S'=S$ and $d'=d$, and then add $n/\epsilon$ new points to $S'$, such that their distance to every other point is
$2\alpha n\max_{u,v\in S}d(u,v)$. Let $\OPT$ denote the optimal solution of $(S,d)$. Then the optimal solution to $(S',d')$ is
to use $\OPT$ for the vertices in $S$, and make each of the $n/\epsilon$ added points a center. 
Note that the cost of $\OPT$ and the optimal clustering for $(S',d')$ are identical, since the added points are distance 0 to their center.
Given a clustering $\mathcal{C}$ on $(S,d)$, let $\mathcal{C}'$ denote the clustering of $(S',d')$ that clusters $S$ as in $\mathcal{C}$, and then adds
$n/\epsilon$ extra centers on each of the added points. Then the cost of $\mathcal{C}$ and $\mathcal{C}'$ are the same, so it follows that
$\mathcal{C}$ is a $\beta$-approximation to $(S,d)$ if and only if $\mathcal{C}'$ is a $\beta$-approximation to $(S',d')$.
Next, we claim that $(S',d')$ satisfies $(\alpha,\epsilon)$-approximation stability.
Given a clustering $\mathcal{C}'$ which is an $\alpha$-approximation to $(S',d')$, then there must be a center located at all
$n/\epsilon$ of the added points, otherwise the cost of $\mathcal{C}'$ would be $>\alpha\OPT$.
Therefore, $\mathcal{C}'$ agrees with the optimal solution on all points except for $S$, therefore, $\mathcal{C}'$ must be $\epsilon$-close to
the optimal solution.
Now that we have established a reduction, the theorem follows from hardness of $1.73$-approximation for $k$-median~\cite{jain2002new}.
The proofs for $k$-center and $k$-means are identical, using hardness from~\cite{gonzalez1985clustering} and~\cite{lee2017improved}, respectively.
\end{proof}

\section{Conclusion} \label{sec:conclusion}

In this work, we initiate the study of clustering under local stability.
We define local perturbation resilience, a property of a single
optimal cluster rather than the instance as a whole.
We give algorithms that simultaneously achieve guarantees in the worst case, as well as
guarantees when the data is stable.

Specifically, we show that local search outputs the optimal $(3+\epsilon)$-LPR clusters 
for $k$-median and the $(9+\epsilon)$-LPR clusters for $k$-means. 
For $k$-center, we show that any 2-approximation outputs the optimal 2-LPR clusters, as well as the optimal $(3,\epsilon)$-LPR
clusters when assuming the optimal clusters are not too small.
We provide a natural modification to the asymmetric $k$-center approximation algorithm of Vishwanathan~\cite{vishwanathan}
to prove it outputs all 2-SLPR clusters.
Finally, we show APX-hardness of clustering under $(\alpha,\epsilon)$-approximation stability
for any $\alpha\geq 1$, $\epsilon>0$.
It would be interesting to find other approximation algorithms satisfying the condition in Lemma~\ref{lem:approx},
and in general to further study local stability for other objectives and conditions.

\clearpage


\bibliography{clustering}
\bibliographystyle{plain}

\appendix


\section{Prior algorithms in the context of local stability} \label{app:algos}

In this section, we discuss previous algorithms in the context of our new local stability framework.
All of these results are useful under the standard definition of perturbation resilience for which they were designed.
However, we will see that the prior algorithms (except for the one designed specifically for $k$-center) use the \emph{global} structure of
the data, which causes the algorithms to behave poorly when a fraction of the dataset is not perturbation resilient.

We start with the recent algorithm of Angelidakis et al.~\cite{angelidakis2017algorithms} 
to optimally cluster 2-perturbation resilient instances for any center-based objective
(which includes $k$-median, $k$-means, and $k$-center). The algorithm is intuitively simple to describe.
The first step is to create a minimum spanning tree $\mathcal{T}$ on the dataset. 
The second step is to perform dynamic programming on $\mathcal{T}$ to find the
$k$-clustering with the lowest cost. The key fact is that 2-perturbation resilience implies that each cluster $C_i$ is a connected subtree in $\mathcal{T}$.
This fact is partially preserved for each 2-LPR cluster $C_i$. For example, it can be made to hold if nearby clusters are also 2-LPR. 
However, the dynamic programming step heavily relies on the entire dataset satisfying
stability. For instance, consider a dataset in which there are $k$ clusters in a line, and all clusters are 2-LPR except for the cluster in the center.
It is possible for the non-LPR cluster causes an offset in the dynamic program, which forces the minimum cost pruning to split each 2-LPR cluster in two,
and merge each half with half of its neighboring cluster.
Therefore, none of the LPR clusters are returned by the algorithm.
We conclude that the dynamic programming step is not robust with respect to perturbation resilience.

Next, we consider the algorithm of Balcan and Liang~\cite{balcan2012clustering} 
to optimally cluster $(1+\sqrt{2})$-perturbation resilient instances for center-based objectives. This was the leading algorithm for perturbation resilient instances
until the recent result by Angelidakis et al.
This algorithm also utilizes a dynamic programming step, although on a different type of tree. 
The algorithm starts with a linkage procedure, which the authors call closure linkage. It starts with $n$ singleton sets, and merges the two sets with the minimum
\emph{closure distance}, which is the minimum radius that covers the sets and satisfies a margin condition that exploits the perturbation resilient structure.
The merge procedure is iterated until all sets merge into a single set containing all $n$ points.
Thus, the algorithm creates a tree $\mathcal{T}$ where the leaves are the $n$ singleton sets, each internal node is a set of points, 
and the root is the set of all $n$ datapoints.
If the dataset satisfies $(1+\sqrt{2})$-perturbation resilience, then each optimal cluster appears as a node in the tree.
Then dynamic programming on $\mathcal{T}$ to find the minimum $k$-pruning outputs the optimal clusters.
Just as in the previous algorithm, the initial guarantee is partially satisfied for LPR clusters.
For example, if a cluster $C_i$ and all nearby clusters satisfy $(1+\sqrt{2})$-LPR, then $C_i$ will appear as a node in $\mathcal{T}$.
However, the dynamic programming step is again not robust to just a few non-LPR clusters.
For example, consider a dataset where $k/2$ clusters are LPR, $k/2$ clusters are non-LPR.
It is possible that the minimum closure distance for each non-LPR cluster contains all $k/2$ non-LPR clusters.
Then the non-LPR clusters are grouped together in a single merge step, forcing the dynamic program to split every LPR cluster in half so that $k$
clusters are outputted. In this example, the LPR clusters can be very far away from the non-LPR clusters, but the algorithm still fails.

Finally, we consider the algorithm of Balcan et al.~\cite{symmetric} which returns the optimal asymmetric $k$-center solution under 2-perturbation resilience.
This algorithm starts by finding a subset of the datapoints which ``behave symmetrically.'' We note that their symmetric set is equivalent to the set of all CCVs.
Next, the algorithm uses a margin condition to separate out the optimal clusters in the symmetric set, and greedily attaches the non-symmetric points at the end.
The first part of the algorithm will correctly output all LPR clusters within the symmetric set. However, when the non-symmetric points are reattached,
many points from non-LPR clusters can attach to LPR clusters, forcing the final outputted clusters to have a huge radius.
There is no easy fix to this algorithm, since the LPR clusters might ``steal'' the centers of non-LPR clusters, 
so the optimal radius of the remaining points may increase significantly.

In conclusion, existing algorithms for $k$-median and $k$-means heavily rely on the global structure provided by perturbation resilience.
If just a single cluster does not satisfy local perturbation resilience, then the global structure is broken and the algorithm may not output any of the optimal
clusters. Therefore, algorithms which exploit the \emph{local} structure of perturbation resilience, such as in Section~\ref{sec:approx}, are needed to have
guarantees that are robust with respect to the level of perturbation resilience of the dataset.
For $k$-center, the nature of the objective ensures that all algorithms have local guarantees, in some sense. However, it is still nontrivial
to exploit the structure of the clusters satisfying local perturbation resilience, while ensuring the guarantees for the rest of the dataset are not
arbitrarily bad (which is what we accomplish in Section~\ref{sec:akc}).

\section{Proofs from Section~\ref{sec:prelim}} \label{app:prelim}

In this section, we give a proof of Lemma~\ref{lem:eps-iff}.

\medskip
\noindent \textbf{Lemma~\ref{lem:eps-iff} (restated).}
\emph{
A clustering instance $(S,d)$ satisfies $(\alpha,\epsilon)$-PR if and only if
each optimal cluster $C_i$ satisfies $(\alpha,\epsilon_i)$-LPR and $\sum_i\epsilon_i\leq2\epsilon n$.
}

\begin{proof}
Given a clustering instance $(S,d)$ satisfying $(\alpha,\epsilon)$-PR, given an $\alpha$-perturbation $d'$ and optimal clustering
$C'_1,\dots,C'_i$ under $d'$, then there exists $\sigma$ such that $\sum_{i=1}^k |C_i\setminus C_{\sigma(i)}'|\leq\epsilon n$.
WLOG, we let $\sigma$ equal the identity permutation.
Now we claim that $\sum_{i=1}^k |C_i\setminus C'_i|=\sum_{i=1}^k |C'_i\setminus C_i|$.
Intuitively, this is true because $C_1,\dots,C_k$ and $C'_1,\dots,C'_k$ are both partitions of the same point set $k$.
Formally,
\begin{align*}
\sum_{i=1}^k |C_i\setminus C'_i|&=\sum_{i=1}^k |(C_i\cup C'_i)\setminus C'_i|&&\text{because }C'_i\subseteq C'_i\\
&=\sum_{i=1}^k \left(|C_i\cup C'_i|-|C'_i|\right)&&\text{because }C'_i\subseteq (C_i\cup C'_i)\\
&=\sum_{i=1}^k |C_i\cup C'_i|-\sum_{i=1}^k |C'_i|&&\\
&=\sum_{i=1}^k |C_i\cup C'_i|-\sum_{i=1}^k |C_i|&&\text{because }\sum_{i=1}^k |C'_i|=\sum_{i=1}^k |C_i|\\
&=\sum_{i=1}^k \left(|C_i\cup C'_i|-|C_i|\right)&&\\
&=\sum_{i=1}^k |(C_i\cup C'_i)\setminus C_i|&&\text{because }C_i\subseteq (C_i\cup C'_i)\\
&=\sum_{i=1}^k |C'_i\setminus C_i|&&\text{because }C_i\subseteq C_i\\
\end{align*}

Now for each $i$, set $\epsilon_i=|C_i\setminus C_i'|+|C'_i\setminus C_i|.$
Clearly, this ensures $C_i$ satisfies $(\alpha,\epsilon_i)$-LPR.
Finally, we have
\begin{align*}
\sum_i\epsilon_i&=\sum_i \left(|C_i\setminus C_i'|+|C'_i\setminus C_i|\right)\\
&=2\sum_i |C_i\setminus C_i'|\\
&\leq 2\epsilon n.
\end{align*}

Now we prove the reverse direction. 
Given a clustering instance $(S,d)$ such that each cluster $C_i$ satisfies $(\alpha,\epsilon_i)$-LPR
and $\sum_i\epsilon_i\leq 2\epsilon n$, given an $\alpha$-perturbation $d'$ and optimal clustering
$C'_1,\dots,C'_i$ under $d'$, then for each $i$, $|C_i\setminus C_i'|+|C'_i\setminus C_i|\leq\epsilon_i.$
Therefore,
\begin{align*}
\sum_i |C_i\setminus C_i'|&=
\frac{1}{2}\sum_i \left(|C_i\setminus C_i'|+|C'_i\setminus C_i|\right)\\
&\leq\epsilon n.
\end{align*}
This concludes the proof.
\end{proof}

\section{Proofs from Section~\ref{sec:approx}} \label{app:approx}

In this section, we give the details from Section~\ref{sec:approx}.

\medskip
\noindent \textbf{Theorem~\ref{thm:approx} (restated).}
\emph{
Given a $k$-median instance $(S,d)$, running local search with search size $\frac{1}{\epsilon}$ returns a clustering that contains
every $(3+2\epsilon)$-LPR cluster, and it gives a $(3+2\epsilon)$-approximation overall.
}

\begin{proof}
Given a $k$-median instance $(S,d)$, let $X$ denote a set of centers obtained by running local search with search size $\frac{1}{\epsilon}$.
Let $Y$ denote a different set of $k$ centers.
As in the proof of Lemma~\ref{lem:approx}, let $A_1,A_2,A_3$, and $A_4$ denote 
$\text{Vor}_X(X\cap Y)\cap\text{Vor}_Y(X\cap Y)$, 
$\text{Vor}_X(X\cap Y)\setminus\text{Vor}_Y(X\cap Y)$,
$\text{Vor}_Y(X\cap Y)\setminus\text{Vor}_X(X\cap Y)$, and 
$\text{Vor}_X(X\setminus Y)\cap \text{Vor}_Y(Y\setminus X)$, respectively (see Figure~\ref{fig:approx}).
From Cohen-Addad and Schwiegelshohn~\cite{cohen2017one}, from Lemmas B.3 and B.4, we have the following.
\begin{align*}
\sum_{v\in A_2\cup A_4}d(v,X)\leq\sum_{v\in A_2\cup A_4}d(v,Y)+2(1+\epsilon)\sum_{v\in A_3\cup A_4}d(v,Y)
\end{align*}
Given a point $v\in A_2$, then its center in $X$ is from $X\cap Y$, and its center from $Y$ is in $Y\setminus X$.
We deduce that $d(v,Y)\leq d(v,X)$, otherwise $v$'s center from $Y$ would be the same as in $X$.
Similarly, for a point $v\in A_3$, we can conclude that $d(v,X)\leq d(v,Y)$, by definition of $A_3$.

Therefore,
\begin{align*}
\sum_{v\in A_2\cup A_3\cup A_4}d(v,X)&\leq\sum_{v\in A_3}d(v,X)+\sum_{v\in A_2\cup A_4}d(v,X)\\
&\leq\sum_{v\in A_3}d(v,Y)+\left(\sum_{v\in A_2\cup A_4}d(v,Y)+2(1+\epsilon)\sum_{v\in A_3\cup A_4}d(v,Y)\right)\\
&\leq\sum_{v\in A_2}d(v,Y)+(3+2\epsilon)\sum_{v\in A_3\cup A_4}d(v,Y)\\
&\leq\sum_{v\in A_2}\min(d(v,X),(3+2\epsilon)d(v,Y))+(3+2\epsilon)\sum_{v\in A_3\cup A_4}d(v,Y)\\
\end{align*}
Now the proof follows from Lemma~\ref{lem:approx}.
\end{proof}

\medskip
\noindent \textbf{Lemma~\ref{lem:d'_metric} (restated).}
\emph{
Given $\alpha\geq 1$ and an 
asymmetric $k$-center clustering instance $(S,d)$ with optimal radius $r^*$,
let $d''$ denote an $\alpha$-perturbation such that for all $u,v$, either
$d''(u,v)=\min(\alpha r^*,\alpha d(u,v))$ or $d''(u,v)=\alpha d(u,v)$.
Let $d'$ denote the metric completion of $d''$.
Then $d'$ is an $\alpha$-metric perturbation of $d$, and the optimal cost under $d'$ is $\alpha r^*$.
}

\begin{proof}
By construction, $d'(u,v)\leq d''(u,v)\leq\alpha d(u,v)$.
Since $d$ satisfies the triangle inequality, we have that $d(u,v)\leq d'(u,v)$,
so $d'$ is a valid $\alpha$-metric perturbation of $d$.

Now given $u,v$ such that $d(u,v)\geq r^*$, we will prove that $d'(u,v)\geq \alpha r^*$.
By construction, $d''(u,v)\geq\alpha r^*$.
Then since $d'$ is the metric completion of $d''$, there exists a path 
$u=u_0$--$u_1$--$\cdots$--$u_{s-1}$--$u_s=v$
such that $d'(u,v)=\sum_{i=0}^{s-1}d'(u_i,u_{i+1})$ and
for all $0\leq i\leq s-1$, $d'(u_i,u_{i+1})=d''(u_i,u_{i+1})$.
If there exists an $i$ such that $d''(u_i,u_{i+1})\geq\alpha r^*$, then $d'(u,v)\geq\alpha r^*$ and we are done.
Now assume for all $0\leq i\leq s-1$, $d''(u_i,u_{i+1})<\alpha r^*$.
Then by construction, $d'(u_i,u_{i+1})=d''(u_i,u_{i+1})=\alpha d(u_i,u_{i+1})$,
and so $d'(u,v)=\sum_{i=0}^{s-1}d'(u_i,u_{i+1})=\alpha \sum_{i=0}^{s-1}d(u_i,u_{i+1})\geq
\alpha d(u,v)\geq \alpha r^*$.

We have proven that for all $u,v$, if $d(u,v)\geq r^*$, then $d'(u,v)\geq\alpha r^*$.
Assume there exists a set of centers $C'=\{c_1',\dots, c_k'\}$ whose $k$-center cost under $d'$
is $<\alpha r^*$. Then for all $i$ and $s\in\text{Vor}_{C',d'}(c_i')$, $d'(c_i',s)<r^*$,
implying $d(c_i',s)<\alpha r^*$ by construction. It follows that the $k$-center cost of
$C'$ under $d$ is $r^*$, which is a contradiction.
Therefore, the optimal cost under $d'$ must be $\alpha r^*$.
\end{proof}

\medskip
\noindent \textbf{Theorem~\ref{thm:kcenter} (restated).}
\emph{
Given an asymmetric $k$-center clustering instance $(S,d)$ and an $\alpha$-approximate clustering $\mathcal{C}$,
each $\alpha$-LPR cluster is contained in $\mathcal{C}$, even under the weaker metric perturbation resilience condition.
}

\begin{proof}
Given an $\alpha$-approximate solution $\mathcal{C}$ to a clustering instance $(S,d)$, and given
an $\alpha$-LPR cluster $C_i$, we will create an $\alpha$-perturbation
as follows. For all $v\in S$, set $d''(v,\mathcal{C}(v))=\min\{\alpha r^*,\alpha d(v,\mathcal{C}(v))\}$.
For all other points $u\in S$, set $d''(v,u)=\alpha d(v,u)$.
Then by Lemma \ref{lem:d'_metric}, the metric completion $d'$ of $d''$ is an $\alpha$-perturbation of $d$
with optimal cost $\alpha r^*$.
By construction, the cost of $\mathcal{C}$ is $\leq \alpha r^*$ under $d'$, therefore, $\mathcal{C}$ is an optimal clustering.
Denote the set of centers of $\mathcal{C}$ by $C$.
By definition of $\alpha$-LPR, there exists $v_i\in C$ such that
$\text{Vor}_C(v_i)=C_i$ in $d'$. Now, given $v\in C_i$, $\text{argmin}_{u\in C}d'(u,v)=v_i$,
so by construction, $\text{argmin}_{u\in C}d(u,v)=v_i$. Therefore, $\text{Vor}_C(v_i)=C_i$, so $C_i\in\mathcal{C}$.
\end{proof}

\section{Proofs from Section \ref{sec:akc}} \label{app:akc}

In this section, we give the details of the proofs from Section~\ref{sec:akc}.

\medskip
\noindent \textbf{Lemma~\ref{lem:properties} (restated).}
\emph{
Given an asymmetric $k$-center clustering instance $(S,d)$ and a 2-LPR cluster $C_i$,
$c_i$ satisfies CCV-proximity and center-separation.
Furthermore, given a CCV $c\in C_i$, a CCV $c'\notin C_i$, and a point $v\in C_i$, we have
$d(c,v)<d(c',v)$.
}

\begin{proof}
Given an instance $(S,d)$ and a 2-metric perturbation resilient cluster $C_i$,
we show that $c_i$ has the desired properties.

\emph{Center Separation:}
Assume there exists a point $v\in C_j$ for $j\neq i$ such that $d(v,c_i)\leq r^*$.
The idea is to construct a $2$-perturbation in which $v$ becomes the center for $C_i$.

\begin{equation*}
d''(s,t)=
\begin{cases}
\min( 2 r^*,2 d(s,t) ) & \text{if } s=v \text{, } t\in C_i  \\
2 d(s,t) & \text{otherwise.}
\end{cases}
\end{equation*}
$d''$ is a valid 2-perturbation of $d$ because for each point $u\in C_i$, 
$d(v,u)\leq d(v,c_i)+d(c_i,u)\leq 2r^*$.
Define $d'$ as the metric completion of $d''$.
Then by Lemma \ref{lem:d'_metric}, $d'$ is a 2-metric perturbation with
optimal cost $2 r^*$.
The set of centers $\{c_{i'}\}_{i'=1}^k\setminus\{c_i\}\cup\{v\}$ achieves the optimal
cost, since $v$ is distance $2 r^*$ from $C_i$, and all other 
clusters have the same center as in $\mathcal{OPT}$ (achieving radius $2 r^*$).
But in this new optimal clustering, $c_i$'s center is a point in $\{c_{i'}\}_{i'=1}^k\setminus\{c_i\}\cup\{v\}$,
none of which are from $C_i$. We conclude that $C_i$ is no longer an optimal cluster, contradicting 2-LPR.

\emph{Final property}
Next we prove the final part of the lemma.
Given a CCV $c$ from a 2-LPR cluster $C_i$, a CCV $c'\in C_j$ such that $j\neq i$, and a point $v\in C_i$,
and assume $d(c',v)\leq d(c,v)$.
We will construct a perturbation in which $c$ and $c'$ become centers of their respective
clusters, and then $v$ switches clusters.
Define the following perturbation $d''$.

\begin{equation*}
d''(s,t)=
\begin{cases}
\min( 2 r^*, 2 d(s,t) ) & \text{if } s=c \text{, } t\in C_i\text{ or }s=c'\text{, }t\in
C_j\cup\{v\} \\
2 d(s,t) & \text{otherwise.}
\end{cases}
\end{equation*}
$d''$ is a valid 2-perturbation of $d$ because for each point $u\in C_i$,
$d(c,u)\leq d(c,c_i)+d(c_i,u)\leq 2r^*$, for each point $u\in C_j$,
$d(c',u)\leq d(c',c_j)+d(c_j,u)\leq 2r^*$, and $d(c',v)\leq d(c,v)\leq d(c,c_i)+d(c_i,v)\leq 2r^*$.
Define $d'$ as the metric completion of $d''$.
Then by Lemma \ref{lem:d'_metric}, $d'$ is a 2-metric perturbation with
optimal cost $2 r^*$.
The set of centers $\{c_{i'}\}_{i'=1}^k\setminus\{c_i,c_j\}\cup\{c,c'\}$ achieves the optimal
cost, since $c$ and $c'$ are distance $2r^*$ from $C_i$ and $C_j$, and all other 
clusters have the same center as in $\mathcal{OPT}$ (achieving radius $2 r^*$).
Then since $d'(c',v)<d(c,v)$, $v$ will not be in the same optimal cluster as $c_i$, causing
a contradiction.

\emph{CCV-proximity:}
By center-separation, we have that $\Gamma^-(c_i)\subseteq C_i$, and by definition of $r^*$,
we have that $C_i\subseteq\Gamma^+(c_i)$. Therefore, $\Gamma^-(c_i)\subseteq\Gamma^+(c_i)$,
so $c_i$ is a CCV
Now given a point $v\in \Gamma^-(c_i)$ and a CCV $c\notin \Gamma^+(c_i)$,
from center-separation and definition of $r^*$,
$v\in C_i$ and $c\in C_j$ for $j\neq i$.
Then from the property in the previous paragraph, 
$d(c_i,v)<d(c,v)$.
\end{proof}

\medskip
\noindent \textbf{Theorem~\ref{thm:asy_local} (restated).}
\emph{
Given an asymmetric $k$-center clustering instance $(S,d)$ of size $n$, 
Algorithm \ref{alg:asy-pr} returns each 2-SLPR cluster exactly.
For each 2-LPR cluster $C_i$, Algorithm \ref{alg:asy-pr} outputs a cluster that is a superset of $C_i$
and does not contain any other 2-LPR cluster.
These statements hold for metric perturbation resilience as well.
Finally, the overall clustering returned by Algorithm \ref{alg:asy-pr} is an $O(\log^* n)$-approximation.
}

\begin{proof}[Proof of Theorem~\ref{thm:asy_local}]
First we explain why Algorithm \ref{alg:asy-pr} retains the approximation guarantee of Algorithm \ref{alg:asy}.
Given any CCV $c\in C_i$ chosen in Phase I, since $c$ is a CCV, then $c_i\in\Gamma^+(c)$, and by definition
of $r^*$, $C_i\subseteq\Gamma^+(c_i)$. Therefore, each chosen CCV always marks its cluster, and we start
Phase II with no remaining CCVs. 
This condition is sufficient for Phase II to return an
$O(\log^* n)$ approximation (Theorem 3.1 from~\cite{vishwanathan}).

Next we claim that for each 2-LPR cluster $C_i$, there exists a cluster outputted by Algorithm~\ref{alg:asy-pr} 
that is a superset of $C_i$ and does not contain any other 2-LPR cluster.
To prove this claim, we first show there exists a point from $C_i$ satisfying CCV-proximity that cannot be marked by any point from
a different cluster in Phase I.
From Lemma~\ref{lem:properties}, $c_i$ satisfies CCV-proximity and
center-separation. If a point $c\notin C_i$ marks $c_i$, then
$\exists v\in\Gamma^-(c)\cap\Gamma^-(c_i)$.
By center-separation and by definition of CCV, 
$c\notin\Gamma^-(c_i)$ and $c_i\notin\Gamma^(c)$.
Then from the definition of CCV-proximity for $c_i$ and $c$,
we have $d(c,v)<d(c_i,v)$ and $d(c_i,v)<d(c,v)$, so we have reached
a contradiction.
At this point, we know a point $c\in C_i$ will always be chosen by the algorithm in Phase I.
To finish the claim, we show that each point $v$ from $C_i$ is closer to $c$ than to any other point $c'\notin C_i$ chosen in Phase I.
Since $c$ and $c'$ are both CCVs, this follows directly from 
the last property in Lemma~\ref{lem:properties}.
However, it is possible that a center $c'\in A_{i+1}$ is closer to $v$ than $c$ is to $v$, causing $c'$ to ``steal'' $v$; this is unavoidable.
Therefore, we forbid the algorithm from decreasing the size of the Voronoi tiles of $C$ after Phase I.

Finally, we claim that Algorithm \ref{alg:asy-pr} returns each 2-SLPR cluster exactly.
Given a 2-SLPR cluster $C_i$, by our previous argument,
there exists a CCV $c\in C_i$ from Phase I satisfying CCV-proximity such that $C_i\subseteq V_c$.
First we assume towards contradiction that there exists a point $v\in\Gamma^-(c)\setminus C_i$.
Let $v\in C_j$. Since $c$ is a CCV, then $v\in\Gamma^+(c)$, so $C_j$ must be 2-LPR by definition of 2-SLPR.
By Lemma \ref{lem:properties}, $c_j$ is a CCV and $d(c_j,v)<d(c,v)$. But this violates CCV-proximity of $c$, so we have
reached a contradiction. Therefore, $\Gamma^-(c)\subseteq C_i$. 
Now assume there exists $v\in V_c\setminus C_i$ at the end of the algorithm.

Case 1: $v$ was marked by $c$ in Phase I. Let $v\in C_j$.
Then there exists a point $u\in \Gamma^-(c)$ such that $v\in\Gamma^+(u)$.
Then $u\in C_i$ and $v\in\Gamma^+(u)$, so $C_j$ must be 2-LPR.
Since $v$ is from a different 2-LPR cluster, it cannot be contained in $V_c$, so we have a contradiction.

Case 2: $v$ was not marked by $c$ in Phase I.
Denote the shortest path in $D_{(S,d)}$ from $c$ to $v$ by $c=v_0$--$v_1$--$\cdots$--$v_{L-1}$--$v_L=v$.
Let $v_\ell\in C_j$ denote the first vertex on the shortest path that is not in $C_i$ (such a vertex must exist because $v\notin C_i$).
See Figure~\ref{fig:slpr}.
Then $v_{\ell-1}\in C_i$ and $d(v_{\ell-1},v_\ell)\leq r^*$, so $C_j$ is 2-LPR.
Let $c'$ denote the CCV chosen in Phase I such that $C_j\subseteq V_{c'}$.
If $v_\ell$ is not on the shortest path $c'$--$v$, then that shortest path must be shorter than $c$--$v$, and we are done.
If $v_{\ell-1}$ is on the shortest path $c'$--$v$, then since $v_\ell\in C_j$, 
$d(c',v_\ell)\leq 2r^*$, so the shortest path must start with $c'$--$v_{\ell-1}$--$v_\ell$.
If $d(c,v_{\ell-1})>r^*$, then we are done because $c'$ is then closer to $v$.
Therefore, the distance from both $c$ and $c'$ to $v_{\ell}$ is in $(r^*,2r^*]$.
We can set up a 2-perturbation as in the proof of the final property of Lemma~\ref{lem:properties},
so that $c$ and $c'$ become the centers of their respective clusters, and $v_{\ell}$ switches clusters.

\begin{equation*}
d''(s,t)=
\begin{cases}
\min( 2 r^*, 2 d(s,t) ) & \text{if } s=c \text{, } t\in C_i\text{ or }s=c'\text{, }t\in
C_j\cup\{v_\ell\} \\
2 d(s,t) & \text{otherwise.}
\end{cases}
\end{equation*}
$d''$ is a valid 2-perturbation of $d$ because for each point $u\in C_i$,
$d(c,u)\leq d(c,c_i)+d(c_i,u)\leq 2r^*$, for each point $u\in C_j$,
$d(c',u)\leq d(c',c_j)+d(c_j,u)\leq 2r^*$, and $d(c',v_\ell)\leq d(c,v_\ell)\leq d(c,c_i)+d(c_i,v_\ell)\leq 2r^*$.
Define $d'$ as the metric completion of $d''$.
Then by Lemma \ref{lem:d'_metric}, $d'$ is a 2-metric perturbation with
optimal cost $2 r^*$.
The set of centers $\{c_{i'}\}_{i'=1}^k\setminus\{c_i,c_j\}\cup\{c,c'\}$ achieves the optimal
cost, since $c$ and $c'$ are distance $2r^*$ from $C_i$ and $C_j$, and all other 
clusters have the same center as in $\mathcal{OPT}$ (achieving radius $2 r^*$).
Then since $d(c,v_\ell)$ and $d(c',v_\ell)$ are both in $(r^*,2r^*]$, by construction
$d'(c,v_\ell)=d'(c',v_\ell)$. This contradicts 2-LPR, since $v_\ell$ can switch clusters to $C_i$.

We conclude that $v_\ell$ is the first common vertex on the shortest paths $c$--$v$ and $c'$--$v$.
Since $c$ and $c'$ are both CCVs, $d(c',v_\ell)<d(c,v_\ell)$. Therefore, $v$ cannot be in $V_c$, so we have reached a contradiction.
This completes the proof.
\end{proof}

\section{Proofs from Section \ref{sec:3eps}} \label{app:3eps}

In this section, we give the details of the proofs from Section~\ref{thm:3epsthm}.

\medskip
\noindent \textbf{Lemma~\ref{lem:closepoints} (restated).}
\emph{
Given a $k$-center clustering instance $(S,d)$
such that all optimal clusters are size $>2\epsilon n$
and there exist two points at distance $r^*$
from different $(3,\epsilon)$-LPR clusters,
then there exists a partition $S_x\cup S_y$ of the non-centers $S\setminus\{c_\ell\}_{\ell=1}^k$ 
such that for all pairs $p\in S_x$, $q\in S_y$, $\{c_\ell\}_{\ell=1}^k\cup\{p,q\}$ $(3,3)$-hits $S$.
}

\begin{proof}
This proof is split into two main cases. The first case is the following: there exists a CCC2 for a
$(3,\epsilon)$-LPR cluster, discounting a $(3,\epsilon)$-LPR cluster.
In fact, in this case, we do not need the assumption that two points from different LPR clusters are close.
If there exists a CCC to a $(3,\epsilon)$-LPR cluster,
denote the CCC by $c_x$ and the cluster by $C_y$. Otherwise, let $c_x$ denote a CCC2
to a $(3,\epsilon)$-LPR cluster $C_y$, discounting a $(3,\epsilon)$-LPR center $c_z$.
Then $c_x$ is at distance $\leq r^*$ to all but
$\epsilon n$ points in $C_y$. Therefore, $d(c_x,c_y)\leq 2r^*$ and so $c_x$ is at
distance $\leq 3r^*$ to all points in $C_y$. Consider the following perturbation $d''$.

\begin{equation*}
d''(s,t)=
\begin{cases}
\min( 3 r^*, 3 d(s,t) ) & \text{if } s=c_x \text{, } t\in C_y  \\
3 d(s,t) & \text{otherwise.}
\end{cases}
\end{equation*}

This is a $3$-perturbation because for all $v\in C_y$, $d(c_x,v)\leq 3 r^*$.
Define $d'$ as the metric completion of $d''$.
Then by Lemma \ref{lem:d'_metric}, $d'$ is a 3-metric perturbation with
optimal cost $3 r^*$.
Given any non-center $v\in S$,
the set of centers $\{c_\ell\}_{\ell=1}^k\setminus\{c_y\}\cup\{v\}$ achieves the optimal
score, since $c_x$ is at distance $3 r^*$ from $C_y$, and all other 
clusters have the same center as in $\mathcal{OPT}$ (achieving radius $3 r^*$). 
Therefore, from Fact~\ref{fact:half}, one of the centers in 
$\{c_\ell\}_{\ell=1}^k\setminus\{c_y\}\cup\{v\}$ must be the center for
the majority of points in $C_y$ under $d'$.
If this center is $c_\ell$, $\ell\neq x,y$, 
then for the majority of points $u\in C_y$, 
$d(c_\ell,u)\leq r^*$ and $d(c_\ell,u)< d(c_z,u)$ for all $z\neq \ell,y$.
Then by definition, $c_\ell$ is a CCC for the $(3,\epsilon)$-LPR cluster, $C_y$.
But then by construction, $\ell$ must equal $x$, so we have a contradiction.
Note that if some $c_\ell$ has for the majority of $u\in C_y$,
$d(c_\ell,u)\leq d(c_z,u)$ (non-strict inequality)
for all $z\neq \ell,y$, then there is another equally
good partition in which $c_\ell$ is not the center for the majority of points
in $C_y$, so we still obtain a contradiction.
Therefore, either $v$ or $c_x$ must be the center for the majority of points
in $C_y$ under $d'$.

If $c_x$ is the center for the majority of points in $C_y$, 
then because $C_y$ is $(3,\epsilon)$-LPR, the corresponding cluster must contain
fewer than $\epsilon n$ points from $C_x$. 
Furthermore, since for all $\ell\neq x$ and $u\in C_x$, $d(u,c_x)<d(u,c_\ell)$,
it follows that $v$ must be the center
for the majority of points in $C_x$.
Therefore, every non-center $v\in S$ is at distance $\leq r^*$ to the majority of points in
either $C_x$ or $C_y$.

Now partition all the non-centers into two sets $S_x$ and $S_y$, such that
$S_x=\{p\mid \text{for the majority of points }q\in C_x,\text{ }d(p,q)\leq r^*\}$ and
$S_y=\{p\mid p\notin S_x\text{ and for the majority of points }q\in C_y,\text{ }d(p,q)\leq r^*\}$.
Given $p,q\in S_x$,
there exists an $s\in C_x$ such that
$d(p,q)\leq d(p,s)+d(s,q)\leq 2r^*$ (since both points are close to more
than half of points in $C_x$).
Similarly, any two points $p,q\in S_y$ are $\leq 2r^*$ apart.

For now, assume that $S_x$ and $S_y$ are both nonempty.
Given a pair $p\in S_x$, $q\in S_y$, we claim that $\{c_\ell\}_{\ell=1}^k\cup\{p,q\}$ $(3,3)$-hits $S$.
Given a point $s\in C_i$ such that $i\neq x,y$, WLOG $s\in S_x$.
Then $c_i$, $p$, and $c_x$ are all distance $3r^*$ to $s$.
Furthermore, $c_i$, $c_x$ and $p$ are all distance $3r^*$ to $c_i$.
Given a point $s\in C_x$, then $c_x$, $c_y$, and $p$ are distance $3r^*$ to $s$ because $d(c_x,c_y)\leq 2r^*$.
Finally, $c_x$, $c_y$ and $p$ are distance $3r^*$ to $c_x$, and similar arguments hold for $s\in C_y$ and $c_y$.
Therefore, $\{c_\ell\}_{\ell=1}^k\cup\{p,q\}$ $(3,3)$-hits $S$.

If $S_x=\emptyset$ or $S_y=\emptyset$, then we can prove a slightly stronger statement:
for each pair of non-centers $\{p,q\}$, $\{c_\ell\}_{\ell=1}^k\cup\{p,q\}$ $(3,3)$-hits $S$.
The proof is the same as the previous paragraph.
Thus, the lemma is true when assuming there exists a CCC2 for a
$(3,\epsilon)$-LPR cluster, discounting a $(3,\epsilon)$-LPR cluster.

Now we turn to the other case.
Assume there does not exist a CCC2 to a LPR cluster, discounting a LPR center. 
In this case, we need to use the assumption that there exist $(3,\epsilon)$-LPR clusters $C_x$ and $C_y$,
and $p\in C_x$, $q\in C_y$ such that $d(p,q)\leq r^*$.
Then by the triangle inequality, $p$ is distance $\leq 3r^*$
to all points in $C_x$ and $C_y$.
Consider the following $d''$.

\begin{equation*}
d''(s,t)=
\begin{cases}
\min( 3 r^*, 3 d(s,t) ) & \text{if } s=p \text{, } t\in C_x\cup C_y  \\
3 d(s,t) & \text{otherwise.}
\end{cases}
\end{equation*}

This is a $3$-perturbation because $d(p,C_x\cup C_y)\leq 3 r^*$.
Define $d'$ as the metric completion of $d''$.
Then by Lemma \ref{lem:d'_metric}, $d'$ is a 3-metric perturbation with
optimal cost $3 r^*$.
Given any non-center $s\in S$,
the set of centers $\{c_\ell\}_{\ell=1}^k\setminus\{c_x,c_y\}\cup\{p,s\}$ achieves the optimal
score, since $p$ is distance $3 r^*$ from $C_x\cup C_y$, and all other 
clusters have the same center as in $\mathcal{OPT}$ (achieving radius $3 r^*$). 

From Fact~\ref{fact:half}, one of the centers in 
$\{c_\ell\}_{\ell=1}^k\setminus\{c_x,c_y\}\cup\{p,s\}$ must be the center for
the majority of points in $C_x$ under $d'$.
If this center is $c_\ell$ for $\ell\neq x,y$, 
then for the majority of points $t\in C_x$, 
$d(c_\ell,t)\leq r^*$ and $d(c_\ell,t)< d(c_z,t)$ for all $z\neq \ell,x,y$.
So by definition, $c_\ell$ is a CCC2 for $C_x$ discounting $c_y$, which contradicts our assumption.
Similar logic applies to the center for the majority of points in $C_y$.
Therefore, $p$ and $s$ must be the centers for $C_x$ and $C_y$.
Since $s$ was an arbitrary non-center, all non-centers are 
distance $\leq r^*$ to all but $\epsilon n$ points in either $C_x$ or $C_y$.

Similar to Case 1, we now partition all the non-centers into two sets $S_x$ and $S_y$, such that
$S_x=\{u\mid \text{for the majority of points }v\in C_x,\text{ }d(u,v)\leq r^*\}$ and
$S_y=\{u\mid u\notin S_x\text{ and for the majority of points }v\in C_y,\text{ }d(u,v)\leq r^*\}$.
As before, each pair of points in $S_x$ are distance $\leq 2r^*$ apart, and similarly for $S_y$.
It is no longer true that $d(c_x,c_y)\leq 2r^*$, however, we can prove that for both $S_x$ and $S_y$,
there exist points from two distinct clusters each. From the previous paragraph, given a non-center
$s\in C_i$ for $i\neq x,y$, we know that $p$ and $s$ are centers for $C_x$ and $C_y$.
With an identical argument, given $t\in C_j$ for $j\neq x,y,i$, we can show that $q$ and $t$ are centers for $C_x$ and $C_y$.
It follows that $S_x$ and $S_y$ both contain points from at least two distinct clusters.

Now we finish the proof by showing that for each pair $p\in S_x$, $q\in S_y$, $\{c_\ell\}_{\ell=1}^k\cup\{p,q\}$ $(3,3)$-hits $S$.
Given a non-center $s\in C_i$, WLOG $s\in S_x$, then there exists $j\neq i$ and $t\in C_j\cap S_x$.
Then $c_i$, $c_j$, and $p$ are $3r^*$ to $s$ and $c_i$, $c_x$, and $p$ are $3r^*$ to $c_i$.
In the case where $i=x$, then $c_i$, $c_j$, and $p$ are $3r^*$ to $c_i$.
This concludes the proof.
\end{proof}

\medskip
\noindent \textbf{Fact~\ref{fact:ranking} (restated).}
\emph{
Given a $k$-center clustering instance $(S,d)$
such that all optimal clusters have size $>2\epsilon n$,
and an $\alpha$-perturbation $d'$ of $d$,
let $\mathcal{C'}$ denote the set of 
$(\alpha,\epsilon)$-LPR clusters.
For each $C_x\in\mathcal{C'}$, there exists a bijection $R_{x,d'}:S\rightarrow[n]$
such that for all sets of $k$ centers $C$ that achieve the optimal cost under $d'$, 
then $c=\text{argmin}_{c'\in C} R_{x,d'}(c')$ if and only if $\text{Vor}_C(c)$ is $\epsilon$-close to $C_x$.
}

\begin{proof}
Assume the lemma is false. Then there exists an $(\alpha,\epsilon)$-LPR cluster $C_i$, 
two distinct points $u,v\in S$, and two
sets of $k$ centers $C$ and $C'$ both containing $u$ and $v$, and both sets achieve the optimal score under an $\alpha$-perturbation
$d'$, but $u$ is the center for $C_i$ in $C$ while $v$ is the center for $C_i$ in $C'$.
Then $\text{Vor}_C(u)$ is $\epsilon$-close to $C_i$; similarly, $\text{Vor}_{C'}(v)$ is $\epsilon$-close to $C_i$.
This implies $u$ is closer to all but $\epsilon n$ points in $C_i$ than $v$, and $v$
 is closer to all but $\epsilon n$ points in $C_i$ than $u$.
Since $|C_i|>2\epsilon n$, this causes a contradiction.
\end{proof}

\medskip
\noindent \textbf{Lemma~\ref{lem:rank-struct} (restated).}
\emph{
Given a $k$-center clustering instance $(S,d)$
such that all optimal clusters are size $>2\epsilon n$, 
and given non-centers $p,q\in S$ such that
$C=\{c_\ell\}_{\ell=1}^k\cup\{p,q\}$ $(3,3)$-hits $S$,
let the set $\mathcal{C'}$ denote the set of 
$(3,\epsilon)$-LPR clusters.
Define the 3-perturbation $d'$ as in Lemma \ref{lem:hit}. 
The following are true.
\begin{enumerate}
\item Given $C_x\in \mathcal{C'}$ and $C_i$ such that $i\neq x$, $R_{x,d'}(c_x)<R_{x,d'}(c_i)$.
\item There do not exist $s\in C$ and $C_x,C_y\in\mathcal{C'}$ such that $x\neq y$, and $R_{x,d',C}(s)+R_{y,d',C}(s)\leq 4$.
\item Given $C_i$ and $C_x,C_y\in\mathcal{C'}$ such that $x\neq y\neq i$,
if $R_{x,d',C}(c_i)\leq 3$, then $R_{y,d',C}(p)\geq 3$ and $R_{y,d',C}(q)\geq 3$.
\end{enumerate}
}

\begin{proof}
\begin{enumerate}
\item By definition of the optimal clusters, for each $s\in C_x$, $d(c_x,s)<d(c_i,s)$, and therefore by construction, $d'(c_x,s)<d'(c_i,s)$.
It follows that $R_{x,d'}(c_x)<R_{x,d'}(c_i)$.
\item Assume there exists $s\in C$ and $C_x,C_y\in \mathcal{C'}$ such that $R_{x,d',C}(s)+R_{y,d',C}(s)\leq 4$.

Case 1: $R_{x,d',C}(s)=1$ and $R_{y,d',C}(s)\leq 3$.
Define $u$ and $v$ such that $R_{y,d',C}(u)=1$ and $R_{y,d',C}(v)=2$. 
(If $u$ or $v$ is equal to $s$, then redefine it to an arbitrary center in $C\setminus\{s,u,v\}$.)
Consider the set of centers $C'=C\setminus\{u,v\}$ which is optimal under $d'$ by Lemma \ref{lem:hit}. 
By Fact~\ref{fact:ranking}, $s$ is the center for the majority of points in both $C_x$ and $C_y$, causing a contradiction.

Case 2: $R_{x,d',C}(s)= 2$ and $R_{y,d',C}(s)= 2$.
Define $u$ and $v$ such that $R_{x,d',C}(u)=1$ and $R_{y,d',C}(v)=1$.
(Again, if $u$ or $v$ is equal to $s$, then redefine it to an arbitrary center in $C\setminus\{s,u,v\}$.) 
Consider the set of centers $C'=C\setminus\{u,v\}$ which is optimal under $d'$ by Lemma \ref{lem:hit}. 
However, by Fact~\ref{fact:ranking}, $s$ is the center for the majority of points in both $C_x$ and $C_y$, causing a contradiction.
\item Assume $R_{x,d',C}(c_i)\leq 3$.
 
Case 1: $R_{x,d',C}(c_i)=2$. Then by Lemma~\ref{lem:rank-struct} part 1, 
$R_{x,d',C}(c_x)=1$.
Consider the set of centers $C'=C\setminus\{c_x,p\}$, which is optimal under $d'$.
By Fact \ref{fact:ranking}, $\text{Vor}_{C'}(c_i)$ must be $\epsilon$-close to $C_x$.
In particular, $\text{Vor}_{C'}(c_i)$ cannot contain more than $\epsilon n$ points from $C_i$. 
But by definition, for all $j\neq i$ and $s\in C_i$, $d(c_i,s)<d(c_j,s)$. It follows that
$\text{Vor}_{C'}(q)$ must contain all but $\epsilon n$ points from $C_i$.
Therefore, for all but $\epsilon n$ points $s\in C_i$, for all $j$, $d'(q,s)<d'(c_j,s)$.
If $R_{y,d',C}(q)\leq 2$, then $C_y$ ranks $c_y$ or $p$ number one.
Then for the set of centers $C'=C\setminus\{c_y,p\}$,
$\text{Vor}_{C'}(q)$ contains more than $\epsilon n$
points from $C_y$ and $C_i$, contradicting the fact that $C_y$ is $(3,\epsilon)$-LPR.
Therefore, $R_{y,d',C}(q)\geq 3$. The argument to show $R_{y,d',C}(p)\geq 3$ is identical.

Case 2: $R_{x,d',C}(c_i)=3$. If there exists $j\neq i,x$ such that $R_{x,d',C}(c_i)=2$, then WLOG we
are back in case 1. By Lemma~\ref{lem:rank-struct} part 1, $R_{x,d',C}(c_x)\leq 2$. 
Then either $p$ or $q$ are ranked top two, WLOG $R_{x,d',C}(p)\leq 2$.
Consider the set $C'=C\setminus\{c_x,p\}$.
Then as in the previous case, $\text{Vor}_{C'}(c_i)$ must be $\epsilon$-close to 
$C_x$, implying for all but $\epsilon n$ points $s\in C_i$, for all $j$, $d'(q,s)<d'(c_j,s)$.
If $R_{y,d',C}(q)\leq 2$, again, $C_y$ ranks $c_y$ or $p$ as number one.
Let $C'=C\setminus\{c_y,p\}$, and then $\text{Vor}_{C'}(q)$ contains more than $\epsilon n$
points from $C_y$ and $C_i$, causing a contradiction.
Furthermore, if $R_{y,d',C}(p)\leq 2$, then we arrive at a contradiction by
Lemma~\ref{lem:rank-struct} part 2.
\end{enumerate}
\end{proof}

\end{document}